\documentclass[a4paper,10pt]{article}

\newif\ifextendedversion
\extendedversiontrue

\usepackage[top=3cm,bottom=3cm,left=2.8cm,right=2.8cm]{geometry}
\usepackage{times}
\usepackage{bbm}
\usepackage{microtype}

\pdfinfo{
/Title (Query-Based Entailment and Inseparability for ALC Ontologies (Full Version))
/Author (Elena Botoeva, Carsten Lutz, Vladislav Ryzhikov, Frank Wolter, Michael Zakharyaschev) }

\usepackage{latexsym}
\usepackage{amsmath,amsthm,amssymb,stmaryrd}
\usepackage{url}
\usepackage{misc}
\usepackage{multirow}
\usepackage{xcolor}
\usepackage{xspace}

\usepackage[numbers]{natbib}

\usepackage{graphicx}
\graphicspath{{../}}

\usepackage{tikz}
\usetikzlibrary{shapes,decorations,backgrounds}
\usetikzlibrary{arrows,calc,fit}

\makeatletter
\def\input@path{{../}}
\makeatother

\marginparwidth=15mm
\newcommand{\nb}[1]{\textcolor{red}{\bf!}%
  \marginpar[\parbox{15mm}{\raggedleft\scriptsize\textcolor{red}{#1}}]%
  {\parbox{15mm}{\raggedright\scriptsize\textcolor{red}{#1}}}}
\renewcommand{\nb}[1]{}

\newtheorem{theorem}{Theorem}
\newtheorem{definition}[theorem]{Definition}
\newtheorem{example}[theorem]{Example}

\newtheorem{lemma}[theorem]{Lemma}


\newcommand{\A}{\mathcal{A}}

\newcommand{\I}{\mathcal{I}}
\newcommand{\J}{\mathcal{J}}

\newcommand{\K}{\mathcal{K}}
\renewcommand{\L}{\mathcal{L}}
\newcommand{\M}{\mathcal{M}}

\newcommand{\R}{\mathcal{R}}
\renewcommand{\S}{\mathcal{S}}
\newcommand{\T}{\mathcal{T}}


\newcommand{\ALC}{\ensuremath{\mathcal{ALC}}\xspace}

\newcommand{\ELUf}{\ensuremath{\mathcal{ELU}_{\mathit{rhs}}}\xspace}

\newcommand{\hALC}{\textsl{Horn-}\ensuremath{\mathcal{ALC}}\xspace}

\newcommand{\EL}{\ensuremath{\mathcal{EL}}\xspace}


\newcommand{\ExpTime}{\textsc{ExpTime}\xspace}
\newcommand{\NExpTime}{\textsc{NExpTime}\xspace}

\renewcommand{\succ}{\mathsf{succ}}
\newcommand{\sig}{\mathsf{sig}}
\newcommand{\ind}{\mathsf{ind}}

\newcommand{\type}{\boldsymbol{t}}

\newcommand{\q}{\boldsymbol q}

\newcommand{\TWAPA}{2APTA\xspace}
\newcommand{\TWAPAs}{2APTAs\xspace}
\newcommand{\TWATA}{2ATA\xspace}
\newcommand{\TWATAs}{2ATAs\xspace}
\newcommand{\TWABA}{2ABTA\xspace}
\newcommand{\TWABAs}{2ABTAs\xspace}

\newcommand{\Start}{\textit{Start}}
\newcommand{\Row}{\textit{Row}}
\newcommand{\End}{\textit{End}}
\newcommand{\first}{\textit{first}}
\newcommand{\halt}{\textit{halt}}

\newcommand{\Mod}{\boldsymbol{M}}

\newcommand{\RCQ}{rCQ\xspace}
\newcommand{\RUCQ}{rUCQ\xspace}

\newcommand{\avec}[1]{\boldsymbol{#1}}

\newcommand{\types}{\mathsf{type}}
\newcommand{\compl}{\mathsf{compl}}

\newcommand{\tup}[1]{\langle #1\rangle}

\newcommand{\ISA}{\sqsubseteq}

\newcommand{\AND}{\sqcap}
\newcommand{\OR}{\sqcup}

\newcommand{\SOME}[2]{\ensuremath{\exists #1 . #2}}

\newcommand{\NOT}{\neg}

\newcommand{\dom}[1][\I]{\Delta^{#1}}
\newcommand{\Int}[2][\I]{#2^{#1}}

\def\spoiler[#1,#2]{(#1 {\,\mapsto\,} #2)}
\def\dupl[#1,#2,#3]{({\text{\footnotesize $#1$}}, \, #2 {\,\leadsto\,} #3)}
\def\spoilernp[#1,#2]{#1\,{\mapsto\,} #2}
\def\duplnp[#1,#2,#3]{{\text{\footnotesize $#1$}}, \, #2 {\,\leadsto\,} #3}
\def\spoilers[#1,#2,#3]{({\text{\footnotesize $#1$}}, \, #2 {\,\mapsto\,} #3)}
\def\spoilersnp[#1,#2,#3]{{\text{\footnotesize $#1$}}, \, #2 {\,\mapsto\,} #3}
\def\spoilerb[#1,#2, #3]{(#1 {\,\mapsto\,} #2, \, {\text{\footnotesize $#3$}})}
\def\spoilerbnp[#1,#2, #3]{#1 {\,\mapsto\,} #2, \,  {\text{\footnotesize $#3$}}}

\tikzset{ %
  point/.style={thick,circle,draw=black,minimum size=1.3mm,inner sep=0pt},%
  constant/.style={fill=black},%
  rowind/.style={fill=gray},%
  cwitness/.style={circle,draw=black,minimum size=3.5mm,inner sep=0pt,
    label=center:{$\land$}},%
  dwitness/.style={minimum width=1.1cm, minimum height=0.3cm},%
  homomorphism/.style={line width=0.1cm,-latex},%
  role/.style={-latex, semithick},%
  gen/.style={role, decorate, decoration={snake, amplitude=0.3mm,segment
      length=2mm, post length=1mm}},%
  subtree/.style={isosceles triangle, draw, very thick, subtreecolor,
    anchor=north, outer sep=0.1, isosceles triangle stretches, minimum
    width=1cm, minimum height=0.6cm, shape border rotate=90}, %
  backward/.style={rectangle, draw=black, fill=gray!50, minimum size=1mm, inner
    ysep=4pt, inner xsep=7pt, outer sep=0.05cm, rounded corners=2mm},%
  sbound/.style={rectangle, draw=black, fill=gray!10, minimum size=1mm, inner
    ysep=4pt, inner xsep=7pt, outer sep=0.05cm},%
  infinite/.style={rectangle, draw=black, fill=gray!25, minimum size=1mm, inner
    ysep=4pt, inner xsep=7pt, outer sep=0.05cm, minimum width=1.5cm},%
  trans/.style={-stealth', semithick},%
}

\colorlet{subtreecolor}{black!70}

\def \tikzdots[#1]{
  \begin{scope}[shift={#1}]
    \node at (0,0.15) {$\cdot$}; \node {$\cdot$}; \node at (0,-0.15) {$\cdot$};
  \end{scope}
}


\begin{document}
\title{Query-Based Entailment and Inseparability for \ALC Ontologies (Full Version)}
\author{Elena Botoeva\\
Free University of Bozen-Bolzano\\
{\normalsize\url{botoeva@inf.unibz.it}}\\
\and 
Carsten Lutz\\
University of Bremen\\
{\normalsize\url{clu@uni-bremen.de}}\\
\and 
Vladislav Ryzhikov\\
Free University of Bozen-Bolzano\\
{\normalsize\url{ryzhikov@inf.unibz.it}}\\
\and 
Frank Wolter\\
University of Liverpool\\
{\normalsize\url{wolter@liverpool.ac.uk}}\\
\and 
Michael Zakharyaschev\\
Birkbeck, University of London\\
{\normalsize\url{michael@dcs.bbk.ac.uk}}
}

\date{}

\maketitle

\begin{abstract}
  We investigate the problem whether two $\mathcal{ALC}$ knowledge
  bases are indistinguishable by queries over a given vocabulary. We
  give model-theoretic criteria in terms of (partial) homomorphisms and products
	and prove that this problem is
  undecidable for conjunctive queries (CQs) but 2\ExpTime-complete 
	for UCQs (unions of CQs). The same results hold if CQs are replaced by rooted CQs. We also consider the
	problem whether two
  $\mathcal{ALC}$ TBoxes give the same answers to any query in a given
  vocabulary over all ABoxes, and show that for CQs this problem is
  undecidable, too, but becomes decidable and 2\ExpTime-complete in
  \hALC, and even \ExpTime-complete in \hALC when restricted to
  (unions of) rooted CQs.
\end{abstract}
%

\ifextendedversion
\begin{figure*}%
\centering\small
\parbox{0.48\linewidth}{%
\begin{tabular}{|@{~}r@{~}|@{~}c@{~}|@{~}c@{~}|c@{~}|@{~}c@{~}|}\hline
Queries
& \ALC
& \begin{tabular}{@{}c@{}}\hALC\\to \ALC\end{tabular}
& \begin{tabular}{@{}c@{}}\ALC to\\\hALC\end{tabular}
& \hALC
\\\hline
CQ   & \multicolumn{2}{c|}{undecidable} & \multirow{2}{*}{\!$\le \! 2\ExpTime$} & \multirow{2}{*}{=$\ExpTime^{(\star)}$} \\\cline{1-3}
UCQ   & \multicolumn{2}{c|}{=2\ExpTime} &  &  \\\hline
\RCQ & \multicolumn{2}{c|}{undecidable} & \multirow{2}{*}{\!$\le \! 2\ExpTime$} & \multirow{2}{*}{=$\ExpTime^{(\star)}$} \\\cline{1-3}
\RUCQ  & \multicolumn{2}{c|}{=$2\ExpTime$} & & \\\hline
\end{tabular}
\caption{KB query entailment.}%
\label{table:kb}}%
\quad~~
\begin{minipage}{0.48\linewidth}%
\begin{tabular}{|@{~}r@{~}|@{~}c@{~}|@{~}c@{~}|c@{~}|@{~}c@{~}|}\hline
Queries
& \ALC
& \begin{tabular}{@{}c@{}}\hALC\\to \ALC\end{tabular}
& \begin{tabular}{@{}c@{}}\ALC to\\\hALC\end{tabular}
& \hALC
\\\hline
CQ   & \multicolumn{2}{c|}{undecidable} & \multirow{2}{*}{?} & \multirow{2}{*}{=$2\ExpTime$} \\\cline{1-3}
UCQ   & \multicolumn{2}{c|}{?} &  &  \\\hline
\RCQ & \multicolumn{2}{c|}{undecidable} & \multirow{2}{*}{=\ExpTime} & \multirow{2}{*}{=\ExpTime} \\\cline{1-3}
\RUCQ  & \multicolumn{2}{c|}{?} & & \\ \hline
\end{tabular}
\caption{TBox query entailment.}%
\label{table:tbox}%
\end{minipage}%
\end{figure*}%
\else
\begin{figure*}%
\centering\small
\parbox{0.48\linewidth}{%
\begin{tabular}{|@{~~}r@{~~}|@{~~}c@{~~}|@{~~}c@{~~}|c@{~~}|@{~~}c@{~~}|}\hline
Queries
& ~~\ALC~~
& \begin{tabular}{@{}c@{}}\hALC\\to \ALC\end{tabular}
& \begin{tabular}{@{}c@{}}\ALC to\\\hALC\end{tabular}
& \hALC
\\\hline
CQ   & \multicolumn{2}{c|}{undecidable} & \multirow{2}{*}{?} & \multirow{2}{*}{=$\ExpTime^{(\star)}$} \\\cline{1-3}
UCQ   & \multicolumn{2}{c|}{?} &  &  \\\hline
\RCQ & \multicolumn{2}{c|}{undecidable} & \multirow{2}{*}{$\le \! 2\ExpTime$} & \multirow{2}{*}{=$\ExpTime^{(\star)}$} \\\cline{1-3}
\RUCQ  & \multicolumn{2}{c|}{$\le\! 2\ExpTime$} & & \\\hline
\end{tabular}
\caption{KB query entailment.}%
\label{table:kb}}%
\qquad
\begin{minipage}{0.48\linewidth}%
\begin{tabular}{|@{~~}r@{~~}|c|c|c|r|}\hline
Queries
& ~~\ALC~~
& \begin{tabular}{@{}c@{}}\hALC\\to \ALC\end{tabular}
& \begin{tabular}{@{}c@{}}\ALC to\\\hALC\end{tabular}
& \hALC~
\\\hline
CQ   & \multicolumn{2}{c|}{undecidable} & \multirow{2}{*}{?} & \multirow{2}{*}{=2\textsc{ExpTime}} \\\cline{1-3}
UCQ   & \multicolumn{2}{c|}{?} & & \\\hline
\RCQ & \multicolumn{2}{c|}{undecidable} & \multirow{2}{*}{=\ExpTime} & \multirow{2}{*}{=\ExpTime} \\\cline{1-3}
\RUCQ  & \multicolumn{2}{c|}{?} & & \\ \hline
\end{tabular}
\caption{TBox query entailment.}%
\label{table:tbox}%
\end{minipage}%
\end{figure*}%
\fi


\section{Introduction}

In recent years, data access using description logic (DL) TBoxes has
become one of the most important applications of
DLs~\cite{PLCD*08,DBLP:conf/rweb/BienvenuO15}, where the underlying
idea is to use a TBox to specify semantics and background knowledge
for the data (stored in an ABox), and thereby derive more complete
query answers.  A major research effort has led to the development of
efficient algorithms and tools for a number of DLs ranging from
DL-Lite \cite{CDLLR07,DBLP:conf/semweb/Rodriguez-MuroKZ13} via more
expressive Horn DLs such as
Horn-\ALC~\cite{DBLP:conf/aaai/EiterOSTX12,DBLP:journals/ws/TrivelaSCS15}
to DLs with all Boolean constructors such as
\ALC~\cite{DBLP:journals/jair/KolliaG13,DBLP:journals/jair/ZhouGNKH15}.

While query answering with DLs is now well-developed, this is much less the
case for reasoning services that support ontology engineering and target query
answering as an application.  In ontology versioning, for example, one would
like to know whether two versions of an ontology give the same answers to all
queries formulated over a given vocabulary of interest, which means that the
newer version can safely replace the older one 
\cite{DBLP:journals/jair/KonevL0W12}. Similarly, if one wants to know whether an ontology can be safely
  replaced by a smaller subset (module), it is the answers to all queries
  that should be preserved \cite{KWZ10}. In this context, the fundamental
relationship between ontologies is thus not whether they are logically
equivalent (have the same models), but whether they give the same answers to
any relevant query. The resulting entailment problem can be formalized in two
ways, with different applications. First, given a class $\mathcal{Q}$ of
queries, knowledge bases (KBs) $\K_{1}$ and
$\K_{2}$, and a signature $\Sigma$ of
relevant concept and role names, we say that $\K_{1}$
\emph{$\Sigma$-$\mathcal{Q}$-entails} $\K_{2}$ if the answers to any
$\Sigma$-query in $\mathcal{Q}$ over $\K_{2}$ are contained in the answers to
the same query over $\K_{1}$. Further, $\K_{1}$ and $\K_{2}$ are
\emph{$\Sigma$-$\mathcal{Q}$-inseparable} if they $\Sigma$-$\mathcal{Q}$-entail
each other.  Note that a KB includes an ABox, and thus this notion of
entailment is appropriate if the data is known and does not change frequently.
Applications include data-oriented KB versioning and KB module extraction, KB
forgetting \cite{DBLP:journals/ci/WangWTPA14}, and knowledge
exchange~\cite{ArenasBCR13}.

If the data is not known or changes frequently, it is not KBs that
should be compared, but TBoxes. Given a pair
$\Theta=(\Sigma_{1},\Sigma_{2})$ that specifies a relevant signature
$\Sigma_{1}$ for ABoxes and $\Sigma_{2}$ for queries, we say that a
TBox $\T_{1}$ \emph{$\Theta$-$\mathcal{Q}$-entails} a TBox $\T_{2}$
if, for every $\Sigma_{1}$-ABox $\A$, the KB $(\T_{1},\A)$
$\Sigma_{2}$-$\mathcal{Q}$-entails $(\T_{2},\A)$.  $\T_{1}$ and
$\T_{2}$ are \emph{$\Theta$-$\mathcal{Q}$-inseparable} if they
$\Theta$-$\mathcal{Q}$-entail each other.  Applications include
data-oriented TBox versioning, TBox modularization and TBox
forgetting~\cite{KWZ10}.

In this paper, we concentrate on the most important choices for \Qmc,
conjunctive queries (CQs) and unions thereof (UCQs); we also consider
the practically relevant classes of rooted CQs (\RCQ{s}) and UCQs
(\RUCQ{s}), in which every variable is connected to an answer variable.
So far, CQ-entailment has been studied for Horn DL
KBs~\cite{BotoevaKRWZ14}, $\mathcal{EL}$
TBoxes~\cite{DBLP:journals/jsc/LutzW10,DBLP:journals/jair/KonevL0W12},
DL-Lite TBoxes~\cite{KontchakovPSSSWZ09}, and also for OBDA
specifications, that is, DL-Lite TBoxes with
mappings~\cite{DBLP:conf/dlog/BienvenuR15}.  No results are available
for non-Horn DLs (neither in the KB nor in the TBox case) and for
expressive Horn DLs in the TBox case. In particular, query entailment
in non-Horn DLs has had the reputation of being a technically challenging
problem.

This paper makes a first breakthrough into understanding query
entailment and inseparability in these cases, with the main results
summarized in Figures~\ref{table:kb} and~\ref{table:tbox} (those marked with $(\star)$ are from \cite{BotoevaKRWZ14}).
Three of them came as a real surprise to us. First, it turned out that CQ- and rCQ-entailment between \ALC KBs is undecidable, even when the first KB is formulated in $\hALC$ (in fact, $\EL$) and without any signature restriction. This should be contrasted with the
decidability of subsumption-based entailment between \ALC TBoxes \cite{GhiLuWo-06}
and of CQ-entailment between \hALC KBs \cite{BotoevaKRWZ14}.
%
%
The second surprising result is that entailment between \ALC KBs
becomes decidable when CQs are replaced with UCQs or \RUCQ{s}. In
fact, we show that entailment is 2\ExpTime-complete for both UCQs and
\RUCQ{s}.  For \ALC TBoxes, CQ- and rCQ-entailment are undecidable as
well. We obtain decidability for \hALC TBoxes (where CQ- und
UCQ-entailments coincide) using the fact that non-entailment is always
witnessed by tree-shaped
ABoxes. 
As another surprise, CQ-entailment of \hALC TBoxes is
2\ExpTime-complete while rCQ-entailment is only
\ExpTime-complete. This should be contrasted with the \EL case, where
both problems are \ExpTime-complete
\cite{DBLP:journals/jsc/LutzW10}. All upper bounds and most lower
bounds hold also for inseparability in place of entailment.
A model-theoretic foundation for these results is a characterization  of query entailment between KBs and TBoxes in terms of (partial) homomorphisms, which, in particular, enables the use
of tree automata techniques to establish the upper bounds in
Figs.~\ref{table:kb} and~\ref{table:tbox}.
%
%
\ifextendedversion
\else
Omitted proofs are available in the full version~\cite{BLRWZ16-tr}.
\fi

\section{Preliminaries}

Fix lists of \emph{individual names} $a_i$, \emph{concept names} $A_i$, and \emph{role names} $R_i$, for $i < \omega$. $\ALC$-\emph{concepts}, $C$, are defined by the grammar
\begin{equation*}
  C \ \ ::=\ \  A_i \ \ \mid \ \ \top \ \ \mid \ \
  \neg C \ \ \mid \ \ C_1 \sqcap C_2 \  \ \mid \ \
  \exists R_i.C.
\end{equation*}
We use $\bot$, $C_1 \sqcup C_2$ and $\forall R.C$ as abbreviations for $\neg \top$, $\neg(\neg C_1\sqcap \neg C_2)$ and $\neg \exists R.\neg C$, respectively. A \emph{concept inclusion} (CI) takes the form $C \sqsubseteq D$, where $C$ and $D$ are concepts.  An $\ALC$ \emph{TBox} is a finite set of CIs.
In a \emph{\hALC TBox}, no concept of the form $\neg C$ occurs negatively and
no $\exists R.\neg C$ occurs positively~\cite{HuMS05,Kaza09}. An \emph{$\EL$
  TBox} does not contain $\neg$ at all.
An \emph{ABox}, $\A$, is a finite set of \emph{assertions} of the form
$A_k(a_i)$ or $R_k(a_i,a_j)$; $\ind(\A)$ is the set of individual names in
$\A$.  Taken together, $\T$ and $\A$ form a \emph{knowledge base} (KB)
$\K=(\T,\A)$; we set $\ind(\K)=\ind(\A)$.

The semantics 
is defined as usual based on interpretations $\I =
(\Delta^\mathcal{I}, \cdot^\mathcal{I})$ that comply with the
\emph{standard name assumption} in the sense that $a_i^\mathcal{I}
=a_i$~\cite{BCMNP03}. We write $\I \models \alpha$ if an
inclusion or assertion $\alpha$ is true in~$\I$. If
$\I \models \alpha$, for all $\alpha \in \T \cup \A$, then we call
$\I$ a \emph{model} of $\K$ and write $\I\models \K$. $\K$ is
\emph{consistent} if it has a model; we then also say that \Amc
\emph{is consistent with} \Tmc. $\K \models \alpha$ means that
$\I \models \alpha$ for all $\I\models \K$.

A \emph{conjunctive query} (CQ) $\q(\avec{x})$ is a formula $\exists \avec{y}\, \varphi(\avec{x}, \avec{y})$, where $\varphi$ is a conjunction of atoms of the form $A_k(z_1)$ or $R_k(z_1,z_2)$ with $z_i$ in $\avec{x}, \avec{y}$; the variables in $\avec{x}$ are the \emph{answer variables} of $\q(\avec{x})$.  We call $\q$ \emph{rooted} (\RCQ) if every $y \in \avec{y}$ is connected to some $x \in \avec{x}$ by a path in the graph whose nodes are the variables in $\q$ and edges are the pairs $\{u,v\}$ with $R(u,v)\in\q$, for some $R$.
A \emph{union of CQs} (UCQ) is a disjunction $\q(\avec{x}) = \bigvee_i \q_i(\avec{x})$ of CQs $\q_i(\avec{x})$ with the same \emph{answer variables} $\avec{x}$; it is \emph{rooted} (\RUCQ) if all $\q_i$ are rooted.

A tuple $\avec{a}$ in $\ind (\K)$ is a \emph{certain answer to a UCQ $\q(\avec{x})$ over} $\K = (\T,\A)$ if $\I \models \q(\avec{a})$ for all $\I \models \K$; in this case we write $\K \models \q(\avec{a})$. If $\avec{x} = \emptyset$, the answer to $\q$ is `yes' if $\K \models \q$ and `no' otherwise. The problem of checking whether a tuple is a certain answer to a given (U)CQ over a given \ALC KB is known to be \ExpTime-complete for combined complexity~\cite{Lutz-IJCAR08}. The \ExpTime{} lower bound actually holds for \hALC~\cite{DBLP:journals/tocl/KrotzschRH13}.

A set $\Mod$ of models of a KB $\K$ is called \emph{complete for} $\K$ if, for
every UCQ $\q(\avec{x})$, we have $\K \models \q(\avec{a})$ iff $\I \models
\q(\avec{a})$ for all $\I\in \Mod$.  We call an interpretation $\I$ a \emph{ditree interpretation}
if the directed graph $G_\I$ with nodes $d \in \Delta^{\I}$ and edges $(d,e)\in
R^{\I}$, for some $R$, is a tree and $R^{\I}\cap S^{\I}=\emptyset$, for any
distinct roles $R$ and $S$.  $\I$ has \emph{outdegree} $n$ if $G_{\I}$ has
outdegree $n$.  A model $\I$ of a KB $\K=(\T,\A)$ is \emph{forest-shaped} if
$\I$ is the disjoint union of ditree interpretations $\I_{a}$ with root $a$, for
$a\in \ind(\A)$, extended with all $R(a,b)\in \A$. The \emph{outdegree} of $\I$
is the maximum outdegree of the $\I_{a}$. It is well known that the class
$\Mod^{\it fo}_\K$ of all forest-shaped models of an \ALC KB $\K$ of outdegree
bounded by $|\T|$ is complete for $\K$~\cite{Lutz-IJCAR08}. If $\K$ is a \hALC
KB, then a single member $\I_{\K}$ of $\Mod^{\it fo}_\K$ is complete for $\K$.
$\I_{\K}$ is constructed using the standard chase procedure and called the
\emph{canonical model of $\K$}.

A \emph{signature}, $\Sigma$, is a set of concept and role names. By a
$\Sigma$-concept, $\Sigma$-CQ, etc.\ we understand any concept, CQ,
etc.\ constructed using the names from $\Sigma$. We say that $\Sigma$
is
\emph{full} if it contains all concept and role names.
%
A model $\I$ of a KB $\K$ is \emph{$\Sigma$-connected} if, for any  $u \in \Delta^\I \setminus \ind(\K)$, there is a path $R_1^\I(a,u_1),\dots, R_n^\I(u_n,u)$ with $a \in \ind(\K)$ and the $R_i$ in $\Sigma$.
\begin{definition}\em
Let $\K_{1}$ and $\K_{2}$ be consistent KBs, $\Sigma$ a signature, and $\mathcal{Q}$ one of CQ, \RCQ, UCQ or \RUCQ. We say that $\K_1$ \emph{$\Sigma$-$\mathcal{Q}$-entails} $\K_2$ if
$\K_2\models \q(\avec{a})$ implies $\avec{a} \subseteq \ind(\K_1)$ and \mbox{$\K_1\models \q(\avec{a})$}, for all $\Sigma$-$\mathcal{Q}$ $\q(\avec{x})$ and all tuples $\avec{a}$ in $\ind(\K_2)$.
$\K_1$ and $\K_2$ are \emph{$\Sigma$-$\mathcal{Q}$ inseparable} if
they $\Sigma$-$\mathcal{Q}$ entail each other.
%
%
\end{definition}
As larger classes of queries separate more KBs, $\Sigma$-UCQ inseparability implies all other inseparabilities. The following example shows that, in general, no other implications between the different notions of inseparability hold for $\ALC$.
\begin{example}\label{ex:ex1}\em
  Suppose $\T_{0}=\emptyset$, $\T_{0}'= \{ E \sqsubseteq A \sqcup B\}$ and
  $\Sigma_{0}= \{A,B,E\}$.  Let $\A_{0}=\{ E(a)\}$, $\K_{0}=(\T_{0},\A_{0})$,
  and $\K_{0}'=(\T_{0}',\A_{0})$. Then $\K_{0}$ and $\K_{0}'$ are
  $\Sigma_{0}$-CQ inseparable but not $\Sigma_{0}$-\RUCQ inseparable. In fact,
  $\K_{0}'\models \q(a)$ and $\K_{0}\not\models \q(a)$ for $\q(x)= A(x)\vee B(x)$.

  Now, let $\Sigma_{1}= \{ E,B\}$, $\T_{1}=\emptyset$, and $\T_{1}'= \{ E
  \sqsubseteq \exists R.B\}$.  Let $\A_{1}=\{ E(a)\}$, $\K_{1}=(\T_{1},\A_{1})$,
  and $\K_{1}'=(\T_{1}',\A_{1})$.  Then $\K_{1}$ and $\K_{1}'$ are
  $\Sigma_{1}$-rUCQ inseparable but not $\Sigma_{1}$-CQ inseparable. In fact,
  $\K_{1}'\models \exists x B(x)$ but $\K_{1}\not\models \exists x
  B(x)$. 
\end{example}
\begin{definition}\em
\label{def:tboxinsep}
Let $\T_{1}$ and $\T_{2}$ be TBoxes, $\mathcal{Q}$ one of CQ, \RCQ, UCQ or \RUCQ, and let $\Theta = (\Sigma_1,\Sigma_2)$ be a pair of signatures. We say that $\T_{1}$ \emph{$\Theta$-$\mathcal{Q}$ entails} $\T_{2}$ if, for every $\Sigma_1$-ABox $\A$ that is consistent with both $\T_1$ and $\T_2$, the KB $(\T_1,\A)$ $\Sigma_2$-$\mathcal{Q}$ entails the KB $(\T_2,\A)$. $\T_1$ and $\T_2$ are \emph{$\Theta$-$\mathcal{Q}$ inseparable} if they $\Theta$-$\mathcal{Q}$ entail each other. If $\Sigma_{1}$ is the set of all concept and role names, we say `\emph{full ABox signature $\Sigma_{2}$-$\mathcal{Q}$ entails}' or `\emph{full ABox signature $\Sigma_{2}$-$\mathcal{Q}$ inseparable}'\!.
\end{definition}
We only consider ABoxes that are consistent with both TBoxes because   the problem whether a $\Sigma_{1}$-ABox
consistent with $\T_{2}$ is also consistent with $\T_{1}$ is well understood: it is mutually polynomially reducible with the containment problem for
ontology-mediated queries with CQs of the form $\exists x A(x)$, which is \NExpTime-complete for \ALC and \ExpTime-complete for \hALC~\cite{DBLP:journals/tods/BienvenuCLW14,HLSW-IJCAI16}.
%
\begin{example}\label{andrea}\em
Consider the TBoxes $\T_{0}$ and $\T_{0}'$ from Example~\ref{ex:ex1} and let $\Theta=(\Sigma,\Sigma)$ for $\Sigma=\{R,A,B,E\}$.
Then $\T_{0}$ does not $\Theta$-rCQ entail $\T_{0}'$ as $(\T_{0}',\A) \models \q(a)$ and $(\T_{0},\A) \not\models \q(a)$ for
\ifextendedversion
\begin{center}
\else
\\[2mm]
\fi
\begin{tikzpicture}\scriptsize
  \foreach \name/\x/\y/\conc/\wh in {%
    a/-1.6/0.5//right,%
    b/-0.8/-0.1/A/below, %
    c/0/0.5/E/below, %
    d/1.2/0.5/B/below%
  }{ \node[inner sep=1, outer sep= 0, label=\wh:{$\conc$}] (\name) at (\x,\y)
    {$\name$}; }

  \foreach \from/\to/\wh in {%
    a/b/below, a/c/below, b/c/below, c/d/below%
  }{ \draw[role] (\from) -- node[\wh,sloped] {$R$} (\to); }

  \node[anchor=east] at (-2,0.2) {\normalsize $\A$:};

  \begin{scope}[xshift=4.5cm, yshift=0.2cm]
    \foreach \name/\x/\conc/\wh in {%
      x/-1.2//right,%
      y_1/0/A/below, %
      y_2/1.2/B/below%
    }{ \node[inner sep=1, outer sep=0, label=\wh:{$\conc$}] (\name) at
      (\x,0) {$\name$ }; }

    \foreach \from/\to/\wh in {%
      x/y_1/right, y_1/y_2/right%
    }{ \draw[role] (\from) -- node[below] {$R$} (\to); }

    \node[anchor=east] at (-1.6,0) {\normalsize $\q(x)$:};
  \end{scope}
\end{tikzpicture}
\ifextendedversion
\end{center}
\fi
\end{example}
We observe that $\Theta$-CQ-entailment in the restricted case with $\Theta=(\Sigma,\Sigma)$ has been investigated for $\mathcal{EL}$ TBoxes by Lutz and Wolter~[\citeyear{DBLP:journals/jsc/LutzW10}] and Konev et al.~[\citeyear{DBLP:journals/jair/KonevL0W12}].

As in the KB case, $\Sigma$-UCQ inseparability of $\ALC$ TBoxes implies all other types of inseparability, and Example~\ref{ex:ex1} can be
used to show that no other implications hold in general. The situation changes for \hALC KBs and TBoxes. The following can be proved by observing that a
\hALC KB entails a UCQ iff it entails one of its disjuncts:

\begin{theorem}\label{UCQtoCQ}
Let $\K_{1}$ be an $\ALC$ KB and $\K_{2}$ a \hALC KB. Then
$\K_{1}$ $\Sigma$-UCQ entails $\K_{2}$ iff $\K_{1}$ $\Sigma$-CQ entails $\K_{2}$. The same holds for \RUCQ and \RCQ, and for TBox entailment.
\end{theorem}

%

\section{Model-Theoretic Criteria for \ALC KBs}

We now give model-theoretic criteria for 
$\Sigma$-entailment between KBs. 
The \emph{product} $\prod \avec{\I}$ of a set $\avec{\I}$ of interpretations is defined as usual in model theory~\cite[page~405]{Chang&Keisler90}.
%
Note that, for any CQ $\avec{q}(\avec{x})$ and any tuple $\avec{a}$ of individual names, $\prod \avec{\I} \models \avec{q}(\avec{a})$ iff $\I \models \avec{q}(\avec{a})$ for each $\I \in \avec{\I}$.


Suppose $\I_i$ is an interpretation for a KB $\K_i$, $i=1,2$.  A
function $h \colon \Delta^{\I_2} \to \Delta^{\I_1}$ is called a
\emph{$\Sigma$-homomorphism} if $u \in A^{\Imc_2}$ implies $h(u) \in
A^{\Imc_1}$ and $(u,v) \in R^{\Imc_2}$ implies $(h(u),h(v)) \in
R^{\Imc_1}$ for all $u,v \in \Delta^{\smash{\I_2}}$, $\Sigma$-concept
names $A$, and $\Sigma$-role names $R$, and $h(a) = a$ for all $a \in
\ind(\K_2)$. 
%
It is known from database theory that homomorphisms characterize CQ-containment~\cite{DBLP:conf/stoc/ChandraM77}. For KB $\Sigma$-query entailment, finite partial homomorphisms are required.
We say that $\I_2$ is \emph{$n\Sigma$-homo\-mo\-rphically embeddable into $\I_1$} if, for any subinterpretation $\I_2'$ of $\I_2$ with $|\Delta^{\smash{\I'_2}}| \le n$, there is a $\Sigma$-homomorphism from $\I_2'$ to $\I_1$.
If, additionally, we require $\I_2'$ to be $\Sigma$-connected then $\I_2$ is said to be \emph{con-$n\Sigma$-homo\-mo\-rphically embeddable into $\I_1$}.

\begin{theorem}\label{crit:KB}
Let $\K_1$ and $\K_2$ be \ALC KBs, $\Sigma$ a signature, and let $\Mod_{\!i}$ be complete for $\K_i$, $i=1,2$.
\begin{description}\itemsep=0pt
\item[\rm (1)] $\K_{1}$ $\Sigma$-UCQ entails $\K_2$ iff, for any  $n>0$ and $\I_1\in \Mod_{\!1}$, there exists $\I_2 \in \Mod_{\!2}$ that is $n\Sigma$-homomorphically embeddable into $\I_1$.
	
\item[\rm (2)] $\K_{1}$ $\Sigma$-rUCQ entails $\K_2$ iff, for any  $n>0$ and $\I_1\in \Mod_{\!1}$, there exists $\I_2 \in \Mod_{\!2}$ that is con-$n\Sigma$-homomorphically embeddable into $\I_1$.
	
\item[\rm (3)] $\K_{1}$ $\Sigma$-CQ entails $\K_2$ iff $\prod \Mod_{\!2}$ is $n\Sigma$-homomorphically embeddable into $\prod \Mod_{\!1}$ for any $n>0$.
	
\item[\rm (4)] $\K_{1}$ $\Sigma$-rCQ entails $\K_2$ iff $\prod \Mod_{\!2}$ is con-$n\Sigma$-homo\-mor\-phically embeddable into $\prod \Mod_{\!1}$ for any $n>0$.
\end{description}
\end{theorem}
%
%
%
\begin{proof}
We only show (1). Suppose $\K_{2}\models \q$ but $\K_{1}\not\models \q$. Let $n$ be the number of variables in $\q$. Take $\I_{1}\in \Mod_{\!1}$ such that $\I_{1}\not\models \q$. Then no $\I_{2}\in \Mod_{\!2}$ is $n\Sigma$-homomorphically embeddable into $\I_{1}$.
Conversely, suppose $\I_{1}\in \Mod_{\!1}$ is such that, for some $n$,  no $\I_{2}\in \Mod_{\!2}$ is $n\Sigma$-homomorphically embeddable into $\I_{1}$. We can regard any subinterpretation of any $\I_{2}\in \Mod_{\!2}$ with domain of size $\le n$ as a CQ (with answer variable corresponding to ABox individuals). The disjunction of all such CQs is entailed by $\K_{2}$ but not by $\K_{1}$.
%
\end{proof}

Note that $n\Sigma$-homomorphic embeddability cannot be replaced by $\Sigma$-homomorphic embeddability. For example, in~(1), let $\K_{1}=\K_{2}=(\{\top \sqsubseteq \exists R.\top\},\{A(a)\})$, $\Mod_{\!1}=\{\I_{1}\}$, where $\I_{1}$ is the infinite $R$-chain starting with $a$, and let $\Mod_{\!2}$ contain arbitrary finite $R$-chains starting with $a$ followed by an arbitrary long $R$-cycle. $\Mod_{\!1}$ and $\Mod_{\!2}$ are
both complete for $\K$, but there is no $\Sigma$-homomorphism from any $\I_{2}\in \Mod_{\!2}$ to $\I_{1}$. In Section~\ref{rUCQ}, we show that in some cases
we \emph{can} find characterizations with full $\Sigma$-homomorphisms and use them to present decision procedures for entailment.

If both $\Mod_{\!i}$ are finite and contain only finite interpretations, then Theorem~\ref{crit:KB} provides a decision procedure for KB entailment. This applies, for example, to KBs with acyclic classical TBoxes~\cite{BCMNP03}, and to KBs for which the chase terminates \cite{DBLP:journals/jair/GrauHKKMMW13}.

%


\section{Undecidability for \ALC KBs and TBoxes}

We show that CQ and rCQ-entailment and inseparability for \ALC KBs are undecidable---even if the signature is full and $\K_1$ is a
\hALC (in fact, $\mathcal{EL}$) KB.
We establish the same results for TBoxes except that in the rCQ case, we
leave it open whether the full ABox signature is sufficient for
undecidability. 
%
%
\begin{theorem}\label{thm:undecidability}
\textup{(}i\textup{)} The problem whether a \hALC KB $\Sigma$-$\mathcal{Q}$ entails an $\ALC$ KB is undecidable for $\mathcal{Q} \in \{\text{CQ},\text{\RCQ}\}$.

\textup{(}ii\textup{)} $\Sigma$-$\mathcal{Q}$ inseparability between \hALC and \ALC KBs is undecidable for $\mathcal{Q} \in \{\text{CQ},\text{\RCQ}\}$.

\textup{(}iii\textup{)} Both \textup{(}i\textup{)} and \textup{(}ii\textup{)} hold for the full signature $\Sigma$.
\end{theorem}
\begin{proof}
\begin{figure*}
  \centering
  \scalebox{0.82}{






\begin{tikzpicture}[xscale=1, %
  point/.style={thick,circle,draw=black,fill=white, minimum
    size=1.3mm,inner sep=0pt}%
  ]
  \foreach \al/\x/\y/\lab/\wh/\extra in {%
    a/0.2/0/A/below/constant, %
    start/1/0/\Start/below/, %
    x11/2/0/{I_0}/below/, %
    xN1/4/0/{T_0^{N1}}/below/, %
    row1/5/0/\Start/below/rowind, %
    x12/6/0/{T_1^{12}}/below/, %
    xN2/8/0/{T_1^{N2}}/below/, %
    row2/9/0/{\Row}/above/rowind,%
    x1m/10.5/0/{T_1^{1M\text{-}1}}/below/, %
    xNm/12.5/0/{T_1^{NM\text{-}1}}/below/, %
    rowm/13.5/0/{\Row}/above/rowind,%
    endl/14.5/0.6/{~~\End}/below/,%
    x1M/14.5/-0.6/{T_2^{1M}}/below/, %
    xNM/16.5/-0.6/{T_2^{NM}}/below/, %
    rowM/17.5/-0.6/{\Row}/above/rowind,%
    endr/18.3/-0.6/{\End}/below/%
  }{ \node[point, \extra, label=\wh:{\footnotesize $\lab$}] (\al) at (\x,\y)
    {}; }

  \node[label={[inner ysep=1]above:{\footnotesize$\Row~~~~~~~$}}] at (row1) {};%

  \node[label={[inner xsep=1]left:{\footnotesize$a$}}] at (a) {};%

  \foreach \al/\lab/\wh in {%
    x12/{$I_0$\dots}/1, %
    xN2/{$T_0^{N1}$\dots}/1, %
    x1m/{$T_0^{1M\text{-}2}$\dots}/1, %
    xNm/{$T_0^{NM\text{-}2}$\dots}/1, %
    x1M/{$T_1^{1M\text{-}1}$\dots}/-1, %
    xNM/{$T_1^{NM\text{-}1}$\dots}/-1%
  }{ \node[label={[scale=0.8]center:{\footnotesize\lab}}, yshift=-0.7cm] at
    (\al) {}; }

  \foreach \from/\to/\type/\extra in {%
    start/x11/role, x11/xN1/dotted, xN1/row1/role,%
    row1/x12/role, x12/xN2/dotted, xN2/row2/role,%
    row2/x1m/dotted, x1m/xNm/dotted, xNm/rowm/role,%
    rowm/x1M/role, x1M/xNM/dotted, xNM/rowM/role,%
    rowm/endl/role, rowM/endr/role%
  } {\draw[thick, \type, \extra] (\from) -- (\to);}

  \draw[role] (a) -- node[above] {\footnotesize$P$} (start); %

  \foreach \from/\to/\type in {%
    rowm/endl, rowm/x1M%
  } {\draw[draw=none] (\from) -- node[pos=0.5, inner sep=0] (\from-\to) {} (\to);}

  \draw[red] (rowm-endl) to[bend left] node[left] {\footnotesize $\lor$}
  (rowm-x1M);

  \node[yshift=0.1cm, xshift=-0.5cm] at (endl) {\Large$\I_l$};
  \node[yshift=-0.1cm, xshift=-0.6cm] at (x1M) {\Large$\I_r$};

  \begin{pgfonlayer}{background}
    \foreach \first/\last in {%
      x11/row1, x12/row2, x1m/rowm, x1M/rowM%
    }{ \node[fit=(\first)(\last), rounded corners, fill=gray!20] {}; }
  \end{pgfonlayer}

  \begin{pgfonlayer}{background}
    \def\drawsubtree[#1,#2]{%
      \draw[secondary, thick, dotted] (#1) -- ++(0.5,0.1) (#1) -- ++(0.5,-0.1);%
    }
    
    \colorlet{lightgray}{gray!70}
    \begin{scope}[every node/.style={scale=0.7}, secondary/.style={minimum
        size=0.8mm, draw=lightgray, thin}]\footnotesize%
      \coordinate (xk1) at (2.8,0);%
      \coordinate (xk2) at (7,0);%
      \coordinate (xkm) at (11.5,0);%
      \coordinate (xkM) at (15.5,-0.75);

      \foreach \al/\tile in {%
        x11/T_0, xk1/T_0, xN1/T_0, row1/T_1, x12/T_1, xN2/T_1, xNm/T_1,
        xkM/T_2, xNM/T_2%
      }{ \node[point, secondary, xshift=0.6cm, yshift=0.8cm,
        label={[lightgray, inner sep=1]above:{$\tile$}}] (\al-z1) at (\al) {};

        \draw[role, lightgray, secondary] (\al) -- (\al-z1);

        \drawsubtree[\al-z1,1]
      }

      \foreach \al in {%
        row2, xk2, x1m, xkm, x1M%
      }{ \draw[dotted, lightgray, thick] (\al) -- +(0.5,0.6); }

      \foreach \al/\tile/\ysh in {%
        xN2/T_2/1.6cm%
      }{ \node[point, secondary, fill=lightgray!50, xshift=0.5cm, yshift=\ysh,
        label={[lightgray, inner sep=1]left:{$\Row$}}] (\al-halt) at (\al) {};

        \node[point, secondary, xshift=0.7cm, yshift=0.3cm,
        label={[lightgray, inner sep=1]right:{$\End$}}] (\al-halt-end) at (\al-halt) {};

        \node[point, secondary, xshift=0.7cm, yshift=-0.3cm,
        label={[lightgray, inner sep=1]above:{$\tile$}}] (\al-halt-tile) at (\al-halt) {};

        \foreach \from/\to in {%
          \al/\al-halt, \al-halt/\al-halt-end, \al-halt/\al-halt-tile%
        }{ \draw[role, lightgray, secondary] (\from) -- (\to); }

        \foreach \from/\to in {%
          \al-halt/\al-halt-end, \al-halt/\al-halt-tile%
        }{ \draw[draw=none] (\from) -- node[inner sep=0, outer sep=0, pos=0.55]
          (\from-\to) {} (\to); }

        \draw[lightgray!50!red] (\al-halt-\al-halt-end) to[bend left]
        node[left, inner sep=2, scale=0.7] {$\lor$} (\al-halt-\al-halt-tile);

        \drawsubtree[\al-halt-tile,1]%
      }

      \foreach \al/\ysh in {%
        xNm/1.6cm%
      }{ \node[point, secondary, fill=lightgray!50, xshift=0.5cm, yshift=\ysh,
        label={[lightgray, inner sep=1]left:{$\Row$}}] (\al-row) at (\al) {};

        \draw[role, lightgray, secondary] (\al) -- (\al-row); 

        \drawsubtree[\al-row,1]%
      }

      \foreach \al/\ysh in {%
        xk1/1.6cm%
      }{ \node[point, secondary, fill=lightgray!50, xshift=0.6cm, yshift=\ysh,
        label={[lightgray,text width=0.65cm]left:{$\Row$~~\\$\Start$}}]
        (\al-start) at (\al) {};

        \draw[role, lightgray, secondary] (\al) -- (\al-start); 

        \drawsubtree[\al-start,1]%
      }

      \foreach \al/\tile/\ysh in {%
        xkM/T_2/1.6cm%
      }{ \node[point, secondary, fill=lightgray!50, xshift=0.5cm, yshift=\ysh,
        label={[lightgray, inner sep=1]left:{$\Row$}}] (\al-end-row) at (\al) {};

        \node[point, secondary, xshift=0.7cm,
        label={[lightgray, inner sep=1]right:{$\End$}}] (\al-end) at (\al-end-row) {};

        \foreach \from/\to/\type in {%
          \al/\al-end-row, \al-end-row/\al-end%
        }{ \draw[role, lightgray, secondary] (\from) -- (\to); } %
      }
    \end{scope}    
  \end{pgfonlayer}

  \begin{scope}[xshift=3cm, yshift=2cm]
    \foreach \al/\x/\y/\lab/\wh/\extra in {%
      y0/0/0/{\Start}/right/, %
      y1/1/0/{B_1}/right/, %
      yN/3/0/{B_N}/right/, %
      yN1/4/0/{B_{N+1}}/right/rowind, %
      y21/5/0/{}/right/, %
      y2N/7/0/{}/right/, %
      y2N1/8/0/{}/right/rowind, %
      yl1/9.5/0/{B_{n{-}N}}/right/, %
      ylN/11.5/0/{B_{n{-}1}}/right/, %
      yl/12.5/0/{B_n}/right/rowind, %
      yend/13.5/0/\End/right/%
    }{ \node[point, \extra, label=above:{\footnotesize $\lab$}] (\al) at
      (\x,\y) {}; }

    \foreach \from/\to/\type in {%
      y0/y1/role, y1/yN/dotted, yN/yN1/role, yN1/y21/role, y21/y2N/dotted,
      y2N/y2N1/role, y2N1/yl1/dotted, yl1/ylN/dotted, ylN/yl/role, yl/yend/role%
    } {\draw[thick, \type] (\from) -- (\to);}

    \begin{pgfonlayer}{background}
      \foreach \first/\last in {%
        y1/yN1, y21/y2N1, yl1/yl%
      }{ \node[fit=(\first)(\last), rounded corners, fill=gray!20] {}; }
    \end{pgfonlayer}

    \node[xshift=-1cm] at (y0) {\Large$\q_n$};
  \end{scope}

  \begin{pgfonlayer}{background}

    \foreach \al in {y0,yend,start,row1,endl,endr} { %
      \node[fit=(\al), inner sep=3] (\al-a) {}; }

    \foreach \from/\to/\out/\in in {%
      y0-a/start-a/250/70, yend-a/endl-a/250/30
    } {\draw[homomorphism, black!50] (\from) to[out=\out, in=\in]
      node[above] {$h_l$} (\to);}

    \foreach \from/\to/\in in {%
      y0-a/row1-a/90, yend-a/endr-a/110
    } {\draw[homomorphism, black!75] (\from) to[out=290, in=\in]
      node[above] {$h_r$} (\to);}
  \end{pgfonlayer}
\end{tikzpicture}

  \caption{The structure of models $\I_l$ and $\I_r$ of $\K_2$, and
    homomorphisms $h_l \colon \q_n \to \I_l$ and $h_r \colon \q_n \to \I_r$.}
  \label{fig:k2}
\end{figure*}
The proof is by reduction of the undecidable \mbox{$N\times M$}-\emph{tiling problem}: given a finite set $\mathfrak T$ of \emph{tile types} $T$ with four colours $\textit{up}(T)$, $\textit{down}(T)$, $\textit{left}(T)$ and $\textit{right}(T)$, a tile type \mbox{$I \in \mathfrak T$}, and two colours $W$ (for wall) and $C$ (for ceiling), decide whether there exist $N,M \in \mathbb N$ such that the $N \times M$ grid can be tiled using $\mathfrak T$ in such a way that $(1,1)$ is covered by a tile of type $I$; every $(N,i)$, for $i \le M$, is covered by a tile of type $T$ with $\textit{right}(T) = W$; and every $(i,M)$, for $i \le N$, is covered by a tile of  type $T$ with $\textit{up}(T) = C$.

Given an instance of this problem, we first describe a KB $\K_2 = (\T_2, \{A(a)\})$ that uses (among others) 3 concept names $T_k$, $k=0,1,2$, for each tile type $T \in \mathfrak T$.  If a point $x$ in a model $\I$ of $\K_2$ is in $T_k$ and $\textit{right}(T) = \textit{left}(T')$, then $x$ has an $R$-successor in $T'_k$. Thus, branches of $\I$ define (possibly infinite) horizontal rows of tilings with $\mathfrak T$. If a branch contains a point $y \in T_k$ with $\textit{right}(T) = W$, then this $y$ can be the last point in the row, which is indicated by an $R$-successor $z \in \Row$ of $y$. In turn, $z$ has $R$-successors in all $T_{(k+1)\, \text{mod}\, 3}$ that can be possible beginnings of the next row of tiles. To coordinate the $\textit{up}$ and $\textit{down}$ colours between the rows---which will be done by the CQs separating $\K_1$ and $\K_2$--- we make every $x \in T_k$, starting from the second row, an instance of all $T'_{(k-1)\, \text{mod}\, 3}$ with $\textit{down}(T) = \textit{up}(T')$. The row started by $z \in \Row$ can be the last one in the tiling, in which case we require that each of its tiles $T$ has $\textit{up}(T) = C$. After the point in $\Row$ indicating the end of the final row, we add an $R$-successor in $\End$ for the end of tiling. The beginning of the first row is indicated by a $P$-successor in $\Start$ of the ABox element $a$, after which we add an $R$-successor in $I_0$ for the given initial tile type $I$; see the lowest branch in Fig.~\ref{fig:k2}. To generate a tree with all possible branches described above, we only require $\EL$ axioms of the form $E \ISA D$ and $E \ISA \exists S.D$.

The existence of a tiling of some $N \times M$ grid for the given instance can be checked by Boolean CQs $\q_n$ that require an $R$-path from $\Start$ to $\End$ going through $T_k$- or $\Row$-points:
\begin{equation*}
\exists \avec{x} \big( \Start(x_0) \land \bigwedge_{i=0}^n R(x_i, x_{i+1}) \land \bigwedge_{i=1}^n B_{i}(x_{i}) \land \End(x_{n+1}) \big)
\end{equation*}
with $B_{i} \in \{\Row\} \cup \{T_k \mid T \in \mathfrak{T}, k=0,1,2\}$; see Fig.~\ref{fig:k2}. The key trick is---using an axiom of the form $D \ISA E \sqcup E'$---to ensure that the $\Row$-point before the final row of the tiling has \emph{two alternative} continuations: one as described above, and the other one having just a single $R$-successor in $\End$; see Fig.~\ref{fig:k2} where $\lor$ indicates an \emph{or-node}. This or-node gives two models of $\K_2$ denoted $\I_l$ and $\I_r$ in the picture. If $\K_2 \models \q_n$, then $\q_n$ holds in both of them, and so there are homomorphisms $h_l \colon \q_n \to \I_l$ and $h_r \colon \q_n \to \I_r$. As  $h_l(x_{n-1})$ and $h_r(x_{n-1})$ are instances of $B_{n-1}$, we have $B_{n-1} = T_1^{N M-1}$ in the picture, and so $\textit{up}(T^{N M-1}) = \textit{down}(T^{N M})$. By repeating this argument until  $x_0$, we see that the colours between horizontal rows match and the rows are of the same length. (For this trick to work, we have to make the first $\Row$-point in every branch an instance of $\Start$.) In fact, we have:
\begin{lemma}
An instance of the $N \times M$-tiling problem has a positive answer iff there exists $\q_n$ such that $\K_2 \models \q_n$.
\end{lemma}

It is to be noted that to construct $\T_2$ with the properties described above one needs quite a few auxiliary concept names.

Next, we define $\K_1 = (\T_1,\{ A(a)\})$ to be the \EL KB with the following canonical model:
\ifextendedversion
\begin{center}
\else
\\\hspace*{0.2cm}
\fi
  \begin{tikzpicture}[xscale=1.3, yscale=1, %
    emptyind/.style={fill=gray!50, inner sep=1.7}]

    \foreach \al/\x/\y/\lab/\wh/\extra/\rot in {%
      a/0.5/0/A/above/constant/0, %
      x1/1.5/0/{\Start,\Sigma_0}/above//0, %
      x2/1.5/-0.5/{}/below/emptyind/0, %
      x3/2.2/-0.75/{\End,\Sigma_{0}~}/below//-14, %
      x4/2.9/-1/{~\End,\Sigma_{0}}/below//-14, %
      y1/3/0/{\Start,\Sigma_0}/above//0,%
      y2/3/-0.5/{}/above/emptyind/0,%
      y3/3.7/-0.75/{\End,\Sigma_{0}~}/below//-14,%
      y4/4.4/-1/{~\End,\Sigma_{0}}/below//-14,%
      z1/4.5/0/{\Start,\Sigma_0}/above//0,%
      z2/4.5/-0.5/{}/above/emptyind/0,%
      z3/5.2/-0.75/{\End,\Sigma_{0}~}/below//-14,%
      z4/5.9/-1/{~\End,\Sigma_{0}}/below//-14%
    }{ \node[point, \extra, label={[rotate=\rot, inner sep=2]\wh:{\scriptsize
          $\lab$}}] (\al) at (\x,\y) {}; }

    \node[label={[inner xsep=1]left:{\scriptsize$a$}}] at (a) {};%

    \foreach \from/\to/\type in {%
      x1/x2/role, x2/x3/role, x3/x4/role, %
      x1/y1/role, y1/y2/role, y2/y3/role, y3/y4/role, %
      y1/z1/role, z1/z2/role, z2/z3/role, z3/z4/role%
    } {\draw[role, \type] (\from) -- (\to);}

    \draw[role] (a) -- node[below] {\scriptsize$P$} (x1); %

    \foreach \from in {%
      x4, y4, z4
    } {\draw[dotted, thick] (\from) -- +(0.5,-0.2);}

    \draw[dotted, thick] (z1) -- +(1,0);
  \end{tikzpicture}
\ifextendedversion
\end{center}
\else
\\[-2pt]
\fi
where $\Sigma_0 = \{\textit{Row}\} \cup \{T_k \mid T \in \mathfrak T, \, k=0,1,2\}$. Note that the vertical $R$-successors of the $\Start$-points are not instances of any concept name, and so $\K_1$ does not satisfy any query $\q_n$. On the other hand, $\K_2 \models \q$ implies $\K_1 \models \q$, for every $\Sigma$-CQ $\q$ without a subquery of the form $\q_n$ and $\Sigma={\sf sig}(\K_{1})$.

This proves (\emph{i}) for $\Sigma$-CQ entailment. For $\Sigma$-\RCQ{} entailment, we slightly modify the construction, in particular, by adding $R(a,a)$ and $\textit{Row}(a)$ to the ABox $\{A(a)\}$, and a conjunct $R(y,x_0)$ with a free $y$ to $\q_n$. (The loop $R(a,a)$ plays roughly the same role as the path between two $\Start$-points in Fig.~\ref{fig:k2}.) To prove~(\emph{ii}), we take $\K_2' = \K_2 \cup \K_1$ and show that $\K_1$ $\Sigma$-CQ entails $\K_2$ iff $\K_1$ and $\K_2'$ are $\Sigma$-CQ inseparable.
Finally, we prove~(\emph{iii}) by replacing non-$\Sigma$ symbols in $\K_{2}$ with complex \ALC-concepts that cannot be used in CQs and extending the TBoxes appropriately; cf.~\cite[Lemma~21]{lutz2012non}.
\end{proof}

The TBoxes from the proof above can also be used to obtain 
\begin{theorem}\label{thm:TBoxundecidability}
\textup{(}i\textup{)} The problem whether a \hALC TBox $\Theta$-$\mathcal{Q}$ entails an $\ALC$ TBox is undecidable for $\mathcal{Q} \in \{\text{CQ},\text{\RCQ}\}$.

\textup{(}ii\textup{)} $\Theta$-$\mathcal{Q}$ inseparability between \hALC and \ALC TBoxes is undecidable for $\mathcal{Q} \in \{\text{CQ},\text{\RCQ}\}$.

\textup{(}iii\textup{)} For CQs, \textup{(}i\textup{)} and \textup{(}ii\textup{)} hold for full ABox signatures and for $\Theta = (\Sigma_1,\Sigma_2)$ with $\Sigma_{1}=\Sigma_{2}$.
\end{theorem}

Observe that our undecidability proof does not work for UCQs as the UCQ composed of the two disjunctive branches shown in  Fig.~\ref{fig:k2} (for non-trivial instances) distinguishes between the KBs independently of the existence of a tiling. We now show that, at least for \RUCQ{s}, entailment is decidable.


\newcommand{\fosh}{forest-shaped\xspace}

\section{UCQ-Entailment for \ALC-KBs}\label{rUCQ}

Theorem~\ref{thm:undecidability} might seem to suggest that any reasonable
notion of query inseparability is undecidable for \ALC KBs. Interestingly, this
is not the case: we show now that UCQ-entailment and rUCQ-entailment are
2\ExpTime-complete.  We discuss the UCQ case first. The upper bound proof
consists of two parts. First, given KBs $\K_1 = (\T_{1},\A_1)$ and
$\K_2=(\T_{2},\A_{2})$ and a signature $\Sigma$ we construct a tree automaton
$\mathfrak{A}$ such that $\L(\mathfrak{A})$ is non-empty iff

\medskip
$(\ast)$ there exists a forest-shaped model $\I_{1}$ of $\K_{1}$ of outdegree at most $|\T_{1}|$ such that no model of $\K_{2}$ is $\Sigma$-homorphically embeddable into $\I_{1}$. 

\medskip
\noindent 
Using complexity results for the emptiness problem for tree automata we obtain a 2\ExpTime prodecure that checks $(\ast)$. We then show in a second step that $(\ast)$ holds iff
$\K_{1}$ does not $\Sigma$-UCQ entail $\K_{2}$ and thus prove the 2\ExpTime upper bound \emph{and} a new homomorphism-based criterion for $\Sigma$-UCQ entailment. 
In the proof we use the model-theoretic criterion given in Theorem~\ref{crit:KB} (1) and also use the well known result from automata theory that non-empty languages accepted
by tree automata contain a regular tree to derive that there exists some model $\I_{1}$ satisfying $(\ast)$ iff there exists a \emph{regular} model $\I_{1}$ satisfying $(\ast)$. 
The lower bound is proved by a reduction of the word problem for exponentially bounded alternating Turing machines.

The proofs in the rUCQ case are similar. Interestingly, however, one can now prove directly a homomorphism based characrerization of rUCQ entailment without using results on regular tree languages.

We now discuss the proofs in more detail. Let $\K_1$, $\K_2$ be \ALC-KBs and
$\Sigma$ a signature. We use two-way alternating parity automata on infinite
trees (\TWAPAs) and consider the class $\Mod^{\it fo}_{\K_{1}}$, encoding
forest-shaped interpretations as labeled trees to make them accessible to
\TWAPAs.  A \emph{tree} is a non-empty (possibly infinite) set
$T \subseteq \Nbbm^*$ closed under prefixes with root $\varepsilon$.
We say that $T$ is \emph{$m$-ary} if, for every $x \in T$, the set $\{ i \mid x \cdot i \in T \}$ is of cardinality~$m$. Let $\Gamma$ be an alphabet with symbols from the set
%
%
$$
\{\mathit{root}, \mathit{empty}\}  \cup (\ind(\K_1) \times 2^{\mn{CN}(\Tmc_1)}) \cup
(\mathsf{RN}(\T_1) \times 2^{\mn{CN}(\Tmc_1)}),
$$
where $\mathsf{CN}(\T_i)$ (resp.\ $\mathsf{RN}(\T_i)$) denotes the set of
concept (resp.\ role) names in $\T_i$. A \emph{$\Gamma$-labeled tree} is a pair
$(T,L)$ with $T$ a tree and $L \colon T \rightarrow \Gamma$ a node labeling
function. We represent forest-shaped models of $\Tmc_1$ as $m$-ary
$\Gamma$-labeled trees, with $m = \text{max}(|\T_1|, |\ind(\K_1)|)$. The root
node labeled with $\mathit{root}$ is not used in the representation. Each ABox
individual is represented by a successor of the root labeled with a symbol from
$\ind(\K_1) \times 2^{\mn{CN}(\T_1)}$; non-ABox elements are represented by
nodes deeper in the tree labeled with a symbol from
$\mathsf{RN}(\T_1) \times 2^{\mn{CN}(\T_1)}$. The label $\mathit{empty}$ is
used for padding to make sure that every tree node has exactly $m$ successors.

Now we construct three \TWAPAs $\Amf_i$, for $i=0,1,2$. $\Amf_0$ ensures that
the tree is labeled in a meaningful way, e.g.\ that the \emph{root} label only
occurs at the root node; $\Amf_1$ accepts $\Gamma$-labeled trees that represent
a model of $\K_1$, and $\Amf_2$ accepts $\Gamma$-labeled trees $(T,L)$ which
represent an interpretation $\Imc_{(T,L)}$ such that some model of $\K_2$ is
$\Sigma$-homomorphically embeddable into $\Imc_{(T,L)}$. The most
interesting automaton is $\Amf_2$, which guesses a model of $\K_2$ along with a
homomorphism to $\Imc_{(T,L)}$; in fact, both can be read off from a successful
run of the automaton. The number of states of the $\Amf_i$ is
exponential in $|\Kmc_1 \cup \Kmc_2|$. It then remains to combine these
automata into a single \TWAPA \Amf such that $\L(\Amf)=\L(\Amf_0) \cap \L(\Amf_1)
\cap \overline{\L(\Amf_2)}$, which is possible with only polynomial blowup, and
to test (in time exponential in the number of states) whether
$\L(\Amf)=\emptyset$.

One can thus construct an automaton $\mathfrak{A}$ in exponential time such
that $\L(\Amf)$ contains exactly the (trees representing) models
$\I_1 \in\Mod^{\it fo}_{\K_{1}}$ of outdegree at most $|\T_{1}|$ such that no
model of $\K_2$ is $\Sigma$-homomorphically embeddable into $\I_1$. Then, if
$\L(\Amf)\neq\emptyset$, then by Rabin \cite{Rabin1972}, $\L(\Amf)$ contains (the
tree representing) a \emph{regular} model in the following sense:

\begin{definition}
  A ditree interpretation $\Imc$ is \emph{regular} if it has, up to
  isomorphisms, finitely many rooted subintertretations. A forest-shaped model
  $\Imc$ of a KB $\K$ is regular if the ditree interpretations $\Imc_{a}$,
  $a\in {\sf ind}(\K)$, are regular.
\end{definition}

Next, we show that regular models in $\L(\Amf)$ are witnesses to non-UCQ
entailment according to the characterization in Theorem~\ref{crit:KB}~(1).
\begin{lemma}\label{lem:1}
  Let $\K_{1}$ and $\K_{2}$ be KBs, $\Sigma$ a signature, and let $\Imc_{1}$ be
  a regular forest-shaped model of $\K_{1}$ of bounded outdegree. Assume that
  no model of $\K_{2}$ is $\Sigma$-homomorphically embeddable into $\Imc_{1}$.
  Then there exists $n>0$ such that no model of $\K_{2}$ is
  $n\Sigma$-homomorphically embeddable into $\Imc_{1}$.
\end{lemma}

By Lemma~\ref{lem:1} and Theorem~\ref{crit:KB}~(1) we then have that $\L(\Amf)$
is non-empty iff $\K_{1}$ does not $\Sigma$-UCQ entail $\K_{2}$. We have thus proved the following.
\begin{theorem}
The problem whether an \ALC KB $\Sigma$-entails an \ALC KB is decidable in 2\ExpTime.
\end{theorem}
We also obtain the following strengthening of the model-theoretic characterization for $\Sigma$-UCQ entailment.
\begin{theorem}\label{UCQ-full-hom}
  $\mathcal{K}_{1}$ $\Sigma$-UCQ entails $\K_{2}$ iff for all models $\Imc_{1}$
  of $\K_{1}$ there exists a model $\I_{2}$ of $\K_{2}$ that is
  $\Sigma$-homomorphically embeddable into $\Imc_{1}$.
\end{theorem}
\begin{proof}
  The direction from left to right follows from
  Theorem~\ref{crit:KB}~(1). Conversely, assume there exists a model $\Imc_{1}$
  of $\K_{1}$ for which there does not exist any model $\I_{2}$ of $\K_{2}$
  that is $\Sigma$-homomorphically embeddable into $\Imc_{1}$. Then there
  exists a forest-shaped model of $\K_{1}$ of outdegree at most $|\T_{1}|$ with
  this property. Then $\L(\mathfrak{A})$ is non-empty and so contains a regular
  tree. Then there exists a regular forest-shaped model of $\K_{1}$ of bounded
  outdegree for which there does not exist any model $\I_{2}$ of $\K_{2}$ that
  is $\Sigma$-homomorphically embeddable into $\Imc_{1}$. By Lemma~\ref{lem:1}
  and Theorem~\ref{crit:KB} (1) we obtain that $\mathcal{K}_{1}$ does not
  $\Sigma$-UCQ entail $\K_{2}$.
\end{proof}

The matching 2\ExpTime lower bound is proved in the appendix. Thus we obtain the following result.
\begin{theorem}
\label{thm:ucq}
The problem whether an \ALC KB $\Kmc_1$ $\Sigma$-UCQ entails an \ALC KB
$\Kmc_2$ is 2\ExpTime-complete.
\end{theorem}
As for rooted UCQs, we can strengthen the model-theoretic characterization by
replacing con-n$\Sigma$-homomorphic embeddability with con-$\Sigma$-homomorphic
embeddability, where $\Imc_{2}$ is \emph{con-$\Sigma$-homomorphically
  embeddable into $\Imc_{1}$} if the maximal $\Sigma$-connected
subinterpretation of $\Imc_{2}$ is $\Sigma$-homomorphically embeddable into
$\Imc_{1}$.
\begin{theorem}\label{rUCQ-full-hom}
  Let $\K_1$ and $\K_2$ be \ALC KBs, $\Sigma$ a signature, and let $\Mod_{\!1}$
  be complete for $\K_{1}$. Then $\K_1$ $\Sigma$-\RUCQ entails $\K_2$ iff for any
  $\I_1 \in \Mod_{\!1}$, there exists $\I_2 \models \K_2$ such that $\I_2$ is
  con-$\Sigma$-homomorphically embeddable into~$\I_1$.
\end{theorem}
\begin{proof}
  The proof is a straighforward modification of the proof of Lemma~\ref{lem:1}. A proof sketch is as follows: in 
	view of Theorem~\ref{crit:KB} (2), it suffices to prove
  $(\Rightarrow)$. Suppose $\I_{1} \in \Mod_{\!1}$. By Theorem~\ref{crit:KB} (2),
  for every $n\geq 0$, we have $\J\in \Mod^{\it fo}_{\K_{2}}$ and a
  $\Sigma$-homomorphism $h_{n} \colon \J_{|\leq n} \to \I_{1}$, where $\J_{|\leq
    n}$ is the subinterpretation of $\J$ whose elements are connected to ABox
  individuals by $\Sigma$-paths of length $\le n$. Clearly, for any $n\geq 0$,
  there are only finitely many non-isomorphic pairs $(\J_{|\leq n},h_{n})$. It
  can be shown that, thus, one can construct the required $\I_2\in\Mod^{\it
    fo}_{\K_{2}}$ and con-$\Sigma$-homomorphism $h$ as the limits of suitable
  chains $\J_{|\leq 0}\subseteq \J_{|\leq 1} \subseteq \cdots$ and $h_{0}
  \subseteq h_{1}\subseteq\cdots$, respectively.
\end{proof}
The above automata construction can be slightly modified to check the
condition of Theorem~\ref{rUCQ-full-hom}. The lower bound reduction works
already for \RUCQ{s}, therefore we obtain the following result:

\begin{theorem}
\label{thm:rucq}
The problem whether an \ALC KB $\Kmc_1$ $\Sigma$-\RUCQ entails an \ALC KB
$\Kmc_2$ is 2\ExpTime-complete.
\end{theorem}

\section{(r)CQ-Entailment for (Horn-)\ALC-TBoxes}

We show that CQ- and rCQ-entailment between \ALC TBoxes becomes
decidable when the second TBox is given in Horn-\ALC. In this case, entailments for CQs and UCQs and, respectively,
rCQs and rUCQs coincide.
%
%
We start with rCQs.

Our first observation is that if a $\Sigma_{1}$-ABox is a witness for
non-$\Theta$-rCQ entailment, then one can find a witness
$\Sigma_{1}$-ABox that is tree-shaped and of bounded outdegree. Here,
an ABox $\A$ is \emph{tree-shaped} if the graph with nodes
$\mn{ind}(\Amc)$ and edges $\{a,b\}$ for each $R(a,b)\in \A$ is a
tree, and $R(a,b)\in \A$ implies $S(a,b) \notin \Amc$ for all $S \neq
R$ and $S(b,a) \notin \Amc$ for all $S$.
\begin{theorem}
\label{thm:first}
Let $\T_{1}$ be an \ALC TBox, $\T_{2}$ a
\hALC TBox, and $\Theta=(\Sigma_{1},\Sigma_{2})$.
Then $\T_{1}$ $\Theta$-rCQ-entails $\T_{2}$ iff,
for all tree-shaped $\Sigma_{1}$-ABoxes $\A$ of outdegree bounded
by $|\T_{2}|$ and consistent with $\T_{1}$ and $\T_{2}$,
$\I_{\T_{2},\A}$ is con-$\Sigma_{2}$-homomorphically embeddable into any
model $\I_{1}$ of $(\T_{1},\A)$.
\end{theorem}
\begin{proof}
It is known that \hALC is unravelling tolerant, that is, $(\T,\A)\models C(a)$ for a \hALC TBox $\T$ and $\mathcal{EL}$-concept~$C$ iff $(\T,\A')\models C(a)$ for a finite sub-ABox $\A'$ of the tree-unravelling of $\A$ at $a$~\cite{lutz2012non}. Thus, any witness ABox for non-entailment w.r.t.~$\mathcal{EL}$-instance queries can be transformed into a tree-shaped witness ABox. The result follows by observing that if $\T_{1}$ does not $\Theta$-rCQ-entail $\T_{2}$, then this is witnessed by an $\mathcal{EL}$-instance query and by applying Theorem~\ref{rUCQ-full-hom} to the KBs. The bound on the outdegree is obtained by a careful analysis of derivations.
\end{proof}
%
%
\newcommand{\Sigmaabox}{\Sigma_1} \newcommand{\Sigmaquery}{\Sigma_2}
For the automaton construction, let $\Tmc_1$ be an \ALC TBox, $\Tmc_2$
a Horn-\ALC TBox, and $\Theta=(\Sigma_1,\Sigma_2)$ a pair of
signatures. Though Theorem~\ref{thm:first} provides a natural
characterization that is similar in spirit to
Theorem~\ref{rUCQ-full-hom}, we first need a further analysis of
con-$\Sigmaquery$-homomorphic embeddability in terms of simulations
whose advantage is that they are more compositional (they can be
partial and are closed under union).

Let $\Imc_1,\Imc_2$ be interpretations and $\Sigma$ a signature.  A
relation
$\mathcal{S}\subseteq \Delta^{\Imc_1}\times \Delta^{\Imc_2}$ is a
\emph{$\Sigma$-simulation} from $\Imc_{1}$ to $\Imc_{2}$ if (\emph{i}) $d\in A^{\Imc_{1}}$ and $(d,d')\in \mathcal{S}$ imply $d'\in A^{\Imc_{2}}$ for all $\Sigma$-concept
names $A$, and (\emph{ii}) if $(d,e)\in R^{\Imc_{1}}$ and $(d,d')\in \mathcal{S}$ then there is a $(d',e')\in R^{\Imc_{2}}$ with $(e,e')\in \mathcal{S}$ for all $\Sigma$-role names~$R$.  Let $d_i \in \Delta^{\Imc_i}$, $i \in
\{1,2\}$.  $(\Imc_{1},d_{1})$ is $\Sigma$-simulated by $(\Imc_{2},d_{2})$, in
symbols $(\Imc_{1},d_{1}) \leq_{\Sigma}(\Imc_{2},d_{2})$, if there exists a
$\Sigma$-simulation $\mathcal{S}$ with $(d_{1},d_{2})\in \mathcal{S}$.
\begin{lemma}
\label{lem:homtosim}
Let \Amc be a $\Sigmaabox$-ABox and $\Imc_1$ a model of $(\Tmc_1,\Amc)$. Then
$\Imc_{\Tmc_2,\Amc}$ is not con-$\Sigmaquery$-homomorphically embeddable into
$\Imc_1$ iff there is $a \in \mn{ind}(\Amc)$ such that one of the following
holds:
  \begin{enumerate}\itemsep 0cm

  \item[\rm (1)] there is a $\Sigmaquery$-concept name $A$ with
    $a \in A^{\Imc_{\Tmc_2,\Amc}} \setminus A^{\Imc_1}$;

  \item[\rm (2)] there is an $R$-successor $d$ of $a$ in
    $\Imc_{\Tmc_2,\Amc}$, for some $\Sigmaquery$-role name $R$, such
    that $d \notin \mn{ind}(\Amc)$ and, for all $R$-successors $e$
    of $a$ in $\Imc_1$, we have  $(\Imc_{\Tmc_2,\Amc},d)
    \not\leq_{\Sigmaquery}(\Imc_1,e)$.

  \end{enumerate}
\end{lemma}
We use a mix of two-way alternating B\"uchi automata on finite trees (\TWABAs)
and non-deterministic top-down automata on finite trees (NTAs).  A finite tree
$T$ is \emph{$m$-ary} if, for every $x \in T$, the set $\{ i \mid x \cdot i \in
T \}$ is of cardinality zero or exactly~$m$.  We use labeled trees to represent
a tree-shaped ABox \Amc and a model $\Imc_1$ such that, for some $a \in
\mn{ind}(\Amc)$, conditions~(1) and~(2) from Lemma~\ref{lem:homtosim} are satisfied,
and thus $\Imc_{\Tmc_2,\Amc}$ is not con-$\Sigmaquery$-homomorphically
embeddable into $\Imc_1$. To ensure that later, additional bookkeeping
information is needed. Node labels are taken from the alphabet
$$
\Gamma = \Gamma_0 \times 2^{\mn{cl}(\Tmc_1)} \times 2^{\mn{CN}(\Tmc_2)}
\times \{ 0,1 \} \times 2^{\mn{sub}(\Tmc_2)},
$$
where $\Gamma_0$ is the set of all subsets of $\Sigmaabox \cup \{ R^- \mid R
\in \Sigmaabox \}$ that contain at most one role (a role name $R$ or
its inverse $R^-$), $\mathsf{cl}(\T_i)$ is the set of subconcepts of
(concepts in) $\T_i$ closed under single negation, and $\mn{sub}(\Tmc_2)$ is the set of subconcepts of (concepts in)~$\Tmc_2$.
For a $\Gamma$-labeled tree $(T,L)$ and a node $x$
from $T$, we use $L_i(x)$ to denote the $(i+1)$st component of $L(x)$, where $i \in \{0,\dots,4\}$. Intuitively, the $L_0$-component represents the ABox
\Amc, the $L_1$-component the model $\Imc_1$, the $L_2$-component
represents $\Imc_{\Tmc_2,\Amc}$, and the $L_3$- and $L_4$-components help to guarantee conditions~(1) and~(2) from Lemma~\ref{lem:homtosim}.

To ensure that each component $i \in \{0,\dots,4\}$ indeed represents what it is
supposed to, we impose on it an \emph{$i$-properness} condition. For
example, a $\Gamma$-labeled $(T,L)$ tree is \emph{0-proper} if
(i)~$L_0(\varepsilon)$ contains no role and (ii)~for every non-root node $x$ of
$T$, $L_0(x)$ contains a role. A 0-proper $\Gamma$-labeled tree $(T, L)$
represents the following tree-shaped $\Sigmaabox$-ABox:
\[\begin{array}{r@{~}c@{~}l}
\Amc_{( T, L)} &= & \{ A(x) \mid A \in L_0(x) \} \cup{}\\ 
  &&  \{ R(x,y) \mid R \in L_0(y),~ y \text{ is a child of } x \} \cup{}\\  
  &&  \{ R(y,x) \mid  R^{-} \in L_0(y),~ y \text{ is a child of } x \}. 
\end{array}\]
Due to space limitations, we skip the remaining definitions of
properness and concentrate on explaining the most interesting components $L_3$ and $L_4$ of
$\Gamma$-labels. The $L_3$-component marks a single node
$x$ in the tree, which is the individual $a$ from Lemma~\ref{lem:homtosim} that
satisfies conditions~(1) and~(2). If (1) is satisfied, we do not need the
$L_4$-component. Otherwise, we store in that component at $x$ a set of concepts
$S=\{ \exists R . A, \forall R . B_1, \dots, \forall R . B_n \}$ such that $R
\in \Sigmaquery$ and all concepts from $S$ are true at $x$ in
$\Imc_{\Tmc_2,\Amc}$. This \emph{successor set} represents the
$R$-successor $d$ in condition~(2) of Lemma~\ref{lem:homtosim}. We then have to
make sure that, for any neighboring node $y$ of $x$ that represents an
$R$-successor of $x$ in $\Amc_{(T,L)}$, we have $(\Imc_{\Tmc_2,\Amc},d)
\not\leq_{\Sigmaquery}(\Imc_1,y)$. This can again happen via a concept name or
via a successor; we are done in the fomer case and use the $L_4$-component of
$y$ in the latter. It is important to note that we can never return to the same
node in this tracing process since we only follow roles in the forward direction
and the represented ABox is tree-shaped. This is crucial for achieving the \ExpTime overall complexity.

We show that $\Tmc_2$ is not $\Theta$-\RCQ-entailed by $\Tmc_1$ iff there is an
$m$-ary $\Gamma$-labeled tree that is $i$-proper for any $i \in
\{0,\dots,4\}$. It then remains to design a \TWABA \Amf that accepts exactly
those trees. We construct \Amf as the intersection of five automata $\Amf_i$, $i
< 5$, where each $\Amf_i$ ensures $i$-properness. Some of the automata are
\TWABA{s} with polynomially many states while others are NTAs with
exponentially many states. We mix automata models since some properness
conditions (2-properness) are much easier to describe with a \TWABA while for
others (4-properness), it does not seem to be possible to construct a \TWABA
with polynomially many states. In summary, we obtain the following result.
\begin{theorem}
\label{thm:exptbox}
It is \ExpTime-complete to decide whether an \ALC TBox $\Tmc_1$
$(\Sigma_1,\Sigma_2)$-\RCQ entails a Horn-\ALC TBox $\Tmc_2$.
\end{theorem}
Note that the \ExpTime lower bound holds already for entailment
of $\mathcal{EL}$ TBoxes and $\Sigma_{1}=\Sigma_{2}$ \cite{DBLP:journals/jsc/LutzW10}.
We now study the non-rooted case, starting with an analogue of
Theorem~\ref{thm:first}. As expected, moving to unrestricted queries
corresponds to moving to unrestricted homomorphisms. 
%
%
\begin{theorem}
\label{thm:second}
Let $\T_{1}$ and $\T_{2}$ be \hALC TBoxes and $\Theta=(\Sigma_{1},\Sigma_{2})$.  Then $\T_{1}$ $\Theta$-CQ entails $\T_{2}$ iff, for all tree-shaped $\Sigma_{1}$-ABoxes $\A$ of outdegree $\le |\T_{2}|$ and consistent with $\T_{1}$ and~$\T_{2}$, $\I_{\T_{2},\A}$ is $\Sigma_{2}$-homomorphically embeddable into $\I_{\T_{1},\A}$.
\end{theorem}
The automata construction described above can largely be reused for this
case. The main difference is that the two conditions in
Lemma~\ref{lem:homtosim} need to be extended with a third one: there is an
element $d$ in the subtree of $\Imc_{\Tmc_2,\Amc}$ rooted at $a$ that has an
$R$-successor $d_0$, $R \notin \Sigmaquery$, such that,
for all elements $e$ of $\Imc_1$, we have $(\Imc_{2},d_0)
\not\leq_{\Sigmaquery}(\Imc_1,e)$. To deal with this condition, it becomes
necessary to store multiple successor sets in the $L_4$-components instead
of only a single one, which increases the overall complexity to 2\ExpTime.
A matching lower bound can be proved by a (non-trivial) reduction of the
word problem for exponentially bounded alternating Turing machines.
\begin{theorem}
\label{thm:2exptbox}
$\Theta$-CQ entailment for \hALC TBoxes is 2\ExpTime-complete. The lower bound holds for $\Theta = (\Sigma,\Sigma)$.
%
\end{theorem}

\section{Future Work}
We have made first steps towards understanding query entailment and
inseparability for KBs and TBoxes in expressive DLs. Many problems remain to be
addressed. From a theoretical viewpoint, it would be of interest to solve the
open problems in Figures~\ref{table:kb} and \ref{table:tbox}, and also consider
other expressive DLs such as \textsl{DL-Lite}$_{\it bool}^\mathcal{H}$~\cite{ACKZ09} or
$\mathcal{ALCI}$. 
Also, our undecidability proof goes through for
\textsl{DL-Lite}$_{\it bool}^\mathcal{H}$, but the other cases remain open. From
a practical viewpoint, our model-theoretic criteria for query entailment are a
good starting point for developing algorithms for approximations of query
entailment based on simulations. Our undecidability and complexity results also
indicate that rUCQ-entailment is more amenable to practical algorithms than,
say, CQ-entailment and can be used as an approximation of the latter.

\smallskip
\noindent
{\bf Acknowledgments.}
This work has been supported by the EU IP project Optique, grant n.\ FP7-318338, DFG grant LU 1417/2-1, and EPSRC UK grants EP/M012646/1 and EP/M012670/1 (iTract).


\bibliographystyle{plainnat}
\bibliography{string-medium,main-bib}

\appendix
\clearpage

\newenvironment{theoremnum}[1]{\smallskip\noindent\textbf{Theorem~#1.}\hspace*{0.3em}\em}{\par\smallskip}

\newenvironment{lemmanum}[1]{\smallskip\noindent\textbf{Lemma~#1.}\hspace*{0.3em}\em}{\par\smallskip}


%
%
%
%
%
%
%
%


\section{Proof of Theorem~\ref{thm:undecidability}}

\subsection{Minimal models}

We consider \ELUf TBoxes, $\T$, that consist of concept inclusions of the form
\begin{itemize}
\item[--] $A \ISA C$,
\item[--] $A \ISA B \OR C$,
\item[--] $A \ISA \SOME{R}{C}$,
\end{itemize}
where $A, B, C$ are concept names and $R$ is a role name. We construct by induction a (possibly infinite) labelled forest $\mathfrak{O}$
with a labelling function $\ell$. For each $a \in \ind(\A)$, $a$ is the root
of a tree in $\mathfrak{O}$ with $A \in \ell(a)$ iff $A(a) \in \A$.
Suppose now that $\sigma$ is a node in $\mathfrak{O}$ and $A \in
\ell(\sigma)$. If $A \sqsubseteq C$ is an axiom of $\T$ and $C \notin
\ell(\sigma)$, then we add $C$ to $\ell(\sigma)$. If $A \sqsubseteq B \OR C$ is
an axiom of $\T$ and neither $B \in \ell(\sigma)$ nor $C \in \ell(\sigma)$,
then we add to $\ell(\sigma)$ either $B$ or $C$ (but not both); in this case, we
call $\sigma$ an \emph{or-node}. If $A \sqsubseteq \exists R.C$ is an axiom of
$\T$, but the constructed part of the tree does not contain a node $\sigma \cdot
w_{\exists R.C}$, then we add $\sigma \cdot w_{\exists R.C}$ as an  $R$-successor of
$\sigma$ and set $\ell(\sigma \cdot w_{\exists R.C}) = \{C\}$.

Given an \ELUf KB $\K = (\T,\A)$, we define a \emph{minimal model} $\M = (\Delta^\M, \cdot^\M)$ of $\K$ by
taking $\Delta^\M$ to be the set of nodes in $\mathfrak{O}$, $R^\M$ to be the $R$-relation in $\mathfrak{O}$ together with $(a,b)$ such that $R(a,b) \in
\A$, and set
$$
A^\M = \{\sigma \in \Delta^\M \mid A \in \ell(\sigma)\},
$$
for every concept name $A$.

%
\begin{lemma}\label{min-elu-complete}
  Let $\K$ be an \ELUf KB $\K$ and let $\Mod_\K$ be the set of its minimal models. Then $\Mod_\K$ is complete for $\K$.
\end{lemma}
\begin{proof}
It suffices to show that (\emph{i}) every minimal model is a model of $\K$, and (\emph{ii}) for every model $\I$ of $\K$, there is a minimal model $\M$ that is homomorphically embeddable into $\I$.
The former follows from the construction.

  (\emph{ii}) Let $\I$ be a model of $\K$. We construct by induction a set $\Delta$ and
  a labelling function $\ell$ defining a minimal model $\M$ and a function $h$
  such that $h$ is a homomorphism from $\M$ to $\I$.
  First we set $a \in \Delta$ and $A \in \ell(a)$, for each $A(a) \in
  \A$. Suppose that $A \in \ell(a)$ for some $a$. If $A \ISA C$ is an axiom in
  $\T$ and $C \notin \ell(A)$, we add $C$ to $\ell(a)$. Suppose now that $A \ISA
  B \OR C$ is an axiom in $\T$, and $B \notin \ell(A)$, $C \notin
  \ell(a)$. Since $\I$ is a model of $\K$, it must be the case that $B \in
  \type^\I(a)$ or $C \in \type^\I(a)$. In the former case, we add $B$ to
  $\ell(a)$, in the latter case, we add $C$ to $\ell(a)$. We now set $h(a) = a$,
  for each $a \in \ind(\A)$. Clearly, $A \in \type^\I(h(a))$, for each $A \in
  \ell(a)$.

  Suppose that $\sigma \in \dom[\M]$ such that $h(\sigma)$ is set, and $A \in
  \ell(\sigma)$.
  Suppose further that $A \ISA \SOME{R}{C} \in \T$ and $\sigma \cdot
  w_{\SOME{R}{C}}$ is not in $\Delta$. Since $\I$ is a model of $\K$ and by
  inductive assumption $A \in \type^\I(h(\sigma))$, there exists $d \in
  \dom[\I]$ such that $(h(\sigma),d) \in R^\I$ and $d \in C^\I$. So we add
  $\sigma \cdot w_{\SOME{R}{C}}$ to $\Delta$ as successor of $\sigma$, define
  $\ell(\sigma \cdot w_{\SOME{R}{C}})$ similarly to the base case starting from
  $\{C\}$, and set $h(\sigma \cdot w_{\SOME{R}{C}}) = d$. Clearly, for each
  $\sigma \in \Delta$, for each $A \in \ell(\sigma)$ we have that $A \in
  \type^\I(h(\sigma))$.

  Now the minimal model $\M$ is defined as $(\Delta, \cdot^\M)$, where
  $\cdot^\M$ is defined as in the construction of minimal model. By
the  construction of $\Delta$ and the fact that $\M$ is minimal, we obtain that
  $h$ is indeed a homomorphism from $\M$ to $\I$.
  %
\end{proof}

\subsection{Proof of Theorem~\ref{thm:undecidability}~(\emph{i}) and~(\emph{ii}) for CQs}

A \emph{tile type} $T = (\textit{up}(T), \textit{down}(T),
\textit{left}(T), \textit{right}(T))$ consists of four colours. The following $N\times
M$-\emph{tiling problem} is known to be undecidable: given a finite set
$\mathfrak T$ of tile types, a tile type $I \in \mathfrak T$ and two colours ${\it wall}$ and ${\it ceiling}$, decide whether there exist $N,M
\in \mathbb N$ such that the $N \times M$ grid can be tiled using $\mathfrak T$
in such a way that $(1,1)$ is covered with a tile of type $I$, every $(N,i)$,
for $i \le M$, is covered with a tile of some type $T$ with $\textit{right}(T) =
\textit{wall}$, and every $(i,M)$, for $i \le N$, is covered with a tile of some
type $T$ with $\textit{up}(T) = \textit{ceiling}$.

We require role names $P$ and $R$, and the following concept names:
\begin{itemize}
\item[--] $T^{\first}, T_k,T_k^{\halt}, \widehat{T}_k$ for $T \in \mathfrak{T}$, $k=0,1,2$;
\item[--] $\Row, \Row_k, \Row_k^{\halt}$, for $k=0,1,2$;
\item[--] $A$, $\Start$ and $\End$.
\end{itemize}
Let $\K_2 = (\T_2,\{A(a)\})$, where $\T_2$ contains the following axioms, for $k = 0,1,2$:
\begin{align}
\label{initial}
& A \sqsubseteq \exists P.(\Start \sqcap \exists R.I^{\first}),\\
\label{horizontal-0}
& T^{\first} \sqsubseteq \exists R.S^{\first},\, \text{ if $\textit{right}(T) =
  \textit{left}(S)$,\, $T,S \in \mathfrak T$}, \\
\label{end-of-row-0}
& T^{\first} \sqsubseteq \exists R.(\Start \sqcap \Row_1), T \in \mathfrak T,
\textit{right}(T) = \textit{wall},\\
\label{first-row-0}
& T^{\first} \sqsubseteq \widehat{T}_0,\quad \text{for $T \in \mathfrak T$},\\
%
\label{7}
& \Row_k \sqsubseteq \exists R. T_k, \quad \text{for $T \in \mathfrak T$},\\
\label{horizontal-k}
& T_k \sqsubseteq \exists R.S_k, \quad \text{ if $\textit{right}(T) =
  \textit{left}(S)$ and $T,S \in \mathfrak T$}, \\
\label{end-of-row-k}
& T_k \sqsubseteq \exists R.\Row_{(k+1) \,\text{mod}\, 3}, \quad \text{if }
\textit{right}(T) = \textit{wall},\\
\label{start-last-row}
& T_k \sqsubseteq \exists R.\Row^{\halt}_{(k+1)\, \text{mod}\, 3},\quad
\text{if } \textit{right}(T) = \textit{wall},\\
\label{row-k}
& \Row_k \sqsubseteq \Row, \\
\label{tile-hat}
& T_k \sqsubseteq \widehat{T}_k,\quad \text{for $T \in \mathfrak T$},\\
\label{vertical-k}
& T_k \sqsubseteq \widehat{S}_{(k-1)\, \text{mod}\, 3}, \quad \text{if $\textit{down}(T) = \textit{up}(S)$, $T,S \in \mathfrak T$},\\
%
\label{disjunction}
& \Row_k^{\halt} \sqsubseteq \exists R.{\End} \sqcup
\bigsqcap_{\textit{up}(T) = \textit{ceiling}} \exists R.T^{\halt}_k, \\
\label{horizontal-halt}
& T_k^{\halt} \sqsubseteq \exists R. S^{\halt}_k, \quad \text{if
  $\textit{right}(T) = \textit{left}(S)$ and $\textit{up}(S) =
  \textit{ceiling}$},\\
\label{end}
& T_k^{\halt} \sqsubseteq \exists R.(\Row \sqcap \exists R.{\End}), \quad
\text{if $\textit{right}(T) = \textit{wall}$},\\
\label{row-halt}
& \Row^{\halt}_k \sqsubseteq \Row,\\
\label{vertical-halt}
& T_k^{\halt} \sqsubseteq \widehat{S}_{(k-1)\, \text{mod}\, 3}, ~~ \text{if
  $\textit{down}(T) = \textit{up}(S)$, }  T,S \in \mathfrak{T}.
\end{align}
The axioms \eqref{initial}-\eqref{first-row-0} produce the following tree
rooted at an $A$-point:
\begin{center}
  \begin{tikzpicture}[xscale=1.2, yscale=0.8]
    \foreach \al/\x/\y/\type/\lab/\wh/\extra in {%
      a/0/0/point/A/right/constant, %
      start/0/-1/point/\Start/right/, %
      n0/0/-2/point/I^{\first}/80/,%
      n1/-1/-3/point/T^{\first}/left/,%
      n2/0.1/-3/point/T^{\first}/right/,%
      st1/2/-3/subtree/{\qquad\Start,\Row_1}/90/,%
      n3/-2/-4/point/T^{\first}/left/,%
      n4/-0.9/-4/point/T^{\first}/right/,%
      st2/0.3/-4/subtree/{\qquad\qquad~\Start,\Row_1}/90/%
    }{ \node[\type, \extra, label=\wh:{\footnotesize $\lab$}] (\al) at (\x,\y)
      {}; }

    \foreach \from/\to/\type in {%
      a/start/dashed, start/n0/role, %
      n0/n1/role, n0/n2/role, n0/st1.north/subtreecolor, %
      n1/n3/role, n1/n4/role, n1/st2.north/subtreecolor%
    } {\draw[role, \type] (\from) -- (\to);}

    \foreach \from in {%
      n2, n3, n4%
    } {\draw[thick, dotted] (\from) -- +(0,-0.5);}

    \foreach \from/\to in {%
      n1/n2, n3/n4%
    } {\draw[thick, dotted] ($0.7*(\from)+0.3*(\to)+(0,0.1)$) --
      ($0.3*(\from)+0.7*(\to)+(0,0.1)$);}

    \foreach \wh in {%
      st1, st2%
    }{ \node[anchor=west] at (\wh.east) {\small $\tau_1$}; }
  \end{tikzpicture}
\end{center}
The axioms \eqref{7}-\eqref{vertical-k} produce trees $\tau_k$ rooted at
$\Row_k$-points:
\begin{center}
  \begin{tikzpicture}[xscale=1.1, yscale=0.8]
    \foreach \al/\x/\y/\type/\lab/\wh/\extra in {%
      r0/0/0/point/{\Row_k}/above/rowind, %
      r1/-1.2/-1/point/T_k/left/, %
      r2/1.2/-1/point/T_k/right/, %
      r3/-2.2/-2/point/T_k/left/, %
      r4/-1.1/-2/point/T_k/right/, %
      nr2/0/-2/subtree//85/,%
      end2/1.8/-2/subtree//90/%
    }{ \node[\type, \extra, label=\wh:{\footnotesize $\lab$}] (\al) at (\x,\y) {}; }

    \foreach \from/\to/\type in {%
      r0/r1/role, r0/r2/role, %
      r1/r3/role, r1/r4/role, %
      r1/nr2.north/subtreecolor, r1/end2.north/subtreecolor%
    } {\draw[thick, role, \type] (\from) -- (\to);}

    \foreach \from in {%
      r2, r3, r4%
    } {\draw[thick, dotted] (\from) -- +(0,-0.5);}

    \foreach \from/\to in {%
      r1/r2, r3/r4%
    } {\draw[thick, dotted] ($0.7*(\from)+0.3*(\to)$) --
      ($0.3*(\from)+0.7*(\to)$);}

    \foreach \wh/\lab in {%
      nr2/, end2/\halt%
    }{ \node[anchor=west] at (\wh.80) {\small $\tau_{_{(k+1)\, \text{mod}\, 3}}^{\lab}$}; }

    \node at (-2,0.5) {$\tau_k:$};
  \end{tikzpicture}
\end{center}
Finally, the axioms \eqref{disjunction}-\eqref{vertical-halt} produce trees
$\tau_k^{\text{halt}}$ rooted at $\Row_k^{\halt}$-points:
\begin{center}
  \begin{tikzpicture}[xscale=1.1, yscale=0.8, %
    point/.style={thick,circle,draw=black,minimum size=1mm,inner sep=0pt}%
    ]

    \foreach \al/\x/\y/\lab/\wh/\extra in {%
      v0/-0.25/0/\Row_k^{\halt}/above/rowind, %
      u1/-2/-1/{\End}/left/,%
      v1/-0.5/-1/{T_k^{\halt}}/left/,%
      v2/1/-1/T_k^{\halt}/right/,%
      v3/-0.75/-2/T_k^{\halt}/left/,%
      v4/0/-2/T_k^{\halt}/right/,%
      row1/1.5/-2/\Row/right/rowind,%
      end1/1.5/-3/\End/right/subtreecolor%
    }{ \node[point, \extra, label=\wh:{\footnotesize $\lab$}] (\al) at (\x,\y)
      {}; }

    \foreach \from/\to/\type in {%
      v0/u1/role, v0/v1/role, v0/v2/role, v1/v3/role, v1/v4/role, v1/row1/subtreecolor,
      row1/end1/subtreecolor%
    } {\draw[thick, role, \type] (\from) -- node[pos=0.4, outer sep=0, inner
      sep=0] (\from-\to) {} (\to);}

    \foreach \from in {%
      v2, v3, v4%
    } {\draw[thick, dotted] (\from) -- +(0,-0.5);}

    \foreach \from/\to in {%
      v1/v2, v3/v4%
    } {\draw[thick, dotted] ($0.7*(\from)+0.3*(\to)$) --
      ($0.3*(\from)+0.7*(\to)$);}

    \draw (v0-u1) to[bend right] node[pos=0.7, yshift=0.12cm] {\scriptsize $\lor$} (v0-v1);

    \node at (-3,0) {$\tau_k^{\text{halt}}:$};
  \end{tikzpicture}
\end{center}

Denote by $\q_n$ any Boolean CQ of the form
\begin{equation*}
  \exists \vec{x} \big(  \Start(x_0) \land \bigwedge_{i=0}^n R(x_i, x_{i+1})
  \land \bigwedge_{i=1}^n B_{i}(x_{i}) \land \End(x_{n+1}) \big)
\end{equation*}
where $B_{i} \in \{\Row\} \cup \{\widehat{T}_k \mid T
\in \mathfrak{T}, k=0,1,2\}$.

\begin{lemma}\label{qn-tiling}
There exists a CQ $\q_n$ such that $\prod \Mod_{\K_2} \models \q_n$ iff there exist $N,M \in \mathbb N$ for which $\mathfrak T$ tiles the $N \times M$ grid as described above.
\end{lemma}
\begin{proof}
$(\Leftarrow)$ Suppose $\mathfrak T$ tiles the $N \times
M$ grid so that a tile of type $T^{ij} \in \mathfrak T$ covers $(i,j)$. Let
$$
\textit{block}_j = (\widehat T^{1,j}_k, \dots, \widehat T^{N,j}_k, \Row),
$$
for $j=1,\dots,M-1$ and $k =(j-1) \mod 3$. Let $\q_n$ be the CQ in which the $B_i$ follow the pattern
$$
\textit{block}_1, \ \textit{block}_2,  \dots, \ \textit{block}_{M-1}
$$
(thus, $n = (N+1) \times (M-1)$).  In view of Lemma~\ref{min-elu-complete}, we only need to prove $\M \models \q_n$ for each minimal model $\M \in \Mod_{\K_2}$. Take such an $\M$. We have to show that there is an $R$-path $x_0,\dots,x_{n+1}$ in $\M$ such that $x_i \in B_i^\M$ and $x_{n+1} \in \End^\M$.

First, we construct an auxiliary $R$-path $y_0,\dots,y_n$.
We take $y_0 \in \Row^\M$ and $y_1 \in I_0^\M$ by~\eqref{initial} ($I_0 = T^{1,1}$). Then we take $y_2 \in (T^{1,1})^\M, \dots, y_{N+1} \in (T^{N,1})^\M$ by~\eqref{horizontal-0}. We now have $\textit{right}(T^{N,1}) = \textit{wall}$. By~\eqref{end-of-row-0}, we obtain $y_{N+2} \in \Row_1$. By~\eqref{row-k}, $y_{N+2} \in \Row_1^\M \subseteq \Row^\M$. We proceed in this way, starting with \eqref{7}, till the moment we construct $y_{n-1} \in T^{N,M-1}$, for which we use~\eqref{start-last-row} and \eqref{row-halt} to obtain $y_n \in \Row^{\halt}_k \subseteq \Row^\M$, for some $k$. Note that $T^\M \subseteq \widehat T^\M$ by~\eqref{tile-hat}.

By~\eqref{disjunction}, two cases are possible now.

\emph{Case 1}: there is $y$ such that $(y_n,y) \in R^\M$ and $y
\in \End^\M$. Then we take $x_0 = y_0, \dots, x_n = y_n, x_{n+1} = y$.

\emph{Case 2}: there is an object $z_1$ such that $(y_n,z_1) \in R^\M$ and $z_1
\in (T^{\halt}_k)^\M$, where $T = T^{1,M}$ for which $\textit{up}(T) =
\textit{ceiling}$. We then use~\eqref{horizontal-halt} and find objects
$z_2,\dots,z_N,u,v$ such that $z_i \in (T^{\halt}_k)^\M$, where $T = T^{i,M}$,
$u \in \Row^\M$ and $v \in \End^\M$. We take $x_0 = y_{N+1},\dots, x_{n-N-1} =
y_n$, $x_{n-N} = z_1,\dots, x_{n-1}= z_N$, and $x_n = u, x_{n+1} = v$. Note
that, by~\eqref{vertical-k} and~\eqref{vertical-halt}, we have $(T^{i,j})^\M
\subseteq (\widehat T^{i,j-1})^\M$.

\begin{figure}
  \centering
  \begin{tikzpicture}[yscale=1, %
    point/.style={thick,circle,draw=black,fill=white, minimum size=1mm,inner
      sep=0pt}%
    ]

    \begin{scope}[yshift=-0.75cm]
      \foreach \al/\x/\y/\lab/\wh/\extra in {%
        y0/0/0/{\Start}/right/, %
        y1/0/-0.5/B_1/right/, %
        yN/0/-1.5/{B_{N+1}}/right/rowind, %
        yN1/0/-2//right/, %
        ylN2/0/-4/{B_{n-N-1}}/right/rowind, %
        ylN1/0/-4.5//right/, %
        yl/0/-5.5/{B_n}/right/rowind, %
        yend/0/-6/\End/right/%
      }{ \node[point, \extra, label=\wh:{\scriptsize $\lab$}] (\al) at (\x,\y)
        {}; }

      \foreach \from/\to/\type in {%
        y0/y1/role, y1/yN/dotted, yN/yN1/role, yN1/ylN2/dotted, ylN2/ylN1/role,
        ylN1/yl/dotted, 
        yl/yend/role%
      } {\draw[thick, \type] (\from) -- (\to);}

      \begin{pgfonlayer}{background}
        \foreach \first/\last in {%
          y1/yN, yN1/{0,-2.9}, ylN1/yl%
        }{ \node[fit=(\first)(\last), rounded corners, fill=gray!30] {}; }
      \end{pgfonlayer}
    \end{scope}

    \begin{scope}[xshift=-2.3cm]
      \foreach \al/\x/\y/\hei/\wid in {%
        t1/0/0/1cm/1.8cm, %
        t2/0.4/-1.5/1cm/1.8cm, %
        t3/0.1/-4/1.5cm/2.2cm%
      }{ \node[subtree, minimum height=\hei, minimum width=\wid] (tree-\al) at
        (\x,\y) {};%

        \node[point] (\al) at (\x,\y) {};%
      }

      \foreach \al/\x/\y/\lab/\wh/\extra in {%
        a/0/1//right/constant,%
        start1/0/0.5//right/,%
        start2/0.4/-1//right/rowind,%
        start3/0.1/-3.5//right/rowind,%
        startm/-0.2/-5.5//right/rowind,%
        end1/-0.8/-6/\End/above left/,%
        s1/0/-4.6/Q_1/right/,%
        s2/-0.1/-5.1/S_1/right/%
      }{ \node[point, \extra, label={[inner sep=1]\wh:{\scriptsize $\lab$}}] (\al) at
        (\x,\y) {}; }

      \foreach \from/\to/\type in {%
        a/start1/dashed, %
        start1/t1/role, start2/t2/role, start3/t3/role, %
        s1/s2/role, s2/startm/role, startm/end1/role, %
        tree-t2.260/start3/dotted, t3/s1/dotted%
      } {\draw[thick, \type] (\from) -- (\to);}

      \foreach \al/\lab/\wh in {%
        start1/y_0/left, t1/y_1/left, start2/y_{N+1}/above left,
        t2/y_{N+2}/left, start3/y_{n-N-1}/left, startm/y_n/above left,
        t1/~I_0/right%
      }{ \node[label={[inner sep=2]\wh:{\scriptsize $\lab$}}, inner sep=0] at
        (\al) {}; }

      \foreach \al/\lab/\wh in {%
        startm/\sigma/above right, end1/\sigma y_{\exists R. \End}/272%
      }{\node[label={[inner sep=2,red]\wh:{\scriptsize $\lab$}}, inner sep=0] at (\al)
        {}; }


      \node at (-0.7,1) {$\M_\ell$};
      
    \begin{pgfonlayer}{background}
      \foreach \from/\to/\in in {%
        y0/start1/10, y1/t1/25, yN/start2/-15, ylN2/start3/0, yl/startm/-15,
        yend/end1/0%
      } {\draw[homomorphism, black!20] (\from) to[out=150, in=\in] (\to);}
    \end{pgfonlayer}
    \end{scope}

    \begin{scope}[xshift=2.7cm, yshift=-0cm]
      \foreach \al/\x/\y/\hei/\wid in {%
        t1/0/0/1cm/1.8cm, %
        t2/0.4/-1.5/1cm/1.8cm, %
        t3/0.1/-4/1.5cm/2.2cm, %
        tm/0.3/-6/1.5cm/2.6cm%
      }{ \node[subtree, minimum height=\hei, minimum width=\wid] (tree-\al) at
        (\x,\y) {};%
        \node[point] (\al) at (\x,\y) {};%
      }

      \foreach \al/\x/\y/\lab/\wh/\extra in {%
        a/0/1//right/constant,%
        start1/0/0.5//right/,%
        start2/0.4/-1//right/rowind,%
        start3/0.1/-3.5//right/rowind,%
        startm/-0.2/-5.5//right/rowind,%
        s1/0/-4.6/Q_1/right/,%
        s2/-0.1/-5.1/S_1/right/,%
        vN1/0.1/-6.6/U_2^{\halt}/right/,%
        vN/0.3/-7.1/T_2^{\halt}/right/,%
        row/0.5/-7.5//above right/rowind,%
        end2/0.5/-8/\End/right/%
      }{ \node[point, \extra, label={[inner sep=1]\wh:{\scriptsize $\lab$}}] (\al) at
        (\x,\y) {}; }

      \foreach \from/\to/\type in {%
        a/start1/dashed, %
        start1/t1/role, start2/t2/role, start3/t3/role, startm/tm/role, %
        s1/s2/role, s2/startm/role, vN1/vN/role, vN/row/role, row/end2/role,
        tree-t2.260/start3/dotted, t3/s1/dotted, tm/vN1/dotted%
      } {\draw[thick, \type] (\from) -- (\to);}

      \foreach \al/\lab/\wh in {%
        start1/y_0/left, t1/y_1/left, start2/y_{N+1}/above left,
        t2/y_{N+2}/left, start3/y_{n-N-1}/left, startm/y_n/above left,%
        t1/~I_0/right, tm/z_1/left, vN/z_{N}/left
      }{ \node[label={[inner sep=2]\wh:{\scriptsize $\lab$}}, inner sep=0] at
        (\al) {}; }

      \foreach \al/\lab/\wh in {%
        startm/\sigma/above right%
      }{\node[label={[inner sep=2,red]\wh:{\scriptsize $\lab$}}, inner sep=0] at (\al)
        {}; }

      \begin{pgfonlayer}{background}
        \foreach \from/\to/\in in {%
          y0/start2/200, y1/t2/200, yN/tree-t2.south/200, ylN2/startm/200,
          yl/row/170, yend/end2/180%
        } {\draw[homomorphism, black!40] (\from) to[out=-30, in=\in] (\to);}
      \end{pgfonlayer}

      \begin{pgfonlayer}{background}
        \foreach \start/\finish/\x/\i in {%
          start1/start2/1.3/Row 1, %
          start2/tree-t2.south/1.8/2,%
          start3/startm/1.3/$M-1$, %
          startm/row/1.8/$M$%
        } { \draw[gray, thick, <->] ({(\x,0)}|-\start) -- %
          node[right] {\small \i} ({(\x,0)}|-\finish); }
      \end{pgfonlayer}

      \node at (0.7,1) {$\M_r$};
    \end{scope}

\end{tikzpicture}

\caption{The two homomorphisms to two minimal models}
\label{two-homomorphisms}
\end{figure}


\smallskip%
$(\Rightarrow)$ Let $\q_n$ be such that $\prod \Mod_{\K_2} \models \q_n$, and so $\M \models \q_n$ for each $\M \in \Mod_{\K_2}$. Consider all the pairwise distinct pairs $(\M,h)$ such that $\M \in \Mod_{\K_2}$ and $h$ is a homomorphism from $\q_n$ to $\M$.
Note that $h(\q_n)$ contains an or-node $\sigma_h$ (which is an instance of $\Row^{\halt}_k$, for some $k$).
We call $(\M,h)$ and $h$ \emph{left} if $h(x_{n+1}) = \sigma_h \cdot w_{\exists R.\End}$, and \emph{right} otherwise.
It is not hard to see that there exist a left $(\M_\ell,h_\ell)$ and a right $(\M_r,h_r)$ with $\sigma_{h_\ell} = \sigma_{h_r}$ (if this is not the case, we can construct $\M \in \Mod_{\K_2}$ such that $\M\not\models\q_n$).

Take $(\M_\ell,h_\ell)$ and $(\M_r,h_r)$ such that $\sigma_{h_\ell} = \sigma_{h_r} =
\sigma$ and use them to construct the required tiling. Let $\sigma = a
w_0\cdots w_n$. We have $h_\ell(x_{n+1}) = \sigma \cdot w_{\exists R.\End}$ and
$h_\ell(x_{n}) = \sigma$. Let $h_r(x_{n+1}) = \sigma v_1\cdots v_{m+2}$, which is
an instance of $\End$. Then $h_r(x_n) = \sigma v_1\cdots v_{m+1}$, which is an
instance of $\Row$.

Suppose $v_{m} = w_{\exists R. T^{\halt}_2}$ (other $k$'s are treated
analogously). By~\eqref{end}, $\textit{right}(T) = \textit{wall}$; by
\eqref{horizontal-halt}, $\textit{up}(T) = \textit{ceiling}$.
Suppose $w_{n-1} = w_{\exists R. S_k}$. Then it must be that
$k=1$. By~\eqref{start-last-row}, $\textit{right}(S) = \textit{wall}$. Consider
the atom $B_{n-1}(x_{n-1})$ from $\q_n$. Then both $a w_0 \cdots w_{n-1}$ and
$\sigma v_1 \cdots v_{m}$ are instances of $B_{n-1}$. By \eqref{tile-hat} and
\eqref{vertical-halt}, $B_{n-1} = \widehat S_1$ and $\textit{down}(T) =
\textit{up}(S)$.

Suppose $v_{m-1} = w_{\exists R. U^{\halt}_2}$. By \eqref{horizontal-halt}, $\textit{right}(U) = \textit{left}(T)$ and $\textit{up}(U) = \textit{ceiling}$.
Suppose $w_{n-2} = w_{\exists R. Q_1}$. By \eqref{horizontal-k}, $\textit{right}(Q) = \textit{left}(S)$.
Consider the atom $B_{n-2}(x_{n-2})$ from $\q_n$. Then both $a w_0\cdots w_{n-2}$
and $\sigma v_1\cdots v_{m-1}$ are instances of $B_{n-2}$. By \eqref{tile-hat}
and \eqref{vertical-halt}, $B_{n-2} = \widehat Q_1$ and $\textit{down}(U) =
\textit{up}(Q)$.

We proceed in the same way until we reach $\sigma$ and $a w_0\cdots w_{n-N-1}$,
for $N = m$, both of which are instances of $B_{n-N-1} = \Row$. Thus have tiled
the two last rows of the grid. We proceed further and tile the whole $N \times
M$ grid, where $M = n/(N+1) + 1$.
\end{proof}

Note that $\K_2$ encodes tilings with at least 3 rows, hence, $M \geq 3$.

\medskip

We now define a KB $\K_1 = (\T_1,\{A(a)\})$.
Let
$
\Sigma_0 = \{\Row\} \cup \{\hat{T}_k \mid T \in \mathfrak{T}, k = 0,1,2\},
$
and let $\T_1$ contain the following axioms:
\begin{align}
  A \sqsubseteq &~ \exists P. D,\\
   D\sqsubseteq &~ \exists R.D ~\AND~ \exists R. \exists R. E ~\AND
  \bigsqcap_{X \in \Sigma_0} X \AND \Start,\\
  E \sqsubseteq &~ \exists R. E ~\AND~ \bigsqcap_{X \in \Sigma_0} X
  ~\AND~ \End.
\end{align}

As $\K_1$ is an $\EL$-KB, it has a canonical model $\M_{\K_1}$:
\begin{center}
  \begin{tikzpicture}[xscale=1.1, yscale=0.8]

    \foreach \al/\x/\y/\lab/\wh/\extra in {%
      a/0/1//right/constant, %
      x0/0/0/{\Start,\Sigma_0}/above left/, %
      x1/1.2/-1//right/, %
      x2/1.2/-2/{\End,\Sigma_{0}}/left/, %
      x3/1.2/-3/{\End,\Sigma_{0}}/left/, %
      y1/-1/-1/{\Start,\Sigma_0\qquad}/above/,%
      y2/-0.2/-2//right/,%
      y3/-0.2/-3/{\End,\Sigma_{0}}/left/,%
      y4/-0.2/-4/{\End,\Sigma_{0}}/left/,%
      z1/-2/-2/{\Start,\Sigma_0\qquad}/above/,%
      z2/-1.6/-3//right/,%
      z3/-1.6/-4/{\End,\Sigma_{0}}/left/,%
      z4/-1.6/-5/{\End,\Sigma_{0}}/left/,%
      w1/-3/-3/{\Start,\Sigma_0\qquad}/above/%
    }{ \node[point, \extra, label=\wh:{\scriptsize $\lab$}] (\al) at (\x,\y) {};
    }

    \foreach \from/\to/\type in {%
      a/x0/dashed, x0/x1/role, x1/x2/role, x2/x3/role,%
      x0/y1/role, y1/y2/role, y2/y3/role, y3/y4/role, %
      y1/z1/role, z1/z2/role, z2/z3/role, z3/z4/role, %
      z1/w1/role%
    } {\draw[role, \type] (\from) -- (\to);}

   \foreach \al/\lab/\wh in {%
     x2/\sigma_{\End}/right, x0/\sigma_{\Start}/right%
    } {\node[label={[red, inner sep=1]\wh:{\footnotesize $\lab$}}] at (\al) {};}

   \foreach \from in {%
     x3, y4, z4, w1%
    } {\draw[dotted, thick] (\from) -- +(0,-0.5);}

    \foreach \from/\i/\shift in {%
      x1/1/{0.2cm,0.3cm}, y2/2/{0.2cm,0.3cm}, z2/3/{0.2cm,0.3cm},
      w1/\omega/{-0.4cm,-0.1cm}%
    } {\node[shift=(\shift), text=blue] at (\from) {$\pi_\i$};}

  \end{tikzpicture}
\end{center}

Let $\Sigma$ be the signature of $\K_1$.
\begin{lemma}\label{qn-rejected-by-k1}
$\prod \Mod_{\K_2}$ is $n\Sigma$-homomorphically embeddable into $\M_{\K_1}$ for any $n$ iff there does not exist a CQ $\q_n$ such that $\prod \Mod_{\K_2} \models \q_n$.
\end{lemma}
\begin{proof}
$(\Rightarrow)$ Suppose $\prod \Mod_{\K_2} \models \q_n$ for some $n$. Since $\prod \Mod_{\K_2}$ is $n\Sigma$-homomorphically embeddable into $\M_{\K_1}$, we then have $\M_1 \models \q_n$, which is clearly impossible because of the $B_i$ and $\End$ in $\q_n$.

$(\Leftarrow)$ Suppose $\prod \Mod_{\K_2} \not\models \q_n$ for all CQs of the form $\q_n$. Take any subinterpretation of $\prod \Mod_{\K_2}$ whose domain contains $m$ elements. We can regard this subinterpretation as a Boolean $\Sigma$-CQ, and so denote it by $\q$. Without loss of generality we can assume that $\q$ is connected; clearly, $\q$ is tree-shaped. We know that there is no $\Sigma$-homomorphism from $\q_n$ into $\q$ for any $n$; in particular, $\q$ does not have a subquery of the form $\q_n$. We have to show that $\M_{\K_1} \models \q$.

Suppose $\q$ contains $A$ or $P$, then they appear at the root of $\q$ or, respectively, in the fist edge of $\q$. By the structure of $\K_2$, it follows then $\q$ does not contain $\End$ and, therefore,  can be mapped into $\pi_\omega$. In what follows, we assume that $\q$ does not contain $A$ and $P$.

If $\q$ does not contain $\Start$ atoms, or $\q$ does not contain $\End$ atoms,
then clearly, $\M_{\K_1} \models \q$.  In the former case, $\q$ can  be mapped
to $\pi_1$ by sending the root of $\q$ to $\sigma_{\End}$. In the latter
case, $\q$ can be mapped to $\pi_\omega$ by sending the root of $\q$ to
$\sigma_{\Start}$.

Assume that $\q$ contains both $\Start$ and $\End$ atoms.  If there exists a(n
$R$-)path from a $\Start$ node to an $\End$ node in $\q$, then by the structure
of $\K_2$, the $\Start$ node must be the root of $\q$. Since $\q$ does not
contain a subquery of the form $\q_n$, this $R$-path should contain variables
with the empty $\Sigma$-concept label, in which case $\q$ can be mapped into
some $\pi_i$, $1\leq i<\omega$, by mapping the root of $\q$ to $\sigma_{\Start}$.

Now, assume that in $\q$ there does not exist a path from a $\Start$ node to an
$\End$ node. Hence, the $\Start$ node is not the root of $\q$.
Let $\M$ be a minimal model of $\K_2$. Then the root $y_0$ of $\q$ should be
mapped to an element of the form $\delta \cdot w_{\SOME{R}{T^{\first}}}$ in
$\Delta^\M$, since there is a path from the root of $\q$ to a $\Start$ node.
By the structure of $\K_2$, the general form of $\q$ should be as follows:
\begin{align*}
  Q_{T_0} \AND{} & \SOME{R}{(Q_{\Start} \AND Q_{\text{noEnd}})} \AND{} \\
  & \SOME{R}{(Q_{T_0} \AND \SOME{R}{(Q_{\Start} \AND Q_{\text{noEnd}})}
    \AND{}\\
    & \qquad\SOME{R}{(Q_{T_0} \AND \SOME{R}{(Q_{\Start} \AND
        Q_{\text{noEnd}})}} \AND{}\\
    &\hspace{4cm}\cdots \AND \SOME{R}{Q_{\End}}))}
\end{align*}
where $Q_{\End}$ is an \EL concept constructed using $R$ and concepts in
$\Sigma_0\cup \{\End\}$, $Q_{\text{noEnd}}$ is an \EL concept constructed using
$R$ and concepts in $\Sigma_0$, $Q_{\Start}$ is either an empty query or a
$\Start$ atom, and $Q_{T_0}$ is either an empty query or a $\widehat{T}_0$
atom.
%
We prove that each path in $\q$ ending with an $\End$ node must have at least
one intermediate node with the empty $\Sigma$-concept label.

For simplicity assume that $\q$ consists of two subtrees $\q_{\End}$ and
$\q_{\Start}$, where $\q_{\End}$ is a path ending with an
$\End$ node, and $\q_{\Start}$ is a tree rooted in a $\Start$ node.
By contradiction, assume that each intermediate node in $\q_{\End}$ is labeled
with either some $\widehat{T}_k$ or $\Row$.
Since $\K_2\models \q_{\End}$ it follows that there is some $n$ such that
the distance between two neighbour $\Row$ nodes in $\q_{\End}$ is $n$.
Let $\M_\ell$ and $\M_r$ be minimal models that satisfy \eqref{disjunction} by
picking the first and the second disjunct, respectively, and identical,
otherwise. Assume that $\M_\ell$ satisfies $\q_{\End}$ by mapping $y_0$ to
$\sigma_l$ of the form $\delta \cdot w_{\SOME{R}{T^{\first}}}$ and $\M_r$
satisfies $\q_{\End}$ by mapping $y_0$ to $\sigma_r$ of the form $\sigma_l
\cdots w_{\SOME{R}{T^{\first}}}$. Then the distance between $\sigma_l$ and
$\sigma_r$ is $n$.
Let the distance from $y_0$ to the first $\Row$ node $y_m$ be $m$. Then $m$
should be less than or equal $n-1$. Therefore, $y_m$ should be mapped to
a predecessor $\sigma'$ of $\sigma_r$  in $\M_\ell$. However, such a mapping is
not a homomorphism as the $\Sigma$-label of $\sigma'$ does not contain $\Row$
(only, a concept of the form $\widehat{T}_0$).
Contradiction with the assumption that $\K_2 \models \q$ and that the label of
$y_l$ is non-empty.

Finally, we conclude that $\q$ can be mapped to $\M_1$ as follows: $y_0$ to
$\sigma_{\Start}$, $\q_{\Start}$ into $\pi_\omega$, and $\q_{\End}$ into
$\pi_i$, where the distance from $y_0$ to the first gap is $i$, for $1 \leq i
< |\q|$.
\begin{figure}
  \centering
  \scalebox{1}{\begin{tikzpicture}[xscale=0.9, yscale=0.7]
    \foreach \al/\x/\y/\type/\lab/\wh/\extra in {%
      n0/0/1/point/T^{\first}/90/,%
      n1/-0.375/0/point/T^{\first}/left/,%
      st0/0.5/0/subtree/{\qquad\Start}/90/,%
      n1a/-0.75/-1/point/T^{\first}/left/,%
      n2/-1.125/-2/point/T^{\first}/left/,%
      st2/-0.5/-3/subtree/{\qquad~\Start}/90/,%
      n3/-1.5/-3/point/T^{\first}/left/,%
      x0/-1.5/-4/point/{\Start,\Row_1}/left/,%
      x1/-1.5/-5/point/{T_1}/left/,%
      x2/-1.5/-6/point/{T_1}/left/,%
      x10/-1.5/-7/point/{\Row}/left/,%
      x11/-1.5/-8.5/point/{T_1}/left/,%
      x12/-1.5/-9.5/point/{T_1}/left/,%
      x20/-1.5/-10.5/point/{\Row}/left/,%
      x/-2/-11.5/point/{\End}/left/,%
      x21/-1/-11.5/point/{T_2}/left/,%
      x22/-1/-12.5/point/{T_2}/left/,%
      x30/-1/-13.5/point/{\Row}/left/,%
      xx/-1/-14.5/point/{\End}/left/%
    }{ \node[\type, \extra, label=\wh:{\footnotesize $\lab$}] (\al) at (\x,\y)
      {}; }

    \foreach \from/\to/\type in {%
      n0/n1/role, n0/st0.north/role, n1/n1a/dotted, n1a/n2/dotted,
      n2/n3/dotted, n2/st2.north/role, n3/x0/role, %
      x0/x1/role, x1/x2/dotted, x2/x10/role,%
      x10/x11/dotted, x11/x12/dotted, x12/x20/role, x20/x/role, %
      x20/x21/role, x21/x22/dotted, x22/x30/role, x30/xx/role%
    } {\draw[thick, \type] (\from) -- (\to);}

    \foreach \to/\al in {x/left,x21/right}{%
      \draw[draw=none] (x20) -- node[pos=0.5, inner sep=0] (\al) {} (\to); }%
    \draw (left) to[bend right] node[above] {\scriptsize $\lor$} (right);

    \foreach \wh in {%
      st0, st2%
    }{ \node[anchor=west] at (\wh.east) {\small $\tau_1$}; }

    \foreach \al/\lab/\wh in {%
      n0/\sigma_l/left, n1a/\sigma'~~/above, n2/\sigma_r~~~~/above%
    }{\node[label={[inner sep=2,red]\wh:{\footnotesize $\lab$}}, inner sep=0]
      at (\al) {}; }

    \node at (-2.7,-13) {$\M_\ell$};
    \node at (-0.3,-13) {$\M_r$};

    \begin{scope}[xshift=4cm, yshift=0cm]
      \foreach \al/\x/\y/\type/\lab/\wh/\extra in {%
        y0/-0/-1/point/\widehat{T}_0/90/,%
        y1/-0.5/-2/point/\widehat{T}_0/right/,%
        yst0/0.5/-2/subtree/{\qquad\Start}/90/,%
        y3/-1/-3/point/\widehat{T}_0/right/,%
        z0/-1.5/-4/point/\Row/right/,%
        z1/-1.5/-5/point/\widehat{T}_1/right/,%
        z2/-1.5/-6/point/{\widehat{T}_1}/right/,%
        z10/-1.5/-7/point/{\Row}/right/,%
        z11/-1.5/-8.5/point/{\widehat{T}_1}/right/,%
        z12/-1.5/-9.5/point/{\widehat{T}_1}/right/,%
        z20/-1.5/-10.5/point/{\Row}/right/,%
        end/-1.5/-11.5/point/{\End}/right/%
      }{ \node[\type, \extra, label=\wh:{\footnotesize $\lab$}] (\al) at
        (\x,\y) {}; }

      \foreach \from/\to/\type in {%
        y0/y1/role, y0/yst0.north/role, y1/y3/dotted, y3/z0/role, %
        z0/z1/role, z1/z2/dotted, z2/z10/role,%
        z10/z11/dotted, z11/z12/dotted, z12/z20/role, z20/end/role%
      } {\draw[thick, \type] (\from) -- (\to);}

      \foreach \wh in {%
        yst0%
      }{ \node[anchor=west] at (\wh.east) {\small $Q_{\text{noEnd}}$}; }

      \foreach \al/\lab/\wh in {%
        y0/y_0/right, z0/y_m/left%
      }{\node[label={[inner sep=2,red]\wh:{\footnotesize $\lab$}}, inner sep=0]
        at (\al) {}; }

      \node at (0,-13) {$\q$};
    \end{scope}

    \begin{pgfonlayer}{background}
      \foreach \from/\to in {%
        y0/n0,  z0/n1a, z11/x11, end/x%
      } {\draw[homomorphism, gray!50] (\from) to[bend right] (\to);}

      \foreach \from/\to in {%
        y0/n2, z11/x21, end/xx%
      } {\draw[homomorphism, gray] (\from) to[bend left] (\to);}
    \end{pgfonlayer}
  \end{tikzpicture}}
  \label{fig:empty-labels}
  \caption{A query that contains both $\Start$ and $\End$ atoms must have
    variables with empty concept labels.}
\end{figure}
\end{proof}

As an immediate consequence of the obtained results we have:

\smallskip
\noindent
{\bf Theorem~\ref{thm:undecidability}}~(\emph{i})
{\em The problem whether a \hALC KB $\Sigma$-CQ entails an $\ALC$ KB is undecidable.}

\medskip
\smallskip
\noindent
{\bf Theorem~\ref{thm:undecidability}}~(\emph{ii})
{\em $\Sigma$-CQ inseparability between \hALC and \ALC KBs is undecidable.}
\begin{proof}
Let $\K_2' = \K_2 \cup \K_1$. Then the following set $\Mod_{\K_2'}$ is complete for $\K_2'$:
$$
\Mod_{\K_2'} = \{ \M \uplus \M_{\K_1} \mid \M \in \Mod_{\K_2}\},
$$
where $\M \uplus \M_{\K_1}$ is the interpretation that results from merging the roots $a$ of $\M$ and $\M_{\K_1}$.
As before, we set $\Sigma = \sig(\K_1)$.
It suffices to show that $\K_1$ $\Sigma$-CQ entails $\K_2$ iff $\K_1$ and $\K_2'$ are $\Sigma$-CQ inseparable.

($\Leftarrow$) follows from $\K_2 \models \q(\avec{a}) \ \Rightarrow \ \K'_2 \models \q(\avec{a})$.

($\Rightarrow$) It follows from the definition that $\K'_2$ $\Sigma$-CQ entails $\K_1$. So we have to show that $\K_1$ $\Sigma$-CQ entails $\K'_2$. Suppose this is not the case and there is a $\Sigma$-CQ $\q$ such that $\K'_2 \models \q$ and $\K_1 \not\models \q$. We can assume  $\q$ to be a \emph{smallest connected} CQ with this property; in particular, no proper sub-CQ of $\q$ separates $\K_1$ and $\K'_2$.


Now, we cannot have $\K_2 \models \q$ because this would contradict the fact that $\K_1$ $\Sigma$-CQ entails $\K_2$. Then $\K_2 \not\models \q$, and so there is $\M \in \Mod_{\K_2}$ such that  $\M \not \models \q$. On the other hand, we have $\M \uplus \M_{\K_1} \models \q$. Take a homomorphism $h\colon \q \to \M \uplus \M_{\K_1}$. As $\q$ is connected, $\M \not \models \q$ and $\M_{\K_1} \not\models \q$, there is a variable $x$ in $\q$ such that $h(x) = a$. For every  variable $x$ with $h(x) = a$, we remove $\exists x$ from the prefix of $\q$ if any.
Denote by $\q'$ the maximal sub-CQ of $\q$ such that $h(\q') \subseteq \M$ (more precisely, $S(\avec{y})$ is in $\q'$ iff $h(\avec{y}) \subseteq \Delta^\M$). Clearly, $\q' \subsetneqq \q$ and $\K'_2 \models \q'$. Denote by $\q''$ the complement of $\q'$ to $\q$.  Now, we either have $\K_1 \models \q'$ or $\K_1 \not\models \q'$. The latter case contradicts the choice of $\q$ because $\q'$ is its proper sub-CQ. Thus, $\K_1 \models \q'$, and so there is a homomorphism $h' \colon \q' \to \M_{\K_1}$ with $h'(x) = a$ for every free variable $x$. Define a map $g \colon \q \to \M_{\K_1}$ by taking $g(y) = h'(y)$ if $y$ is in $\q'$ and $g(y) = h(y)$ otherwise. The map $g$ is a homomorphism because all the variables that occur in both $\q'$ and $\q''$ are free and must be mapped by $g$ to $a$. Therefore, $\M_{\K_1} \models \q$, which is a contradiction.
\end{proof}


\subsection{Proof of Theorem~\ref{thm:undecidability}~(\emph{i}) and~(\emph{ii}) for rCQs}

Let
\begin{align}\label{1}
\A = \{ R(a,a), \Row(a), A(a) \} \cup \{\widehat T_0(a) \mid T \in \mathfrak{T}\}.
\end{align}
$\T_2$ contains the following axioms, where $k = 0,1,2$:
\begin{align}\label{2}
& A \sqsubseteq \exists R. (\Row \sqcap \exists R. I_0),
\\
\label{r-7}
& \Row_k \sqsubseteq \exists R. T_k, \quad \text{for $T \in \mathfrak T$},\\
\label{r-horizontal-k}
& T_k \sqsubseteq \exists R.S_k, \quad \text{ if $\textit{right}(T) =
  \textit{left}(S)$ and $T,S \in \mathfrak T$}, \\
\label{r-end-of-row-k}
& T_k \sqsubseteq \exists R.\Row_{(k+1) \,\text{mod}\, 3}, \quad \text{if }
\textit{right}(T) = \textit{wall},\\
\label{r-start-last-row}
& T_k \sqsubseteq \exists R.\Row^{\halt}_{(k+1)\, \text{mod}\, 3},\quad
\text{if } \textit{right}(T) = \textit{wall},\\
\label{r-row-k}
& \Row_k \sqsubseteq \Row, \\
\label{r-tile-hat}
& T_k \sqsubseteq \widehat{T}_k,\quad \text{for $T \in \mathfrak T$},\\
\label{r-vertical-k}
& T_k \sqsubseteq \widehat{S}_{(k-1)\, \text{mod}\, 3}, \quad \text{if
  $\textit{down}(T) = \textit{up}(S)$, $T,S \in \mathfrak T$},
\\
\label{r-disjunction}
& \Row_k^{\halt} \sqsubseteq \exists R.{\End} \sqcup
\bigsqcap_{\textit{up}(T) = \textit{ceiling}} \exists R.T^{\halt}_k, \\
\label{r-horizontal-halt}
& T_k^{\halt} \sqsubseteq \exists R. S^{\halt}_k, \quad \text{if
  $\textit{right}(T) = \textit{left}(S)$ and $\textit{up}(S) =
  \textit{ceiling}$},\\
\label{r-end}
& T_k^{\halt} \sqsubseteq \exists R.(\Row \sqcap \exists R.{\End}), \quad
\text{if $\textit{right}(T) = \textit{wall}$},\\
\label{r-row-halt}
& \Row^{\halt}_k \sqsubseteq \Row,\\
\label{r-vertical-halt}
& T_k^{\halt} \sqsubseteq \widehat{S}_{(k-1)\, \text{mod}\, 3}, ~~ \text{if
  $\textit{down}(T) = \textit{up}(S)$, }  T,S \in \mathfrak{T}.
\end{align}

Let $\K_2 = (\T_2, \A)$. Consider a CQ $\q_n(X)$ of the form
\begin{equation*}
  \exists \vec{x} \big( R(X, x_0) \land \bigwedge_{i=0}^{n} \bigl( R(x_i,
  x_{i+1}) \land B_{i}(x_{i})\bigr) \land \End(x_{l+1}) \big)
\end{equation*}
where $B_{i} \in \{\Row\} \cup \{\widehat{T}_k \mid T
\in \mathfrak{T}, k=0,1,2\}$.

\begin{lemma}\label{qn-tiling-loop}
There exists a CQ $\q_n(X)$ such that $\prod \Mod_{\K_2} \models \q_n(a)$ iff there exist $N,M \in \mathbb N$ for which $\mathfrak T$ tiles the $N \times M$ grid as described above.
\end{lemma}
\begin{proof} $(\Leftarrow)$ Suppose $\mathfrak T$ tiles
  the $N \times M$ grid under which a tile of type $T^{ij} \in \mathfrak T$
  covers $(i,j)$. Let
$$
\textit{block}_j = (\widehat T^{1,j}_k, \dots, \widehat T^{N,j}_k, \Row),
$$
for $j=1,\dots,M-1$ and $k =(j-1) \mod 3$. Let $\q_n$ be the CQ in which the $B_i$ follow the pattern
$$
\Row, \ \textit{block}_1, \ \textit{block}_1, \ \textit{block}_2,  \dots, \ \textit{block}_{M-1}
$$
(thus, $n = (N+1) \times M +1$).  In view of Proposition 5 we only need to
prove $\M \models \q_n$ for each minimal model $\M \in \Mod_{\K_2}$. Take such
an $\M$. We have to show that there is an $R$-path $a,x_0,\dots,x_{n+1}$ in $\M$
such that $x_i \in B_i^\M$ and $x_{n+1} \in \End^\M$.

First, we construct an auxiliary $R$-path $y_0,\dots,y_{n-N-1}$.
We take $y_0 \in \Row^\M$ and $y_1 \in I_0^\M$ by~\eqref{2} ($I_0 = T^{1,1}$). Then we take $y_2 \in (T^{2,1})^\M, \dots, y_{N} \in (T^{N,1})^\M$ by~\eqref{r-horizontal-k}. We now have $\textit{right}(T^{N,1}) = \textit{wall}$. By~\eqref{r-end-of-row-k}, we obtain $y_{N+1} \in \Row_1$. By~\eqref{r-row-k}, $y_{N+1} \in \Row_1^\M \subseteq \Row^\M$. We proceed in this way, starting with \eqref{r-7}, till the moment we construct $y_{n-1} \in T^{N,M-1}$, for which we use~\eqref{r-start-last-row} and \eqref{r-row-halt} to obtain $y_n \in \Row^{\halt}_k \subseteq \Row^\M$, for some $k$. Note that $T^\M \subseteq \widehat T^\M$ by~\eqref{r-tile-hat}.

By~\eqref{r-disjunction}, two cases are possible now.

\emph{Case 1}: there is an object $y$ such that $(y_n,y) \in R^\M$ and $y \in \End^\M$. Then we take $x_0 = \dots = x_{N} = a$, $x_{N+1} = y_0, \dots, x_{n} = y_{n-N-1}, x_{n+1} = y$.

\emph{Case 2}: there is an object $z_1$ such that $(y_n,z_1) \in R^\M$ and $z_1 \in (T^{\it halt}_k)^\M$, where $T = T^{1,M}$ for which $\textit{up}(T) = \textit{ceiling}$. We then use~\eqref{r-horizontal-halt} and find objects $z_2,\dots,z_N,u,v$ such that $z_i \in (T^{\it halt}_k)^\M$, where $T = T^{i,M}$, $u \in \Row^\M$ and $v \in \End^\M$. We take $x_0 = y_0,\dots, x_{n-N-1} = y_{n-N-1}, x_{n-N} = z_1,\dots, x_{n-1}= z_N, x_n = u, x_{n+1} = v$. Note that, by~\eqref{r-vertical-k} and~\eqref{r-vertical-halt}, we have $(T^{i,j})^\M \subseteq (\widehat T^{i,j-1})^\M$.

$(\Rightarrow)$ Let $\q_n(X)$ be such that $\prod \Mod_{\K_2} \models \q_n(a)$,
by Proposition 5 it follows $\M \models \q_n$ for each $\M \in
\Mod_{\K_2}$. Consider all the pairwise distinct pairs $(\M,h)$ such that $\M
\in \Mod_{\K_2}$ and $h$ a homomorphism from $\q$ to $\M$.
Note that $h(\q)$ contains an or-node $\sigma_h$ (which is an instance of $\Row^{\halt}_k$, for some $k$). We call $(\M,h)$ and $h$ \emph{left} if $h(x_{n+1}) = \sigma_h \cdot w_{\exists R.\End}$, and \emph{right} otherwise.
It is not hard to see that there exist a left $(\M_\ell,h_\ell)$ and a right $(\M_r,h_r)$ with $\sigma_{h_\ell} = \sigma_{h_r}$ (if this is not the case, we can construct $\M\in\Mod_{\K_2}$ such that $\M\not\models\q$).

Take $(\M_\ell,h_\ell)$ and $(\M_r,h_r)$ such that $\sigma_{h_\ell} = \sigma_{h_r} = \sigma$ and use them to construct the required tiling. Let $\sigma = a w_0\cdots w_{n'}$. We have $h_\ell(x_{n}) = \sigma$, $h_\ell(x_{n+1}) = \sigma \cdot w_{\exists R.\End}$. Let $h_r(x_{n+1}) = \sigma v_1\dots v_{m+2}$, which is an instance of $\End$. Then $h_r(x_n) = \sigma v_1\dots v_{m+1}$, which is an instance of $\Row$.  Suppose $v_{m} = w_{\exists R. T^{\halt}_2}$ (other $k$'s are treated analogously). By \eqref{r-end}, $\textit{right}(T) = \textit{wall}$; by~\eqref{r-horizontal-halt}, $\textit{up}(T) = \textit{ceiling}$.
Suppose $w_{n'-1} = w_{\exists R. S_k}$. Now, we know that $k=1$. By \eqref{r-start-last-row}, $\textit{right}(S) = \textit{wall}$. Consider the atom $B_{n-1}(x_{n-1})$ from $\q$. Then both $a w_0\cdots w_{n'-1}$ and $\sigma v_1\cdots v_{m}$ are instances of $B_{n-1}$. By \eqref{r-tile-hat} and \eqref{r-vertical-halt}, $B_{n-1} = \widehat S_1$ and $\textit{down}(T) = \textit{up}(S)$.

Suppose $v_{m-1} = w_{\exists R. U^{\halt}_2}$. By \eqref{r-horizontal-halt}, $\textit{right}(U) = \textit{left}(T)$ and $\textit{up}(U) = \textit{ceiling}$.
Suppose $w_{n'-2} = w_{\exists R. Q_1}$. By \eqref{r-horizontal-k}, $\textit{right}(Q) = \textit{left}(S)$.
Consider the atom $B_{n-2}(x_{n-2})$ from $\q$. Then both $a w_0\cdots w_{n'-2}$ and $\sigma \cdots v_{m-1}$ are instances of $B_{n-2}$. By \eqref{r-tile-hat} and \eqref{r-vertical-halt}, $B_{n-2} = \widehat Q_1$ and $\textit{down}(U) = \textit{up}(Q)$.

We proceed in the same way until we reach $\sigma$ and $a w_0\cdots w_{n'-N-1}$, for $N=m$, both of which are instances of $B_{n-N-1} = \Row$. Thus we have tiled the last two rows of the grid. Let us proceed in that fashion until we have reached some variable $x_t$, for $t \geq 0$, of $\q$ that is mapped by $h_\ell$ to $a w_0 w_1$ (see Fig.~\ref{two-homomorphisms-rooted}). Note that this situation is guaranteed to occur. Indeed, $h_\ell(a) = a$, $h_\ell(x_0) \in \{ a, a w_0\}$, $h_\ell(x_1) \in \{ a, a w_0, a w_0 w_1\}$ etc. Clearly, assuming $h_\ell(x_i) \in \{ a, a w_0\}$ for all $0 \leq i \leq n+1$ produces a contradiction.

\begin{figure}
  \centering
  \scalebox{1}{\begin{tikzpicture}[yscale=1, %
    point/.style={thick,circle,draw=black,fill=white, minimum size=1mm,inner
      sep=0pt}%
    ]

    \begin{scope}[yshift=-0.75cm]
      \foreach \al/\x/\y/\lab/\wh/\extra in {%
        y0/0/3/{a}/above/, %
        yy0a/0/2//right/, %
        yy0/0/1.5/(x_0)/left/rowind, %
        yy1/0/1//right/, %
        yyNa/0/0.4//right/, %
        yyN/0/0/(x_{N+1})/left/rowind, %
        y1/0/-0.5/{~~~\textcolor{red}{x_t}}/right/, %
        yN/0/-1.5/{}/right/rowind, %
        yN1/0/-2//right/, %
        yN2/0/-3//right/rowind, %
        ylN2/0/-4/{(x_{n-N-1})}/left/rowind, %
        ylN1/0/-4.5//right/, %
        yl/0/-5.5/{(x_n)}/left/rowind, %
        yend/0/-6/\End/below/%
      }{ \node[point, \extra, label={[inner sep=1]\wh:{\scriptsize $\lab$}}] (\al) at (\x,\y)
        {}; }

      \foreach \from/\to/\type in {%
        y0/yy0a/dotted, yy0a/yy0/role, yy0/yy1/role, yy1/yyNa/dotted,
        yyNa/yyN/role, yyN/y1/role, y1/yN/dotted, yN/yN1/role, yN1/yN2/dotted,
        yN2/ylN2/dotted, ylN2/ylN1/role, ylN1/yl/dotted, 
        yl/yend/role%
      } {\draw[thick, \type] (\from) -- (\to);}

      \begin{pgfonlayer}{background}
        \foreach \first/\last/\ind in {%
          yy1/yyN/1, y1/yN/1, yN1/yN2/2, ylN1/yl/{M-1}%
        }{ \node[fit=(\first)(\last), rounded corners, fill=gray!30,
          label=right:{\scriptsize \textit{block}$_{\ind}$}] {}; }
      \end{pgfonlayer}
    \end{scope}

    \begin{scope}[xshift=-2.3cm]
      \foreach \al/\x/\y/\hei/\wid in {%
        t1/0/0/1cm/1.8cm, %
        t2/0.4/-1.5/1cm/1.8cm, %
        t3/0.1/-4/1.5cm/2.2cm%
      }{ \node[subtree, minimum height=\hei, minimum width=\wid] (tree-\al) at
        (\x,\y) {};%

        \node[point] (\al) at (\x,\y) {};%
      }

      \foreach \al/\x/\y/\lab/\wh/\extra in {%
        a/0/1/{D,\Row,\{\widehat{T}_0\}}/left/constant,%
        start1/0/0.5/\Row/right/rowind,%
        start2/0.4/-1//right/rowind,%
        start3/0.1/-3.5//right/rowind,%
        startm/-0.2/-5.5//right/rowind,%
        end1/-0.8/-6/\End/above left/,%
        s1/0/-4.6/Q_1/right/,%
        s2/-0.1/-5.1/S_1/right/%
      }{ \node[point, \extra, label={[inner sep=1]\wh:{\scriptsize $\lab$}}] (\al) at
        (\x,\y) {}; }

      \foreach \from/\to/\type in {%
        a/start1/role, %
        start1/t1/role, start2/t2/role, start3/t3/role, %
        s1/s2/role, s2/startm/role, startm/end1/role, %
        tree-t2.260/start3/dotted, t3/s1/dotted%
      } {\draw[thick, \type] (\from) -- (\to);}

    \draw[role] (a) to[out=130, in=50, looseness=25] (a);

      \foreach \al/\lab/\wh in {%
        start1/y_0/left, t1/y_1/left, start2/y_{N+1}/above left,
        t2/y_{N+2}/left, start3/y_{n-2N-2}/left, startm/y_{n-N-1}/above left,
        t1/~I_0/right%
      }{ \node[label={[inner sep=2]\wh:{\scriptsize $\lab$}}, inner sep=0] at
        (\al) {}; }

      \foreach \al/\lab/\wh in {%
        startm/\sigma/above right, end1/\sigma y_{\exists R. \End}/272%
      }{\node[label={[inner sep=2,red]\wh:{\scriptsize $\lab$}}, inner sep=0] at (\al)
        {}; }


      \node at (-0.7,2) {$\M_\ell$};

    \begin{pgfonlayer}{background}
      \foreach \from/\to/\in in {%
        y0/a/50, yyNa/a/10,%
        yyN/start1/0, y1/t1/0, yN/start2/-30, ylN2/start3/0, yl/startm/-15,
        yend/end1/0%
      } {\draw[homomorphism, black!30] (\from) to[out=160, in=\in] (\to);}
    \end{pgfonlayer}
    \end{scope}

    \begin{scope}[xshift=2.7cm, yshift=-0cm]
      \foreach \al/\x/\y/\hei/\wid in {%
        t1/0/0/1cm/1.8cm, %
        t2/0.4/-1.5/1cm/1.8cm, %
        t3/0.1/-4/1.5cm/2.2cm, %
        tm/0.3/-6/1.5cm/2.6cm%
      }{ \node[subtree, minimum height=\hei, minimum width=\wid] (tree-\al) at
        (\x,\y) {};%
        \node[point] (\al) at (\x,\y) {};%
      }

      \foreach \al/\x/\y/\lab/\wh/\extra in {%
        a/0/1//right/constant,%
        start1/0/0.5/\Row/right/rowind,%
        start2/0.4/-1//right/rowind,%
        start3/0.1/-3.5//right/rowind,%
        startm/-0.2/-5.5//right/rowind,%
        s1/0/-4.6/Q_1/right/,%
        s2/-0.1/-5.1/S_1/right/,%
        vN1/0.1/-6.6/U_2^{\halt}/right/,%
        vN/0.3/-7.1/T_2^{\halt}/right/,%
        row/0.5/-7.5//above right/rowind,%
        end2/0.5/-8/\End/right/%
      }{ \node[point, \extra, label={[inner sep=1]\wh:{\scriptsize $\lab$}}] (\al) at
        (\x,\y) {}; }

      \foreach \from/\to/\type in {%
        a/start1/role, %
        start1/t1/role, start2/t2/role, start3/t3/role, startm/tm/role, %
        s1/s2/role, s2/startm/role, vN1/vN/role, vN/row/role, row/end2/role,
        tree-t2.260/start3/dotted, t3/s1/dotted, tm/vN1/dotted%
      } {\draw[thick, \type] (\from) -- (\to);}

    \draw[role] (a) to[out=130, in=50, looseness=25] (a);

      \foreach \al/\lab/\wh in {%
        start1/y_0/left, t1/y_1/left, start2/y_{N+1}/above left,
        t2/y_{N+2}/left, start3/y_{n-2N-2}/left, startm/y_{n-N-1}/above left,%
        t1/~I_0/right, tm/z_1/left, vN/z_{N}/left
      }{ \node[label={[inner sep=2]\wh:{\scriptsize $\lab$}}, inner sep=0] at
        (\al) {}; }

      \foreach \al/\lab/\wh in {%
        startm/\sigma/above right%
      }{\node[label={[inner sep=2,red]\wh:{\scriptsize $\lab$}}, inner sep=0] at (\al)
        {}; }

      \begin{pgfonlayer}{background}
        \foreach \from/\to/\in in {%
          y0/a/170, yy0a/a/200, yy0/start1/200, yy1/t1/200, %
          yyN/start2/200, y1/t2/200, yN/tree-t2.south/200, ylN2/startm/200,
          yl/row/170, yend/end2/180%
        } {\draw[homomorphism, black!60] (\from) to[out=-20, in=\in] (\to);}
      \end{pgfonlayer}

      \begin{pgfonlayer}{background}
        \foreach \start/\finish/\x/\i in {%
          start1/start2/1.3/Row 1, %
          start2/tree-t2.south/1.8/2,%
          start3/startm/1.3/$M-1$, %
          startm/row/1.8/$M$%
        } { \draw[gray, thick, <->] ({(\x,0)}|-\start) -- %
          node[right] {\small \i} ({(\x,0)}|-\finish); }
      \end{pgfonlayer}

      \node at (0.7,2) {$\M_r$};
    \end{scope}
\end{tikzpicture}}

\caption{The two homomorphisms to two minimal models}
\label{two-homomorphisms-rooted}
\end{figure}



Let $h_r(x_t) = a w_0 \cdots w_s$ for some $s > 1$ and note that $s =
N+2$. By~\eqref{2}, it follows that $a w_0 w_1$ is an instance of $I_0$
therefore $B_t = \hat{I}_0$ and, by~\eqref{r-vertical-k}, we also get that $a
w_0\cdots w_s$ is an instance of $V_1$ for some tile $V$ such that
$\textit{down}(V) = \textit{up}(I)$. Thus, we have the tiling as required since
the vertical and horizontal compatibility of the tiles is ensured by the
construction above and by the fact that the tile $I$ occurs in it as the
initial tile.
\end{proof}

Let $\Sigma_0 = \{\Row\} \cup \{\hat{T}_k \mid T \in \mathfrak{T}, k = 0,
1,2\}$. Set $\K_1 = (\T_1,\A)$
and $\T_1$ to contain the following axioms:
\begin{align}
   A\sqsubseteq &~ \exists R.D ~\AND~ \exists R. \exists R. E ~\AND
  \bigsqcap_{X \in \Sigma_0} X,\\
  E \sqsubseteq &~ \exists R. E ~\AND~ \bigsqcap_{X \in \Sigma_0} X
  ~\AND~ \End.
\end{align}

The canonical model $\M_{\K_1}$ of $\K_1$ is as follows:
\begin{center}
  \begin{tikzpicture}[xscale=1.1, yscale=0.8]

    \foreach \al/\x/\y/\lab/\wh/\extra in {%
      a/0/0/{\Row,\widehat{T}_0}/right/constant, %
      x1/1.2/-1//right/, %
      x2/1.2/-2/{\End,\Sigma_{0}}/left/, %
      x3/1.2/-3/{\End,\Sigma_{0}}/left/, %
      y1/-1/-1/{\Sigma_0\qquad}/above/,%
      y2/-0.3/-2//right/,%
      y3/-0.3/-3/{\End,\Sigma_{0}}/left/,%
      y4/-0.3/-4/{\End,\Sigma_{0}}/left/,%
      z1/-2/-2/{\Sigma_0\qquad}/above/%
    }{ \node[point, \extra, label=\wh:{\scriptsize $\lab$}] (\al) at (\x,\y) {};
    }

    \foreach \from/\to/\type in {%
      a/x1/role, x1/x2/role, x2/x3/role,%
      a/y1/role, y1/y2/role, y2/y3/role, y3/y4/role, %
      y1/z1/role
    } {\draw[role, \type] (\from) -- (\to);}

    \draw[role] (a) to[out=150, in=40, looseness=20] (a);

   \foreach \from in {%
     x3, y4, z1
    } {\draw[dotted, thick] (\from) -- +(0,-0.5);}

    \foreach \from/\i/\shift in {%
      x1/1/{0.2cm,0.3cm}, y2/2/{0.2cm,0.3cm}, z1/\omega/{-0.4cm,-0.1cm}%
    } {\node[shift=(\shift), text=blue] at (\from) {$\pi_\i$};}

  \end{tikzpicture}
\end{center}

As before, let $\Sigma = \sig(\K_1)$.

\begin{lemma}\label{qn-rejected-by-k1-r}
$\prod \Mod_{\K_2}$ is $n\Sigma$-homomorphically embeddable into $\M_{\K_1}$ for any $n$ iff there does not exist a CQ $\q_n$ such that $\prod \Mod_{\K_2} \models \q_n$.
\end{lemma}
\begin{proof}
$(\Rightarrow)$ Suppose $\prod \Mod_{\K_2} \models \q_n(a)$ for some $n$. Since $\prod \Mod_{\K_2}$ is $n\Sigma$-homomorphically embeddable into $\M_{\K_1}$, we then have $\M_{\K_1} \models \q_n(a)$, which is clearly impossible because of the $B_i$ and $\End$ in $\q_n$.

$(\Leftarrow)$ Suppose $\prod \Mod_{\K_2} \not\models \q_n(a)$ for all $n$. Take any subinterpretation of $\prod \Mod_{\K_2}$ whose domain contains $m$ elements. We can regard this subinterpretation as a Boolean $\Sigma$-CQ, and so denote it by $\q$. Without loss of generality we can assume that $\q$ is connected; clearly, $\q$ is either:
 \begin{description}
 \item[(i)] tree shaped with a root different from $a$,
 \item[(ii)] tree shaped rooted in $a$ and containing a loop $R(a, a)$
\end{description}
We know that there is no $\Sigma$-homomorphism from $\q_n$ into $\q$ for any $n$; in particular, $\q$ does not have a subquery of the form $\q_n$. We have to show that $\M_{\K_1} \models \q$.

If \textbf{(i)} holds we map $\q$ to the branch $\pi_1$ in the obvious way. Suppose, \textbf{(ii)} holds. We will show how to map $\q$ starting from $a$. We call a variable $x$ in $\q$ a \emph{gap} if there exists no $A \in \Sigma$ such that $A(x)$ is in $\q$. By the condition of the lemma we know that every path $\rho$ in $\q$ either:
\begin{description}
 \item[(a)] does not contain $\End(x)$, or
 \item[(b)] contains $\End(x)$ and contains a gap $y$ that occurs between the root $a$ and $x$
\end{description}
  For the paths $\rho$ of type \textbf{(b)} let $t_\rho$ be the minimal distance from the root $a$ to a gap of the path $\rho$. Denote by $\mathcal{R}$ the set of all path $\rho$ of $\q$. If all $\rho \in \mathcal{R}$ are of type \textbf{(a)} we map $\q$ on the path $\pi_\omega$. Otherwise, let $t_0$ be the minimal number of all the $t_\rho$ (that are defined) and $\mathcal{R}_{t_0}$ the set of paths $\rho$ such that $t_\rho=t_0$. We map all the path of $\mathcal{R}_{t_0}$ to the path $\pi_{t_0}$ of $\M_{\K_1}$. For the rest $\mathcal{R} \setminus \mathcal{R}_{t_0}$ we find again the minimal number $t_1$ of all the $t_\rho$ for $\rho \in \mathcal{R} \setminus \mathcal{R}_{t_0}$ and denote by $\mathcal{R}_{t_1}$ the set of paths $\rho$ such that $t_\rho=t_1$. Clearly, can map all the paths in $\mathcal{R}_{t_1}$ to $\pi_{t_1}$. We continue in that way for sufficiently many steps to map all the paths of $\mathcal{R}$.
\end{proof}

We now obtain Theorem~\ref{thm:undecidability}~(\emph{i}) and~(\emph{ii}) for rCQs in the same way as in the previous section.

\subsection{Proof of Theorem~\ref{thm:undecidability}~(\emph{iii})}

To prove undecidability results if separating CQs can have arbitrary symbols we
modify the KBs introduced above. We follow \cite{lutz2012non} and replace the non-$\Sigma$-symbols by complex
\ALC-concepts that, in contrast to concept names, cannot occur in CQs. In
detail, consider a set $\Sigma_{\sf hide}$ of concept names
and take a fresh concept name $Z_{B}$ and fresh role names $r_{B}$ and $s_{B}$
for every $B\in \Sigma_{\sf hide}$. Now let for each $B\in \Sigma_{\sf hide}$
$$
H_{B}=  \forall r_{B}.\exists s_{B}.\neg Z_{B}
$$
and set
$$
\T_{\Sigma_{\sf hide}} = \{ \top \sqsubseteq \exists r_{B}.\top, \top \sqsubseteq \exists s_{B}.Z_{B}\mid B \in \Sigma_{\sf hide}\}
$$
Note that $\T_{\Sigma_{\sf hide}}$ is an $\mathcal{EL}$ TBox that generates trees with edges $r_{B}$ and $s_{B}$
such that the $s_{Z}$-successors satisfy $Z_{B}$. One can satisfy $H_{B}$ in a certain node by introducing in addition
to the $s_{B}$-successors satisfying $Z_{B}$ other $s_{B}$-successors not satisfying $Z_{B}$. Those additional
$s_{B}$-successors will not influence the answers to CQs. We now summarize the main properties of $\T_{\Sigma_{\sf hide}}$ in a formal way.
For an ABox $\A$ and any set $p(\Sigma_{\sf hide}) = \{J_{B} \mid B\in \Sigma_{\sf hide}\}$ with $J_{B}\subseteq \ind(\A)$ for all $B\in \Sigma_{\sf hide}$
construct a model $\I$
as follows: $\Delta^{\I}$ is the set of words $w=a v_{1}\cdots v_{n}$
such that $a\in \ind(\A)$ and $v_{i}\in \{r_{B},s_{B},\bar{s}_{B}\mid B \in \Sigma_{\sf hide}\}$
where $v_{i}\not=\bar{s}_{B}$ if (i) $i>2$ or (ii) $i=2$ and ($a\not\in J_{B}$ or $v_{1}\not=r_{B}$).
For all concept names $A$ not of the form $Z_{B}$ set
$$
A^{\I} = \{ a \mid A(a) \in \A\}
$$
Let for $B\in \Sigma_{\sf hide}$:
$$
Z_{B}^{\I} = \{ a \mid Z_{B}(a)\in \A\} \cup \{w \mid {\sf tail}(w) = s_{B}\}
$$
where ${\sf tail}(w)$ is the last symbol in $w$. For all role names $R$ not of the form $r_{B}$ or $s_{B}$ set
$$
R^{\I} = \{(a,b) \mid R(a,b)\in \A\}
$$
Finally, let for $B\in \Sigma_{\sf hide}$:
\begin{eqnarray*}
r_{B}^{\I} &= &\{ (a,b) \mid r_{B}(a,b)\in \A\}\cup \{(w,wr_{B}) \mid wr_{B}\in \Delta^{\I}\}\\				
s_{B}^{\I} &= & \{ (a,b) \mid s_{B}(a,b)\in \A\}\cup \{(w,ws_{B}) \mid wr_{B}\in \Delta^{\I}\} \cup \{(w,w\bar{s}_{B}) \mid w\bar{s}_{B}\in \Delta^{\I}\}.
\end{eqnarray*}					
The following result summarizes the main properties of $\I$ \cite{lutz2012non}.
\begin{lemma}\label{lem:hide}
The following holds for every $\A$ and $p(\Sigma_{\sf hide})$:
\begin{itemize}
\item $\I$ is a model of $\T_{\Sigma_{\sf hide}}$ and $\A$;
\item $J_{B}= (H_{B})^{\I}$ for all $B\in \Sigma_{\sf hide}$;
\item for every CQ $q(\vec{x})$ and $\vec{a}$ in $\ind(\A)$:
\qquad $
\T_{\Sigma_{\sf hide}},\A \models q(\vec{a}) \quad \Leftrightarrow \quad \I \models q(\vec{a})
$
\end{itemize}
\end{lemma}
A \emph{hiding scheme} $\mathcal{H}$ consists of three sets of concept names, $\Sigma_{\sf in}$, $\Sigma_{\sf out}$, and $\Sigma_{\sf hide}$.
Let $C^{\Sigma_{\sf hide}}$ be the result of replacing in a concept $C$ every $B\in \Sigma_{\sf hide}$ by $H_{B}$.
For a given TBox $\T$ we denote by $\T^{\mathcal{H}}$ the TBox containing $\T_{\Sigma_{\sf hide}}$ and the following CIs:
\begin{itemize}
\item $A \sqsubseteq H_{A}$, for $A\in \Sigma_{\sf in}$;
\item $C^{\Sigma_{\sf hide}} \sqsubseteq D^{\Sigma_{{\sf hide}}}$, for all $C \sqsubseteq D\in \T$;
\item $H_{A} \sqsubseteq A$, for all $A\in \Sigma_{\sf out}$.
\end{itemize}
A TBox $\T$ \emph{admits trivial models} if the singleton interpretation
in which all concept and role names are interpreted by the empty set is a model of $\T$.
We consider TBoxes that admit trivial models since for such TBoxes the nodes generated by $\T_{\Sigma_{\sf hide}}$
trivially satisfy $\T$. Oberve that the TBoxes constructed in the undecidability proofs
above all admit trivial models.
\begin{theorem}\label{hide1}
The problem whether a \hALC KB full signature-CQ entails an $\ALC$ KB is undecidable.
\end{theorem}
\begin{proof}
We consider the KBs $\K_{1}=(\T_{1},\A)$ and $\K_{2}=(\T_{2},\A)$ and $\Sigma={\sf sig}(\K_{1})$
constructed in the proof of Theorem~\ref{thm:undecidability} (\emph{i}) for $\Sigma$-CQ-entailment.

Define a hiding scheme $\mathcal{H}$ by setting
\begin{itemize}
\item $\Sigma_{\sf in} = {\sf sig}(\A) = \{A\}$;
\item $\Sigma_{\sf out}$ is the set of concept names in $\Sigma$;
\item $\Sigma_{\sf hide} = {\sf sig}(\K_{1}\cup \K_{2})$.
\end{itemize}
Define new KBs as follows:
$
\K_{1}'= (\T_{1}\cup \T_{\Sigma_{\sf hide}},\A)$, $\K_{2}'=(\T_{2}^{\mathcal{H}},\A).
$
Using the facts that
\begin{itemize}
\item ${\sf sig}(\A) \subseteq \Sigma$;
\item all role names in $\K_{2}$ are contained in $\Sigma$;
\item $\T_{1}$ and $\T_{2}$ admit trivial models
\end{itemize}
it is straightforward to check that $\K_{1}$ $\Sigma$-CQ-entails $\K_{2}$ iff
$\K_{1}'$ full signature CQ-entails $\K_{2}'$
\end{proof}
\begin{theorem}\label{hide2}
The problem whether a \hALC KB is full signature-CQ inseparable from an $\ALC$ KB is undecidable.
\end{theorem}
\begin{proof}
We consider the KBs $\K_{1}=(\T_{1},\A)$ and $\K_{2}'=\K_{1}\cup \K_{2}$ and the signature $\Sigma={\sf sig}(\K_{1})$
constructed in the proof of Theorem~\ref{thm:undecidability} (\emph{ii}) for $\Sigma$-CQ-inseparability. Assume
$\K_{1}=(\T_{1},\A)$ and $\K_{2}=(\T_{2},\A)$.

Consider the same hiding scheme $\mathcal{H}$ as in the proof of Theorem~\ref{hide1}:
\begin{itemize}
\item $\Sigma_{\sf in} = {\sf sig}(\A) = \{A\}$;
\item $\Sigma_{\sf out}$ is the set of concept names in $\Sigma$;
\item $\Sigma_{\sf hide} = {\sf sig}(\K_{1}\cup \K_{2})$.
\end{itemize}
Define new KBs $\K_{1}^{\ast}$ and $\K_{2}^{\ast}$ as follows:
$
\K_{1}^{\ast}= (\T_{1}\cup \T_{\Sigma_{\sf hide}},\A)$, $\K_{2}^{\ast}=(\T_{1} \cup \T_{2}^{\mathcal{H}},\A).
$
Using the facts that
\begin{itemize}
\item ${\sf sig}(\K_{1}) \subseteq \Sigma$;
\item all role names in $\K_{1}\cup \K_{2}$ are contained in $\Sigma$;
\item $\T_{1}$ and $\T_{2}$ admit trivial models
\end{itemize}
it is straightforward to check that $\K_{1}$ and $\K_{2}'$ are $\Sigma$-CQ-inseparable iff
$\K_{1}^{\ast}$ and $\K_{2}^{\ast}$ are full signature CQ-inseparable.
\end{proof}
\begin{theorem}\label{hide3}
The problem whether a \hALC KB full signature-rCQ entails an $\ALC$ KB is undecidable.
\end{theorem}
\begin{proof}
We consider the KBs $\K_{1}=(\T_{1},\A)$ and $\K_{2}=(\T_{2},\A)$ and $\Sigma={\sf sig}(\K_{1})$
constructed in the proof of Theorem~\ref{thm:undecidability} (\emph{i}) for $\Sigma$-rCQ-entailment.

Define a hiding scheme $\mathcal{H}$ by setting
\begin{itemize}
\item $\Sigma_{\sf in} = {\sf sig}(\A) = \{R,\Row,A\} \cup \{\widehat{T}_{0} \mid T\in \mathfrak{T}\}$;
\item $\Sigma_{\sf out}$ is the set of concept names in $\Sigma$;
\item $\Sigma_{\sf hide} = {\sf sig}(\K_{1}\cup \K_{2})$.
\end{itemize}
Define new KBs as follows:
$
\K_{1}'= (\T_{1}\cup \T_{\Sigma_{\sf hide}},\A)$, $\K_{2}'=(\T_{2}^{\mathcal{H}},\A).
$
Using the facts that
\begin{itemize}
\item ${\sf sig}(\A) \subseteq \Sigma$;
\item all role names in $\K_{2}$ are contained in $\Sigma$;
\item $\T_{1}$ and $\T_{2}$ admit trivial models
\end{itemize}
it is straightforward to check that $\K_{1}$ $\Sigma$-rCQ-entails $\K_{2}$ iff
$\K_{1}'$ full signature rCQ-entails $\K_{2}'$
\end{proof}
The proof of the following results is now
similar to the proof of Theorem~\ref{hide2} using the KBs constructed in the proof of Theorem~\ref{thm:undecidability} (\emph{ii}) for $\Sigma$-rCQ-inseparability.
\begin{theorem}
The problem whether a \hALC KB is 
full signature-rCQ inseparable from an $\ALC$ KB is undecidable.
\end{theorem}

\section{Proof of Theorem~\ref{thm:TBoxundecidability}}

\subsection{Proof of Theorem~\ref{thm:TBoxundecidability}~(\emph{i}) and~(\emph{ii}) for CQs}

We formulate the result again.
\begin{theorem} Let $\Theta=(\Sigma_{1},\Sigma_{2})$.

(\emph{i}) The problem of whether a \hALC TBox $\Theta$-CQ-entails
an \ALC TBox is undecidable.

(\emph{ii}) $\Theta$-CQ inseparability between \hALC TBoxes and \ALC TBoxes is undecidable. 

(\emph{iii}) $\Theta$-CQ inseparability between \hALC TBoxes and \ALC TBoxes is undecidable for $\Sigma_{1}=\Sigma_{2}$.
\end{theorem}
\begin{proof}
We prove (\emph{i}). The proof of (\emph{ii}) is similar.
Let $\K_{1}=(\T_{1},\A)$ and $\K_{2}=(\T_{2},\A)$ be the KBs and $\Sigma$ be the signature from the proof of Theorem~\ref{thm:undecidability} (\emph{i}) for 
$\Sigma$-CQ-entailment. Recall that $\A = \{A(a)\}$. Let $\Sigma_{1}=\{A\}$, $\Sigma_{2}=\Sigma$, and $\Theta=(\Sigma_{1},\Sigma_{2})$.
We claim that $\T_{1}$ $\Theta$-CQ-entails $\T_{2}$ iff $\K_{1}$ $\Sigma$-CQ-entails $\K_{2}$. Clearly, if $\K_{1}$ does not $\Sigma$-CQ-entail $\K_{2}$, then we have found a 
$\Sigma_{1}$-ABox $\A$ that witnesses that $\T_{1}$ does not $\Theta$ CQ-entail $\T_{2}$. Conversely, observe that all $\Sigma_{1}$-ABoxes $\A'$ are sets of assertions of the form $A(b)$ and so if any such $\A'$ provides a counterexample for $\Theta$-CQ-entailment between $\T_{1}$ and $\T_{2}$, then $\A$ does.

We now prove (\emph{iii}). Consider $\K_{1}$ and $\K_{2}'= (\T_{1}\cup \T_{2},\A)$ from the proof of Theorem~\ref{thm:undecidability} (\emph{ii}) for 
$\Sigma$-CQ-inseparability. Now let 
$$
\Sigma=\{A,R,\Row,\End,\Start\} \cup \{\hat{T}_k \mid T \in \mathfrak{T}, k = 0,1,2\}
$$
Then one can show that $\T_{1}$ and $\T_{1}\cup\T_{2}$ are $(\Sigma,\Sigma)$-CQ-inseparable iff
$\K_{1}$ and $\K_{2}'$ are $\Sigma$-CQ-inseparable. The latter is undecidable.
\end{proof}

\subsection{Proof of Theorem~\ref{thm:TBoxundecidability} for full ABox signature and CQs}
We now aim to extend the result above to the full ABox signature case and inseparability.

\begin{theorem}
(\emph{i}) The problem of whether a \hALC TBox full ABox signature $\Sigma$-CQ-entails
an \ALC TBox is undecidable.

(\emph{ii}) Full ABox signature $\Sigma$-CQ inseparability between \hALC TBoxes and \ALC TBoxes is undecidable.
\end{theorem}
\begin{proof}
We consider the inseparability case.
Let $\K_{1}=(\T_{1},\A)$ and $\K_{2}'= (\T_{1}\cup \T_{2},\A)$ be the KBs and $\Sigma={\sf sig}(\K_{1})$ be the signature 
from the proof of Theorem~\ref{thm:undecidability} (\emph{ii})
for $\Sigma$-CQ-inseparability between KBs. We set 
$$
\Sigma_{0}=\{R,\Row,\End,\Start\} \cup \{\hat{T}_k \mid T \in \mathfrak{T}, k = 0,1,2\}
$$
Observe that for any signature $\Gamma$ between $\Sigma_{0}$ and $\Sigma_{0}\cup {\sf sig}(\K_{1})$, the KBs $\K_{1}$ and $\K_{2}'$
are $\Gamma$-CQ-inseparable iff they are $\Sigma_{0}$-CQ-inseparable.
We construct TBoxes $\T_{1}^{\ast}$ and $\T_{2}^{\ast}$ from the TBoxes $\T_{1}$ and $\T_{2}$
such that full ABox signature $\Sigma_{0}$-CQ-inseparability between $\T_{1}^{\ast}$ and $\T_{2}^{\ast}$ is undecidable. 
To this end define a hiding scheme $\mathcal{H}$ by setting
\begin{itemize}
\item $\Sigma_{\sf in}= \{A\}$;
\item $\Sigma_{\sf out}$ is the set of concept names in $\Sigma_{0}$;
\item $\Sigma_{\sf hide} = {\sf sig}(\K_{2})$.
\end{itemize}
Define TBoxes $\T_{1}^{\ast}$ and $\T_{2}^{\ast}$ by setting
$$
\T_{1}^{\ast}= \T_{1} \cup \T_{\Sigma_{\sf hide}}, \quad \T_{2}^{\ast}= \T_{1} \cup \T_{2}^{\mathcal{H}}
$$
Now one can prove that $\K_{1}$ and $\K_{2}'$ are $\Sigma$-CQ-inseparable iff
$\T_{1}^{\ast}$ and $\T_{2}^{\ast}$ are full ABox signature $\Sigma_{0}$-CQ-inseparable.
The direction from right to left is trivial as we can take the ABox $\A$ as a witness separating $\T_{1}^{\ast}$ and $\T_{2}^{\ast}$
if $\K_{1}$ and $\K_{2}$ are $\Sigma_{0}$-CQ-separable. For the converse direction assume that an ABox $\A'$
$\Sigma_{0}$-CQ-separates $\T_{1}^{\ast}$ and $\T_{2}^{\ast}$. As $P\not\in \Sigma_{0}$ one can then prove that  
there exists $A(b)\in \A'$ such that $\{A(b)\}$ is an ABox that separates $\T_{1}^{\ast}$ and $\T_{2}^{\ast}$. But
then $\K_{1}$ and $\K_{2}'$ are $\Sigma_{0}$-CQ-separable as well.
\end{proof}

\subsection{Proof of Theorem~\ref{thm:TBoxundecidability}~(\emph{i}) and~(\emph{ii}) for rCQs}
We state the result again.
\begin{theorem}
(i) The problem of whether a \hALC TBox $\Theta$-rCQ-entails
an \ALC TBox is undecidable.

(ii) $\Theta$-rCQ inseparability between \hALC TBoxes and \ALC TBoxes is undecidable.
\end{theorem}
\begin{proof}
We consider the inseparability case.
Let $\K_{1}=(\T_{1},\A)$ and $\K_{2}'= (\T_{1}\cup \T_{2},\A)$ be the KBs  
from the proof of Theorem~\ref{thm:undecidability} (\emph{ii})
for $\Sigma$-rCQ-inseparability between KBs. 
Let $\Sigma_{1}={\sf sig}(\A)$ and 
$$
\Sigma_{2}=\{R,\Row,\End\} \cup \{\hat{T}_k \mid T \in \mathfrak{T}, k = 0,1,2\}
$$
and let $\Theta=(\Sigma_{1},\Sigma_{2})$. Define $\T_{2}'=\T_{1}\cup \T_{2}$.
It is sufficient to show that $\K_{1}$ and $\K_{2}'$ are $\Sigma_{2}$-rCQ-inseparable iff
$\T_{1}$ and $\T_{2}'$ are $\Theta$-rCQ-inseparable. The direction from right to
left is trivial as we can use the ABox $\A$ as a witness ABox for $\Theta$-rCQ-separability
between $\T_{1}$ and $\T_{2}'$ if $\K_{1}$ and $\K_{2}'$ are $\Sigma_{2}$-rCQ-separable.
Conversely, assume there is a $\Sigma_{1}$-ABox $\A'$ which $\Theta$-rCQ-separates $\T_{1}$
and $\T_{2}'$. Using the fact that $\End\not\in \Sigma_{1}$ and that
$$
A \sqsubseteq \exists R. (\Row \sqcap \exists R. I_0)
$$
and
$$
A\sqsubseteq \exists R.D ~\AND~ \exists R. \exists R. E ~\AND
  \bigsqcap_{X \in \Sigma_0} X
$$
are the only concept inclusions in $\T_{1}\cup \T_{2}$ that generate new $R$-successors from ABox individuals
one can now readily show that $\K_{1}$ and $\K_{2}'$ are $\Sigma_{2}$-rCQ--separable.
\end{proof}

%
%




%
%
%
%
\section{Proof of Theorem~\ref{thm:ucq}}
We aim to prove that it is 2\ExpTime-complete to decide whether an \ALC KB $\Kmc_1$
$\Sigma$-UCQ entails an \ALC KB $\Kmc_2$.

\subsection{Tree Automata Preliminaries}

We introduce two-way alternating parity automata on infinite trees (\TWAPAs).
Let \Nbbm\xspace denote the \emph{positive} integers. A \emph{tree} is
a non-empty (and potentially infinite) set $T \subseteq \Nbbm^*$
closed under prefixes.
The node $\varepsilon$ is the \emph{root} of $T$. As a convention, we
take $x \cdot 0 = x$ and $ (x \cdot i) \cdot -1 = x$. Note that
$\varepsilon \cdot -1$ is undefined.  We say that $T$ is
\emph{$m$-ary} if for every $x \in T$, the set $\{ i \mid x \cdot i
\in T \}$ is of cardinality exactly $m$.  W.l.o.g., we assume that all
nodes in an $m$-ary tree are from $\{1,\dots,m\}^*$.

We use $[m]$ to denote the set $\{-1,0,\dots,m\}$ and for any set $X$,
let $\Bmc^+(X)$ denote the set of all positive Boolean formulas over
$X$, i.e., formulas built using conjunction and disjunction over the
elements of $X$ used as propositional variables, and where the special
formulas $\mn{true}$ and $\mn{false}$ are allowed as well. For an
alphabet $\Gamma$, a \emph{$\Gamma$-labeled tree} is a pair $(T,L)$
with $T$ a tree and $L:T \rightarrow \Gamma$ a node labeling function.
\begin{definition}[\TWAPA]
  A \emph{two-way alternating automaton (\TWAPA) on infinite $m$-ary trees} is
  a tuple $\Amf=(Q,\Gamma,\delta,q_0, c)$ where $Q$ is a finite set of
  \emph{states}, $\Gamma$ is a finite alphabet,
  $\delta: Q \times \Gamma \rightarrow \Bmc^+(\mn{tran}(\Amf))$ is the
  \emph{transition function} with $\mn{tran}(\Amf) = [m] \times Q$ the set of
  \emph{transitions} of \Amf, $q_0 \in Q$ is the \emph{initial state}, and
  $c: Q \to \Nbbm$ is a function assigning natural numbers to the states.
\end{definition}
Intuitively, a transition $(i,q)$ with $i>0$ means that a copy of the
automaton in state $q$ is sent to the $i$-th successor of the
current node. Similarly, $(0,q)$
means that the automaton stays at the current node and switches to
state $q$, and $(-1,q)$ indicates moving to the
predecessor of the current node.
\begin{definition}[Run, Acceptance]
  A \emph{run} of a \TWAPA $\Amf = (Q,\Gamma,\delta,q_0,c)$ on an infinite
  $\Gamma$-labeled tree $(T,L)$ is a $T \times Q$-labeled tree
  $(T_r,r)$ such that the following conditions are satisfied:
  \begin{enumerate}

  \item $r(\varepsilon) = ( \varepsilon, q_0)$

  \item if $y \in T_r$, $r(y)=(x,q)$, and $\delta(q,L(x))=\vp$, then
    there is a (possibly empty) set $Q = \{ (c_1,q_1),\dots,(c_n,q_n)
    \} \subseteq \mn{tran}(\Amf)$ such that $Q$ satisfies $\vp$ and
    for $1 \leq i \leq n$, $x \cdot c_i$ is defined and  a node
    in $T$, and there is a $y \cdot i \in T_r$ such that $r(y \cdot
    i)=(x \cdot c_i,q_i)$.

  \end{enumerate}
  We say that $(T_r,r)$ is \emph{accepting} if in all infinite paths $\pi =
  y_1y_2\cdots$ of $T_r$, we have $\textsf{min}\{c(q) \mid r(y_i)=q\text{ for
    infinitely many }y_i\in\pi\}$ is even.
  An infinite $\Gamma$-labeled tree $(T,L)$ is \emph{accepted} by \Amf if there
  is an accepting run of \Amf on $(T,L)$. We use $L(\Amf)$ to denote the set of
  all infinite $\Gamma$-labeled tree accepted by \Amf.
\end{definition}
We will use the following results from automata theory:
\begin{theorem}\label{thm:autostuff1}~\\[-4mm]
  \begin{enumerate}

  \item Given a \TWAPA, we can construct in polynomial time a \TWAPA
    that accepts the complement language;

  \item Given a constant number of \TWAPAs, we can construct in polytime a
    \TWAPA that accepts the intersection language;

  \item Emptiness of \TWAPAs can be checked in single exponential time in the
    number of states.

  \item Given a \TWAPA \Amf, if $\L(\Amf)\neq\emptyset$, then $\L(\Amf)$
    contains a regular tree \cite{Rabin1972}.
  \end{enumerate}
\end{theorem}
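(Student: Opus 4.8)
The plan is to treat the four items separately. Items (1), (2), and (4) are essentially folklore facts about alternating parity automata, and only item (3) is technically substantial because of the two-wayness. For item (1), I would exploit the self-duality of alternating automata under the parity condition. Given $\Amf=(Q,\Gamma,\delta,q_0,c)$, I define the \emph{dual} automaton $\bar\Amf=(Q,\Gamma,\bar\delta,q_0,\bar c)$, where $\bar\delta(q,\sigma)$ is the De~Morgan dual of $\delta(q,\sigma)$ (exchange $\wedge$ with $\vee$ and $\mn{true}$ with $\mn{false}$, keeping each transition $(i,q)$ as an atom) and $\bar c(q)=c(q)+1$. Acceptance of a tree can be phrased as a two-player parity game in which one player resolves disjunctions and the existential choice of a transition-satisfying set, while the adversary resolves conjunctions; by positional determinacy of parity games exactly one of the two players wins, and dualizing both the formulas and the priorities (the shift by one flips the parity of the least priority seen infinitely often) swaps their roles. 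Hence $L(\bar\Amf)$ is the complement of $L(\Amf)$, and the construction is linear.

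For item (2), intersection is immediate for alternating automata: take the disjoint union of the state sets of $\Amf_1,\dots,\Amf_k$, add a fresh initial state $q_0$ with $\delta(q_0,\sigma)=\bigwedge_{j=1}^{k}(0,q_0^j)$ (where $q_0^j$ is the initial state of $\Amf_j$), inherit all remaining transitions and priorities, and give $q_0$ any even priority. From $q_0$ a run spawns one independent copy per component, so the product accepts exactly the trees accepted by all $\Amf_j$. For a constant number $k$ of automata the state set has size $1+\sum_j|Q_j|$, which is polynomial.

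For item (3), the main obstacle, I would proceed in two stages. First, I eliminate two-wayness by converting $\Amf$ into an equivalent one-way alternating parity automaton. The idea is to guess, at each node, a \emph{summary} of the behaviour of copies that travel to the parent and later return: for each state $q$ in which a copy might be sent upward, a set of states in which control can come back down, annotated with the least priority encountered on the detour. This profile/strategy-annotation replaces every $(-1,q)$ transition by consistency conditions on the guessed summaries, yielding a one-way automaton. Second, I test emptiness of the resulting one-way automaton by simulating it with a nondeterministic parity tree automaton and reducing nonemptiness to solving a parity game, which by positional determinacy is decidable within the stated bound. The hard part will be the priority bookkeeping along interleaved up-and-down detours, and, above all, fusing the two stages (or balancing the two blow-ups) so that the overall running time stays \emph{single} exponential in the number of states rather than double exponential.

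Finally, for item (4) I would read off regularity directly from the game-theoretic emptiness test of item (3). When the associated parity game is won by the automaton, positional determinacy provides a finite-memory, indeed memoryless, winning strategy; unravelling this strategy over the finite game graph yields a finitely representable, hence regular, labelled tree together with a regular accepting run. This is precisely Rabin's classical observation~\cite{Rabin1972}.
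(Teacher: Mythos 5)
The paper does not actually prove Theorem~\ref{thm:autostuff1}; it invokes these facts as standard results from the automata-theoretic literature (citing \cite{Rabin1972} only for item~4), so your proposal can only be measured against the standard proofs. Your treatments of items~1, 2 and~4 are correct and are exactly the classical arguments: complementation by dualizing the transition formulas and shifting every priority by one (justified by positional determinacy of the acceptance parity game), intersection via a fresh initial state that spawns one copy per conjunct through $(0,q_0^j)$-transitions, and regularity of a witness tree read off from a memoryless winning strategy in the emptiness game.

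The one place where your plan, taken literally, does not yet yield the theorem is item~3. Composing two separate constructions --- first ``two-way alternating to one-way alternating,'' then ``one-way alternating to nondeterministic'' --- risks a doubly exponential state space, since each stage is exponential in general; you flag this yourself but leave it unresolved, and it is precisely the content of the claim. The standard fix (Vardi's construction for two-way alternating parity tree automata) fuses the two stages into one: build a single nondeterministic parity tree automaton that guesses, as an annotation of each input node, both a strategy resolving the disjunctions of $\delta$ and a summary relation recording, for each pair of states $(q,q')$, the least priority encountered on a detour that leaves the node upward in state $q$ and returns in state $q'$; local consistency of these annotations can be checked one-way and top-down, and the resulting automaton has $2^{O(nk\log(nk))}$ states and $O(nk)$ priorities when \Amf has $n$ states and $k$ priorities. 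Emptiness of that automaton is polynomial in its number of states and exponential only in its number of priorities, which gives the single-exponential bound. With that replacement item~3 goes through, and item~4 then follows from the memoryless winning strategy in the associated parity game exactly as you describe.
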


\subsection{Regular Interpretations and Homomorphisms}

\begin{lemmanum}{\ref{lem:1}}
Let $\K_{1}$ and $\K_{2}$ be KBs, $\Sigma$ a signature, and let $\Imc_{1}$ be
a regular forest-shaped model of $\K_{1}$ of bounded outdegree. Assume no model
of $\K_{2}$ is $\Sigma$-homomorphically embeddable into $\Imc_{1}$.
Then there exists $n>0$ such that no model of $\K_{2}$ is $n\Sigma$-homomorphically
embeddable into $\Imc_{1}$.
\end{lemmanum}
\begin{proof}
Assume to the contrary that

  \medskip
  \noindent
  ($\ast$) for any $n>0$
  there exists a model $\J \in \Mod^{\it fo}_{\K_{2}}$
  that is $n\Sigma$-homomorphically embeddable into $\I_1$.

  \medskip
  \noindent
  Denote by $\J_{|\leq n}$ the subinterpretation of $\J$ whose elements are
  connected to ABox individuals by paths using role names of length $\le n$. A
  $(\Sigma,n)$-homomorphism $h$ from $\J$ is a $\Sigma$-homomorphism with
  domain $\J_{|\leq n}$.  Let $\Xi_{n}$ be the class of $(\J,h)$ with $\J \in
  \Mod^{\it fo}_{\K_{2}}$ and $h$ a $(\Sigma,n)$-homomorphism from $\J$ to
  $\I_{1}$. By ($\ast$) all $\Xi_{n}$ are non-empty. We may assume that for
  $(\I,h),(\J,f)\in \Xi:=\bigcup_{n\geq 0}\Xi_{n}$ we have $\I_{|\leq
    n}=\J_{|\leq n}$ if $\I_{|\leq n}$ and $\J_{|\leq n}$ are isomorphic. We
  are going to define classes $\Theta_{n} \subseteq \bigcup_{m\geq n}\Xi_{m}$
  such that the following conditions hold:
  \begin{itemize}
  \item[(a)] $\Theta_{n} \cap \Xi_{m} \not=\emptyset$ for all $m\geq n$;
  \item[(b)] $\I_{|\leq n} = \J_{| \leq n}$ and $h_{|\leq n} = f_{|\leq n}$ for
    all $(\I,h),(\J,f)\in \Theta_{n}$ ($h_{|\leq n}$ denotes the restriction of
    $h$ to $\I_{|\leq n}$).
  \end{itemize}
  Let $\Theta_{0}$ be the set of all pairs $(\J,h)$ such that $(\J,h)\in
  \Xi_{0}$ and $h$ is a $(\Sigma,n)$-homomorphism from $\J$ into $\I_{1}$ for
  some $n\geq 0$. Our assumptions directly imply that $\Theta_{0}$ has the
  properties (a) and (b) above since $h(a)=a$ holds for every homomorphism $h$
  and all ABox individuals $a$ in $\K_{2}$. Suppose now that $\Theta_{n}$ has
  been defined and satisfies (a) and (b).  Let
  \begin{eqnarray*}
    \Delta_{\J}^{0}  & = & \{ y \in \Delta^{\J_{|\leq n+1}}\setminus\Delta^{\J_{|\leq n}} \mid (x,y)\in R^{\J} \mbox{ for some $R\in \Sigma$ and $x\in \Delta^{\J_{|\leq n}}$ }\}\\	   
    \Delta_{\J}^{1}  & = & \Delta^{\J_{|\leq n+1}}\setminus (\Delta^{\J_{|\leq n}} \cup \Delta_{\J}^{0})
  \end{eqnarray*}
  Define an equivalence relation $\sim$ on $\Theta_{n} \cap (\bigcup_{m\geq
    n+1}\Xi_{m})$ by setting $(\I,h) \sim (\J,f)$ if
  \begin{itemize}
  \item $\I_{|\leq n+1} = \J_{| \leq n+1}$;
  \item $h(x) = f(x)$, for all $x\in \Delta_{\J}^{0}$;
  \item $h(x) = f(x)$, for all $x\in \Delta_{\J}^{1}$ such that $h(x)\in {\sf ind}(\K_{1})$ or $f(x)\in {\sf ind}(\K_{1})$;
  \item $h(x)$ and $f(x)$ are roots of isomorphic subinterpretations of $\I_{1}$, for all $x\in \Delta_{\J}^{1}$ such that $h(x)\not\in {\sf ind}(\K_{1})$ and
    $f(x)\not\in {\sf ind}(\K_{1})$.
  \end{itemize}
  By the bounded outdegree and regularity of $\I_{1}$, the properties (a) and
  (b) of $\Theta_{n}$, and the bounded outdegree of all $\J$ such that
  $(\J,h)\in \Xi_{n}$, the number of equivalence classes is finite. Hence there
  exists an equivalence class $\Theta$ satisfying (a).  Clearly we can modify
  $\Theta$ in such a way that also $h(x)=f(x)$ for all $x\in \Delta_{\J}^{1}$
  such that $h(x)\not\in {\sf ind}(\K_{1})$ and $f(x)\not\in {\sf ind}(\K_{1})$
  while preserving all the remaining properties of $\Theta$. The resulting set
  is as required for $\Theta_{n+1}$.
	
  We now define an interpretation $\J$ with a $\Sigma$-homomorphism $h$ as
  follows:
  \begin{eqnarray*}
    \J & = &\bigcup_{n<\omega} \{\J_{|\leq n} \mid \exists h\;(\J,h)\in \Theta_{n}\}\\
    h  &= &\bigcup_{n<\omega} \{h_{|\leq n} \mid \exists \J\;(\J,h)\in \Theta_{n}\}
  \end{eqnarray*}
  It is straightforward to show that $\J$ is a model of $\K_{2}$ and $h$ is a
  $\Sigma$-homomorphism from $\J$ into $\I_{1}$, as required.
\end{proof}

\subsection{$\Gamma$-labeled Trees}

Fix \ALC KBs $\K_1=(\T_1,\A_1)$ and $\K_2=(\T_2,\A_2)$, and a signature
$\Sigma$.  We aim to check if there is a model
$\Imc_1 \in \Mod^{\it fo}_{\K_{1}}$ into which no model $\Imc_2$ of $\Kmc_2$ is
$\Sigma$-homomorphically embeddable.
In the following, we construct a \TWAPA \Amf that accepts (suitable
representations of) the desired models $\Imc_1$, and for deciding their
existence, it then remains to check emptiness.

We start with encoding \fosh interpretations as labeled trees. For
$i \in \{1,2\}$, we use $\mathsf{CN}(\T_i)$ and $\mathsf{RN}(\T_i)$ to denote
the set of concept names and role names in~$\T_i$, respectively.  Node labels
are taken from the alphabet
$$
\begin{array}{rcl}
\Gamma &=& \{\mathit{root}, \mathit{empty}\} \cup
(\ind(\A_1) \times 2^{\mn{CN}(\T_1)}) \cup
(\mathsf{RN}(\T_1) \times 2^{\mn{CN}(\T_1)}).
\end{array}
$$
The $\Mod^{\it fo}_{\K_1}$ models will be represented as $m$-ary
$\Gamma$-labeled trees, with $m = \text{max}(|\T_1|, |\ind(\K_1)|)$. The root
node is not used in the representation and receives label $\mathit{root}$. Each
ABox individual is represented by a successor of the root labeled with a symbol
from $\ind(\A_1) \times 2^{\mn{CN}(\T_1)}$; anonymous elements are represented
by nodes deeper in the tree labeled with a symbol from $\mathsf{RN}(\T_1)
\times 2^{\mn{CN}(\T_1)}$. The node label $\mathit{empty}$ is used for padding
to achieve that every tree node has exactly $m$ successors. We call a
$\Gamma$-labeled tree \emph{proper} if it satisfies the following conditions:
\begin{itemize}

\item the root is labeled with $\mathit{root}$;

\item for every $a \in \ind(\A_1)$, there is exactly one successor of
  the root that is labeled with a symbol from $\{a\} \times
  2^{\mn{CN}(\T_1)}$; all remaining successors of the root are labeled
  with $\mathit{empty}$;

\item all other nodes are labeled with a symbol from $\mathsf{RN}(\T_1) \times
  2^{\mn{CN}(\T_1)}$ or with $\mathit{empty}$;

\item if a node is labeled with $\mathit{empty}$, then so are all its successors.

\end{itemize}
A proper
$\Gamma$-labeled tree $(T,L)$ represents the following interpretation~$\I_{(T,L)}$:
$$
\begin{array}{@{}r@{}c@{}l}
  \Delta^{\I_{(T,L)}} &=& \mn{ind}(\Amc_1) \cup \{ x \in T \mid |x|>1\}\\[1mm]
  \Int[\I_{(T,L)}]{A} &=& \{ a \mid \exists x \in T: L(x)=(a,\type) \text{ with
  } A \in \type \}  \cup
  \{ x \in T \mid L(x)=(R,\type) \text{ with } A \in \type\}
  \\[1mm]
  \Int[\I_{(T,L)}]{R} &=& \{(a,b) \mid R(a,b) \in \A_1\} \cup{}\\[1mm]
  &&\{(a,ij) \mid ij \in T, L(i)=(a,\type_1), 
  L(ij)=(R,\type_2) \} \cup{} \\[1mm]
  &&\{(x,xi) \mid xi \in T,  L(x)=(S,\type_1), 
  L(xi)=(R,\type_2) \}.
\end{array}
$$
Note that $\I_{(T,L)}$ satisfies all required conditions to qualify as
a \fosh model of $\Tmc_1$ (except that it need not satisfy $\Tmc_1$),
and that its outdegree is bounded by $|\Tmc_1|$. Conversely, every
\fosh model of $\Tmc_1$ with outdegree bounded by $|\Tmc_1|$ can be
represented as a proper $m$-ary $\Gamma$-labeled tree.

\subsection{The automata construction}

The desired \TWAPA \Amf is assembled from the following three
automata:
\begin{itemize}

\item a \TWATA $\Amf_0$ that accepts an $m$-ary $\Gamma$-labeled tree iff it is proper;

\item a \TWATA $\Amf_1$ that accepts a proper $m$-ary $\Gamma$-labeled tree
  $(T,L)$ iff $\Imc_{(T,L)}$ is a model of $\Tmc_1$;

\item a \TWAPA $\Amf_2$ that accepts a proper $m$-ary $\Gamma$-labeled tree
  $(T,L)$ iff there is a model $\Imc_2$ of $\Kmc_2$ that is
  $\Sigma$-homomorphically embeddable into $\Imc_{(T,L)}$.

\end{itemize}
Then, $\L(\Amf_0) \cap \L(\Amf_1) \cap \overline{\L(\Amf_2)} = \emptyset$ iff
for each model $\Imc_1 \in \Mod^{\it fo}_{\K_{1}}$ there exists a model
$\Imc_2$ of $\Kmc_2$ that is $\Sigma$-homomorphically embeddable into $\Imc_1$.
We thus define \Amf to be the intersection of $\Amf_0$, $\Amf_1$, and the
complement of $\Amf_2$.

\smallskip

The construction of $\Amf_0$ is trivial, details are omitted.

\smallskip

The construction of $\Amf_1$ is quite standard \cite{CaEO07}.  Let $C_{\T_1}$
be the negation normal form (NNF) of the concept $$\midsqcap_{C \sqsubseteq D
  \in \Tmc_1} (\neg C \OR D)$$ and let $\mathsf{cl}(C_{\T_1})$ denote the set of
subconcepts of $C_{\T_1}$, closed under single negation.
Now, the
\TWATA $\Amf_{1} = \tup{Q, \Gamma, \delta, q_0}$ is defined by
setting
$$
\begin{array}{r@{}c@{}l}
  Q &=& \{q_0, q_1, q_\emptyset\} \cup
  \{q^{a,C}, q^{C}, q^{R}, q^{\neg R} \mid a \in \mn{ind}(\Amc_1),
  C \in \mathsf{cl}(C_{\T_1}), R \in \mathsf{RN}(\T_1)\}
\end{array}
$$
and defining the transition function $\delta$ as follows:
%
$$
\begin{array}{r@{~~}c@{~~}l}
  \delta(q_0,\mathit{root}) &=& %
  \displaystyle \bigwedge_{i=1}^m (i, q_1)
  \\
  \delta(q_1,\ell) &=&
  \displaystyle
  ((0, q_\emptyset) \lor (0,q^{C_{\T_1}})) \land \bigwedge_{i=1}^m (i, q_1)
  \\
  \delta(q^{\SOME{R}{C}}, (a,U)) &=&
  \displaystyle
  \bigvee_{i=1}^m ((i, q^{R}) \wedge (i, q^{C})) \lor
  \bigvee_{R(a,b) \in \A_1}(-1,q^{b,C})
  \\[5mm]
  \delta(q^{\forall R.C}, (a,U)) &=&
  \displaystyle
  \bigwedge_{i=1}^m ((i,q_{\emptyset}) \lor (i, q^{\NOT R}) \lor (i, q^{C})) \land
  \bigwedge_{R(a,b) \in \A_1}(-1,q^{b,C})
  \\[5mm]
  \delta(q^{a,C},\mathit{root}) &=&
  \displaystyle
  \bigvee_{i=1}^m  (i, q^{a,C})
  \\
  \delta(q^{a,C}, (a,U)) &=& (0,q^C)
  \\[2mm]
  \delta(q^{\SOME{R}{C}}, (S,U)) &=&
  \displaystyle
  \bigvee_{i=1}^m ((i, q^{R}) \wedge (i, q^{C}))
  \\[4mm]
  \delta(q^{\forall R.C}, (S,U)) &=&
  \displaystyle
  \bigwedge_{i=1}^m ((i,q_{\emptyset}) \lor (i, q^{\NOT R}) \lor (i, q^{C}))
  \\[4mm]
  \delta(q^{C \AND C'}, (x,U)) &=&
  (0, q^{C}) \land (0, q^{C'}) \\[2mm]
  \delta(q^{C \OR C'}, (x,U)) &=&
  (0, q^{C}) \lor (0, q^{C'}) \\[2mm]
  \delta(q^{A}, (x,U)) &=& \mathsf{true}, \text{ if }A\in U
  \\
  \delta(q^{\NOT A}, (x,U)) &=& \mathsf{true}, \text{ if }A\notin U
  \\
  \delta(q^{R}, (R,U)) &=& \mathsf{true}
  \\
  \delta(q^{\NOT R}, (S,U)) &=& \mathsf{true},  \text{ if }R\neq S
  \\
  \delta(q_\emptyset,\mathit{empty}) &=& \mn{true} \\
  \delta(q,\ell) &=& \mathsf{false} %
  \quad \text{ for all other }q \in Q,~\ell\in \Gamma.
\end{array}
$$
Where $x$ in the labels $(x,U)$ stands for an individual $a$ or for a role name
$S$, and $\ell$ in the second transition is any label from $\Gamma$. It is
standard to show that $\Amf_1$ accepts the desired tree language.


\smallskip

For constructing $\Amf_2$, we first introduce some preliminaries.
We use $\mathsf{cl}(\T_2)$ to denote the set of
subconcepts of (concepts in) $\T_2$, closed under single negation.
For each interpretation $\I = (\Delta^\I, \cdot^\I)$ and $d \in
\Delta^{\I}$, the \emph{$\T_2$-type of $d$ in $\I$}, denoted
$\type_{\T_2}^{\I}(d)$, is defined as
$
\type_{\T_2}^{\I}(d)= \{ C\in \mathsf{cl}(\T_2) \mid d\in C^{\I}\}.
$
A subset $\type \subseteq \mathsf{cl}(\T_2)$ is a \emph{$\T_2$-type}
if $\type = \type_{\T_2}^{\I}(d)$, for some model $\I$ of $\T_2$ and
$d\in \Delta^{\I}$.  With $\types(\T_2)$, we denote the set of all
$\T_2$-types.  Let $\type,\type' \in \types(\T_2)$.
For $\SOME{R}{C} \in \type$, we say that $\type'$ is an
\emph{$\SOME{R}{C}$-witness for $\type$} if $C \in \type'$ and $\midsqcap \type
\AND \SOME{R}{(\midsqcap \type')}$ is satisfiable w.r.t.\ $\T_2$. Denote by
$\succ_{\SOME{R}{C}}(\type)$ the set of all $\SOME{R}{C}$-witnesses for $\type$.
A \emph{completion of} $\K_2$ is a function $\tau \colon \ind(\A_2) \to
\types(\T_2)$ such that, for any $a \in \ind(\A_2)$, the KB
$$
\big(\T_2 \cup \bigcup_{a \in \ind(\A_2), C \in \tau(a)}A_a \ISA C, ~\A \cup \bigcup_{a \in \ind(\A_2)} A_a(a)\big)
$$
is consistent,
where $A_a$ is a fresh concept name for each $a \in\ind(A_2)$.
Denote by $\compl(\K_2)$ the set of all completions of~$\K_2$; it can be
computed in exponential time in $|\K_2|$.

We now construct the \TWAPA $\Amf_2$. It is easy to see that if there
is an assertion $R(a,b) \in \Amc_2 \setminus \Amc_1$ with $R \in
\Sigma$, then no model of $\Kmc_2$ is $\Sigma$-homomorphically
embeddable into a \fosh model of $\Kmc_1$. In this case, we choose
$\Amf_2$ so that it accepts the empty language.

Now assume that there is no such assertion. It is also easy to see
that any model $\Imc_2$ of $\Kmc_2$ such that some $a \in \ind(\K_2)
\setminus \ind(\K_1)$ occurs in the $\Imc_2$-extension of some
$\Sigma$-symbol is not $\Sigma$-homomorphically embeddable into
a \fosh model of $\Kmc_1$.  For this reason, we should only consider
completions of $\Kmc_2$ such that for all $a \in \ind(\K_2) \setminus
\ind(\K_1)$, $\tau(a)$ contains no $\Sigma$-concept names and no
existential restrictions $\exists R . C$ with $R \in \Sigma$. We use
$\compl_{\mn{ok}}(\Kmc_2)$ to denote the set of all such completions.
%
%
Now the \TWAPA
$\Amf_{2}=\tup{Q, \Gamma, \delta, q_0, c}$ is defined by setting
$$
\begin{array}{r@{~}c@{~}l}
Q &=& \{q_0\} \cup
\{q^{a,\type}, q^{R,\type}, f^{\type} \mid a \in \ind(\Amc_1),
\type \in \types(\T_2), R \in \mathsf{RN}(\T_2) \cap \Sigma\}
\end{array}
$$
%
%
and defining the transition function $\delta$ as follows:
$$
\begin{array}{r@{\,}c@{\,}l@{\,}l}
  %
  \delta(q_0,\mathit{root}) &=&
  \displaystyle
  \bigvee_{\tau \in \compl_{\mn{ok}}(\K_2)} %
  \bigwedge_{a \in \ind(\A_2) \cap \ind(\A_1)} %
  \bigvee_{i=1}^{m} (i, q^{a,\tau(a)}) \\[6mm]
  \delta(q^{a,\type}, (a,U)) &=&
  \displaystyle
  \bigwedge_{\substack{\SOME{R}{C} \in \type\\R \in \Sigma}} %
  \bigvee_{\boldsymbol{s}\in \succ_{\SOME{R}{C}}(\type)}
  \left(\bigvee_{i=1}^m
    (i, q^{R,\boldsymbol{s}}) \lor%
    \bigvee_{R(a,b) \in \Amc_1} (-1, q^{b,\boldsymbol{s}})
  \right)  \wedge
  &\displaystyle\bigwedge_{\substack{\SOME{R}{C} \in \type\\R \notin \Sigma}} %
  \bigvee_{\boldsymbol{s}\in \succ_{\SOME{R}{C}}(\type)} f^{\boldsymbol{s}}
  \\[8mm]
  \delta(q^{S,\type}, (S,U)) &=& 
  \displaystyle
\bigwedge_{\substack{\SOME{R}{C} \in \type\\R \in \Sigma}} %
  \bigvee_{\boldsymbol{s}\in \succ_{\SOME{R}{C}}(\type)}%
  \bigvee_{i=1}^m (i, q^{R,\boldsymbol{s}}) %
 &\displaystyle\bigwedge_{\substack{\SOME{R}{C} \in \type\\R \notin \Sigma}} %
  \bigvee_{\boldsymbol{s}\in \succ_{\SOME{R}{C}}(\type)}%
  f^{\boldsymbol{s}} %
\end{array}
$$
where the latter two transitions are subject to the conditions that
every $\Sigma$-concept name in $\type$ is also in $U$;
also put
\begin{align*}
  \delta(f^{\type}, (x,U)) &= (0,q^{x,\type}) \lor \bigvee_{i=1}^m (i,f^{\type}) \lor (-1,f^{\type})\\%
  \delta(f^{\type}, \mathit{root}) &= \bigvee_{i=1}^m (i,f^{\type}) \\%
  \label{hom-anon}
  \delta(q^{a,\type}, \mathit{root}) &= \bigvee_{i=1}^m (i,q^{a,\type}) \\
  \delta(q,\ell) &= \mathsf{false}
  \quad \text{ for all other }q \in Q \text{ and } \ell\in \Gamma,
\end{align*}
where $x$ stands for an individual $a$ or for a role name $S$. We observe that
the states $f^{\type}$ are used for finding non-deterministically the
homomorphic image of $\Sigma$-disconnected successors in the tree. Finally, we
set
$c(q) = 2$ for $q \in \{q_0,q^{a,\type},q^{R,\type}\}$ and $c(f^{\type})=1$.
\begin{lemma}
  $(T,L) \in L(\Amf_2)$ iff there is a model $\Imc_2$ of $\K_2$ such
  that $\Imc_2$ is $\Sigma$-homomorphically embeddable into
  $\I_{(T,L)}$.
\end{lemma}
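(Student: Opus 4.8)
The plan is to prove the two directions of the biconditional separately, in each case setting up an explicit correspondence between accepting runs of $\Amf_2$ and pairs $(\Imc_2,h)$ consisting of a model $\Imc_2$ of $\K_2$ together with a $\Sigma$-homomorphism $h$ from $\Imc_2$ into $\I_{(T,L)}$. The guiding intuition is that a run simultaneously traces out (i) the canonical ``type tree'' of a model of $\K_2$ determined by a completion $\tau \in \compl_{\mn{ok}}(\K_2)$ and a choice of existential witnesses from the sets $\succ_{\SOME{R}{C}}(\type)$, and (ii) the image of that type tree under $h$. Here a state $q^{x,\type}$ (with $x$ an individual $a$ or a role $S$) marks an element of $\Imc_2$ realizing $\type$ that is mapped to the current tree node, whereas the states $f^{\type}$ perform the search for the node that will host a $\Sigma$-disconnected successor.

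For the direction from right to left, I would start from $\Imc_2 \models \K_2$ and a $\Sigma$-homomorphism $h$, and set $\tau(a) = \type_{\T_2}^{\Imc_2}(a)$ for every $a \in \ind(\A_2)$. The earlier observations (no $\Sigma$-role assertion in $\A_2 \setminus \A_1$, and no $a \in \ind(\K_2)\setminus\ind(\K_1)$ in the extension of a $\Sigma$-symbol) guarantee $\tau \in \compl_{\mn{ok}}(\K_2)$, while consistency of the completion follows from $\Imc_2$ itself. I then build a run top-down: at the root I choose the disjunct for $\tau$ and route each $a \in \ind(\A_2)\cap\ind(\A_1)$ to the unique successor carrying a label $(a,U)$ in state $q^{a,\tau(a)}$, where $h(a)=a$ and preservation of $\Sigma$-concept names yield the side condition that $U$ contains every $\Sigma$-concept name of $\tau(a)$. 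At a state $q^{x,\type}$ I discharge each $\SOME{R}{C}\in\type$ via the witness $\boldsymbol{s}=\type_{\T_2}^{\Imc_2}(d)$ of an actual witness $d\in\Imc_2$: if $R\in\Sigma$ then $(h(x),h(d))\in R^{\I_{(T,L)}}$ forces $h(d)$ to be a genuine $R$-neighbour in the tree, matched by a move to the corresponding child $q^{R,\boldsymbol{s}}$ or, for an ABox edge $R(a,b)\in\A_1$, by the move $(-1,q^{b,\boldsymbol{s}})$ through the root; if $R\notin\Sigma$ the edge imposes no constraint, so I send $f^{\boldsymbol{s}}$ and let it roam directly to $h(d)$ and commit there. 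Because each roaming phase reaches its target in finitely many steps, the run is accepting.

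For the converse I would read a completion $\tau$ off the disjunct chosen at the root and assemble $\Imc_2$ as the unravelling of the run: one element for each occurrence of a $q$-state, labelled with the associated $\type$, with elements connected along the recorded witness steps, both the $\Sigma$-steps through $q^{R,\boldsymbol{s}}$ and the teleport steps through $f^{\boldsymbol{s}}$. That $\Imc_2\models\K_2$ follows by the standard quasimodel argument: every $\type$ is a $\T_2$-type, every $\SOME{R}{C}\in\type$ has a witness because some $\boldsymbol{s}\in\succ_{\SOME{R}{C}}(\type)$ was chosen (which also propagates the relevant universal restrictions), and consistency at the ABox individuals is exactly what $\tau\in\compl_{\mn{ok}}(\K_2)$ provides, while individuals of $\ind(\K_2)\setminus\ind(\K_1)$ and purely $\Sigma$-disconnected parts receive arbitrary images since $\compl_{\mn{ok}}$ forbids them any $\Sigma$-content. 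Defining $h$ to send each element to the tree node at which its $q$-state occurs, the side condition on $U$ and the tree-following of the $q^{R,\cdot}$-moves show that $h$ preserves $\Sigma$-concept names and $\Sigma$-roles.

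The step I expect to be the main obstacle is the exact matching of the parity condition, i.e.\ showing that ``accepting'' is equivalent to the existence of a genuine, total homomorphism. The odd colour on $f^{\type}$ is meant to force each teleportation search to terminate, so that an infinite run segment that keeps roaming without ever committing, corresponding to a ``phantom'' $\Sigma$-disconnected successor whose image is never actually fixed, is rejected; the even colour on the $q$-states, by contrast, must permit unbounded descent along $\Sigma$-roles, which is present in genuine infinite tree models of $\K_2$. Verifying that every accepting run yields only terminating teleport searches (so that $h$ is defined on all of $\Imc_2$) and, conversely, that the run built from $(\Imc_2,h)$ satisfies the colour condition, is the delicate part; here the bounded outdegree and regularity of $\I_{(T,L)}$, together with Theorem~\ref{thm:autostuff1}, keep the search spaces finite and allow extracting a regular witness when needed.
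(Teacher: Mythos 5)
Your overall strategy coincides with the paper's: in one direction you read a model and a $\Sigma$-homomorphism off an accepting run via the state annotations ($q^{a,\type}$ fixing the type of $a$, $q^{R,\type}$ creating a fresh tree element mapped to the current node, $f^{\type}$ locating the image of a $\Sigma$-disconnected successor), and in the other you build a run by using the pair $(\Imc_2,h)$ as a guide after normalizing $\Imc_2$ to a forest-shaped model. The paper's own proof is only a short sketch, so your proposal is, if anything, more detailed; in particular your reduction of the backward direction to a completion $\tau\in\compl_{\mn{ok}}(\K_2)$ plus witness choices is exactly what the paper intends.

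The one place where your argument does not go through as written is precisely the step you flag as delicate, and you should not let it pass by assertion. You claim the run built from $(\Imc_2,h)$ is accepting ``because each roaming phase reaches its target in finitely many steps.'' With the priorities as defined ($c(f^{\type})=1$ and $c(q)=2$ for all other states), finiteness of each individual roaming phase is not sufficient: a branch of the run that interleaves infinitely many (finite) roaming phases with $q$-states sees priority $1$ infinitely often, so the minimum of the infinitely occurring priorities is $1$ and the branch is rejecting. Such branches are unavoidable whenever $\T_2$ forces an infinite chain of existential restrictions over non-$\Sigma$ roles (e.g.\ $A\sqsubseteq\SOME{R}{A}$ with $R\notin\Sigma$): every type reached along the chain contains a non-$\Sigma$ existential, so every committed $q$-state must spawn a fresh $f$-state on the same branch. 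Your acceptance argument therefore needs either a proof that such branches can be avoided (they cannot in general) or a repaired priority assignment --- e.g.\ giving the $q$-states priority $0$, so that a branch is rejecting exactly when it eventually consists solely of $f$-states, i.e.\ when a single search fails to terminate, which is the semantics your prose actually describes. The same point affects your converse direction: under the stated priorities, acceptance forbids not just non-terminating searches but also infinitely many searches along a branch, which you should make explicit when arguing that the extracted structure is a model. Apart from this (which is arguably a defect of the construction rather than of your decomposition), the two directions are argued the same way as in the paper.
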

\begin{proof}
  ($\Rightarrow$) Given an accepting run $(T_r,r)$ for $(T,L)$, we can
  construct a \fosh model $\Imc_2$ of $\K_2$ and a
  $\Sigma$-homomorphism $h$ from $\Imc_2$ to $\I_{(T,L)}$.
  Intuitively, each node $y \in T_r$ with $r(y) = (x,q^{a,\type})$
  imposes that $a$ has type $\type$ in $\Imc_2$, and each node $y \in
  T_r$ with $r(y) = (x,q^{R,\type})$ imposes that $\Imc_2$ contains an
  element $y$ that belongs to a tree-shaped part of $\Imc_2$, is
  connected to its predecessor via $R$, and has type $\type$.
  The homomorphism $h$ is defined by choosing the identity on
  individual names,  and setting $h(y)=a$ when $r(y) =
    (x,q^{a,\type})$ and $h(y)=x$ when $r(y) = (x,q^{R,\type})$.

  ($\Leftarrow$) Assume that there is a model $\Imc_2$ of $\K_2$ such
  that $\Imc_2$ is $\Sigma$-homomorphically embeddable into
  $\I_{(T,L)}$. By the proof of Theorem~\ref{crit:KB}, we can assume
  $\Imc_2 \in \Mod^{\it fo}_{\K_{2}}$. It is now straightforward to construct an
  accepting run for $(T,L)$ by using $\Imc_2$ as a guide.
\end{proof}

It is easy to verify that the constructed automaton \Amf has only single
exponentially many states. Thus checking its emptiness can be done in
2\ExpTime.

\subsection{Strengthening of the model-theoretic characterization}

By Rabin \cite{Rabin1972}, whenever $\L(\Amf)$ is non-empty, we actually have a
regular forest-shaped model $\Imc_{1}$ of $\K_{1}$ of bounded outdegree such
that no model of $\K_{2}$ is $\Sigma$-homomorphically embeddable into
$\Imc_{1}$. By Lemma~\ref{lem:1} and Theorem~\ref{crit:KB}~(1) we then have
that $\L(\mathfrak{A})$ is non-empty iff $\K_{1}$ does not $\Sigma$-UCQ entail
$\K_{2}$.  We thus obtain the following strengthening of the model-theoretic
characterization:

\begin{theoremnum}{\ref{UCQ-full-hom}}
  $\mathcal{K}_{1}$ $\Sigma$-UCQ entails $\K_{2}$ iff for all models $\Imc_{1}$
  of $\K_{1}$ there exists a model $\I_{2}$ of $\K_{2}$ that is
  $\Sigma$-homomorphically embeddable into $\Imc_{1}$.
\end{theoremnum}
\begin{proof}
  Assume there exists a model $\Imc_{1}$ of $\K_{1}$ for which this is not the
  case. Then there exists a forest-shaped model of $\K_{1}$ of bounded
  outdegree for which this is not the case. Then $\L(\mathfrak{A})$ is
  non-empty. Then $\L(\mathfrak{A})$ contains a regular tree. Then there exists
  a regular forest-shaped model of $\K_{1}$ of bounded outdegree for which this
  is not the case. By Lemma~\ref{lem:1} and Theorem~\ref{crit:KB} (1) we obtain
  that $\mathcal{K}_{1}$ does not $\Sigma$-UCQ entail $\K_{2}$. The converse
  direction is clear.
\end{proof}

It thus remains to invoke Point~3 of Theorem~\ref{thm:autostuff1} to obtain the
upper bound in Theorem~\ref{thm:ucq}.

\subsection{2\ExpTime lower bound}

We reduce the word problem of exponentially space bounded alternating Turing
machines (ATMs), see \cite{ChandraKS81}. Let $M = (\Gamma,Q,q_0,q_{a}, q_{r},\delta)$ be an ATM with a tape alphabet
  $\Gamma$, a set of states $Q$ partitioned into existential $Q_\exists$ and
  universal $Q_\forall$ states, an initial state $q_0 \in Q_\exists$, an
  accepting state $q_{a}\in Q$, a rejecting state $q_{r}\in Q$ (all three are distinct), and a transition function:
  \begin{equation*}
    \delta\colon (Q\setminus\{q_{a}, q_{r}\})\times \Gamma \times \{0,1\} \to
    Q \times \Gamma \times\{-1,+1\},
  \end{equation*}
  which, for a state $q$ and symbol $a$, gives two instructions, $\delta(q,\sigma,0)$
  and $\delta(q,\sigma,1)$ (we will also denote them by $\delta_0(q,\sigma)$ and $\delta_1(q,\sigma)$, respectively). We assume that existential and universal states strictly
  alternate: any transition from an existential state leads to a universal
  state, and vice versa. Moreover, we assume that any run of $M$ on every input stops either in $q_{a}$ or $q_{r}$ \nb{V: new}.
  
Let $w$ be an input to $M$.  We aim to construct \ALC TBoxes $\Tmc_1$ and
$\Tmc_2$ and a signature $\Sigma$ such that the following are equivalent:
\begin{enumerate}

\item There is a model $\I_1$ of $\K_1 = (\Tmc_1,\{A(a)\})$ such that no model of
  $(\Tmc_2,\{A(a)\})$ is $\Sigma$-homomorphically embeddable into $\Imc_1$;

\item $M$ accepts $w$.

\end{enumerate}

The models of $\K_1$ encode all possible sequences of configurations of $M$
starting from the initial one.  Hence, most of the models do not correspond to
correct runs of $M$. The branches of the models stop at the accepting and
rejecting states.
On the other hand, the models of $\K_2$ encode all possible copying defects,
after the first step of the machine, or after the second step, and so on, or
detect valid (hence without copying defects) but rejecting runs.
Then, if there exists a finite model $\I_1$ of $\K_1$ such that no model of
$\K_2$ is $\Sigma$-homomorphically embeddable into $\I_1$, then $\I_1$
represents a valid accepting run of $M$.

The signature $\Sigma$ contains the following symbols:
\begin{enumerate}

\item concept names $A_0,\dots,A_{n-1}$ and
  $\overline{A}_0,\dots,\overline{A}_{n-1}$ that serve as bits in the binary
  representation of a number between 0 and $2^{n}-1$, identifying the position
  of tape cells inside configuration sequences ($A_0$, $\overline{A}_0$
  represent the lowest bit);

\item the concept names $A_\sigma$, for each $\sigma \in \Gamma$;

\item the concept names $A_{q,\sigma}$, for each $\sigma \in \Gamma$ and $q \in
  Q$;

\item concept names $X_0,X_1$ to distinguish the two successor configurations;

\item the role names $R$, $S$; $R$ is used to connect the successor configurations, whereas $S$ is used to connect a root of each configuration with symbols that occur in the cells of it. 
\end{enumerate}
Moreover, we use the following auxilary symbols not in $\Sigma$:
\begin{enumerate}
\item $B_i, B_\sigma, B_{q,\sigma}$; $G_i, G_\sigma, G_{q,\sigma}$; $C_\sigma, C_{q,\sigma}$, for $q, \sigma$ as above, $0 \leq i \leq n-1$,

\item $L_i^\ell$, $D_\textit{rej}^\ell$, $D_\textit{trans}^\ell$, $\textit{Counter}^\ell_m$, for $\ell = 0,1$, $m = -1, +1$,

\item $K$, $\textit{Stop}$, $K_0$, $Y$, $D$, $\bar{D}$,  $D_\textit{trans}$, $D_\textit{copy}$, $D_\textit{conf}$, $E$, $E_B$, $E_G$. 
\end{enumerate}
$\T_1$ contains the axioms:
\[
\begin{array}{r@{~}l}
  A \sqsubseteq & \exists R. (X_0 \sqcap K) \sqcap \exists R. (X_1 \sqcap K)\\
  (X_0 \sqcup X_1) \sqcap \neg \textit{Stop} \sqsubseteq& \exists R. (X_0 \sqcap K) \sqcap \exists R. (X_1 \sqcap K)\\[2mm]
  K \sqsubseteq& \exists S. (L_0^0 \sqcap \overline{A}_0) \sqcap \exists S. (L_0^1 \sqcap A_0)\\
  L_i^\ell \sqsubseteq& \exists S. (L_{i+1}^0 \sqcap \overline{A}_{i+1}) \sqcap \exists S. (L_{i+1}^1 \sqcap A_{i+1}) \text{ for }0\leq i\leq n-2, \ell=0,1\\
  L_{n-1}^k \sqsubseteq& \bigsqcup_{\sigma \in \Gamma} (A_\sigma \sqcup \bigsqcup_{q \in Q} A_{q,\sigma})\\[2mm]

  A_{\sigma_1} \sqcap A_{\sigma_2} \sqsubseteq& \bot \text{ for }\sigma_1\neq\sigma_2\\
  A_{\sigma_1} \sqcap A_{q_2,\sigma_2} \sqsubseteq& \bot\\
  A_{q_1,\sigma_1} \sqcap A_{q_2,\sigma_2} \sqsubseteq& \bot \text{ for }(q_1,\sigma_1)\neq(q_2,\sigma_2)\\[2mm]
  A_i \sqsubseteq& \forall S. A_i\\
  \overline{A}_i \sqsubseteq& \forall S. \overline{A}_i\\[2mm]
  \exists S^n. A_{q_a,\sigma} \sqsubseteq& \textit{Stop}\\
  \exists S^n. A_{q_r,\sigma} \sqsubseteq& \textit{Stop}\\
\end{array}
\]
where $\exists S^n.A$ is an abbreviation for the concept $\exists S. \exists S \dots \exists S.A$ ($S$ occurs $n$ times). The models of $\K_1$ look as follows:

\begin{center}
  \begin{tikzpicture}[xscale=1.4, yscale=1.2]
    \node[point, constant, label=left:{$a$}, label=below:{\small$K_0$}] (a) at
    (0,0) {};

    \foreach \al/\x/\y/\conc/\wh in {%
      x0/1/0.5/X_0/below,%
      x1/1/-0.5/X_1/above,%
      x00/2/0.9/X_0/above,%
      x01/2/0.3/X_1/above,%
      x10/2/-0.3/X_0/below,%
      x11/2/-0.9/X_1/below%
    }{ \node[point, label=\wh:{\scriptsize$\conc$}] (\al) at (\x,\y) {}; }

    \foreach \al/\down in {%
      a/1, x0/-1, x1/1%
    }{ \draw[gray] (\al) -- ++(0.4,\down*-0.7) -- ++(-0.8,0) -- (\al); }

    \foreach \from/\to/\wh in {%
      a/x0/above, a/x1/below, x0/x00/above, x0/x01/below, x1/x10/above,
      x1/x11/below%
    }{ \draw[role] (\from) -- node[\wh] {\scriptsize $R$} (\to); }

    \foreach \from in {%
      x00, x01, x10, x11%
    }{ \draw[dashed] (\from) -- ++(0.7,0); }
  \end{tikzpicture}
\end{center}
where the gray triangles are the trees encoding configurations rooted at $K$
except for the initial configuration. These trees are binary trees of depth
$n$, where each leaf represents a tape cell. The initial configuration is
encoded at $a$. For $w = \sigma_1\dots \sigma_m$, $\T_1$ contains the axioms:

\[
\begin{array}{r@{~}l}
  A \sqsubseteq &  \exists S. (L_0^0 \sqcap \overline{A}_0 \sqcap K_0) \sqcap \exists S. (L_0^1 \sqcap A_0 \sqcap K_0)\\
  K_0 \sqsubseteq& \forall S. K_0\\
  K_0 \sqcap (\textsf{val}_A=0) \sqsubseteq& A_{q_0,\sigma_1}\\
  K_0 \sqcap (\textsf{val}_A=i) \sqsubseteq& A_{\sigma_{i+1}} \text{ for }1 \leq i \leq m-1\\
  K_0 \sqcap (\textsf{val}_A\geq m) \sqsubseteq& A_{\square}\\
\end{array}
\]
where $(\textsf{val}_A=j)$ denotes the conjunction over $A_i,\overline{A}_i$
expressing the fact that the value of the $A$-counter is $j$, for $j \leq 2^n-1$.
Let 
\begin{align*}
\textsf{pos}^B&=(\overline{B}_0 \sqcup B_0) \sqcap \cdots \sqcap (\overline{B}_{n-1} \sqcup
B_{n-1}),\\
\textsf{state}_\forall^B& = \bigsqcup_{q\in Q_{\forall}, \sigma \in \Gamma} B_{q,\sigma},\\
\textsf{state}_\exists^B& = \bigsqcup_{q\in Q_{\exists}, \sigma \in \Gamma} B_{q,\sigma},\\
\textsf{symbol}^B&=\bigsqcup_{\sigma \in \Gamma} B_{\sigma}
\end{align*}
and analogously define $\textsf{pos}^G$, $\textsf{state}_\forall^G$, $\textsf{state}_\exists^G$, $\textsf{symbol}^G$.

$\T_2$ contains the following axioms:

\[
\begin{array}{r@{~}l}
  A \sqsubseteq &  \exists R. (X_0 \sqcap Y) \sqcup \exists R. (X_1 \sqcap Y) \sqcup D_\textit{rej}^0,\\
  Y \sqcap \overline{D} \sqsubseteq &  \exists R. (X_0 \sqcap Y) \sqcup \exists R. (X_1 \sqcap Y),\\
  Y \sqsubseteq& D \sqcup \overline{D}, \qquad D \sqcap \overline{D} \sqsubseteq \bot,\\
  D\sqsubseteq& D_{\textit{trans}} \sqcup D_{\textit{copy}} \sqcup D_{\textit{conf}},\\
\end{array}
\]
where $\ell = 0,1$, $\sigma \in \Gamma$.

$D_{\textit{trans}}$ encodes defects in executing transitions. It guesses the
(correct) position of the head, the symbol under it and the state by means of
the concepts $\textsf{pos}^B$ and $\textsf{state}_\forall^B$ or
$\textsf{state}_\exists^B$. This information is stored in the symbols
transparent to $\Sigma$ ($B_x$ and $\overline{B}_x$). Later we ensure that
symbols $B_x$ and $\overline{B}_x$ are propagated via the $S$-successors.
\[
\begin{array}{r@{~}l}
  D_{\textit{trans}} \sqsubseteq& \textsf{pos}^B \sqcap
                                  \exists S^n.E
                                  \sqcap \big((D_{\textit{trans}}^0 \sqcap D_{\textit{trans}}^1 \sqcap \textsf{state}_\exists^B)
                                  \sqcup  ((D_{\textit{trans}}^0 \sqcup D_{\textit{trans}}^1) \sqcap \textsf{state}_\forall^B)\big)\\

  D_{\textit{trans}}^\ell \sqsubseteq& \exists R. (X_\ell \sqcap \exists S^n.E).
  \end{array}
\]
We assume that here and everywhere below $\ell = 0,1$. For existential states both $X_0$ and $X_1$ successors must be ``defected'',
while for universal states at least of them.
The defected value at the successor configuration is stored in symbols $C_x^\ell$,
while the relative position of the defect is stored in $\textit{Counter}_m^\ell$
for $m \in \{-1,0,+1\}$. For $\delta_\ell(q,\sigma)=(q',\sigma',m)$,
$m \in \{-1,+1\}$,
\[
\begin{array}{r@{~}l}
 B_{q,\sigma} \sqcap D_{\textit{trans}}^\ell \sqsubseteq & (\textit{Counter}_{0}^\ell \sqcap \bigsqcup_{\sigma'' \in \Gamma \setminus \{ \sigma'\}} C_{\sigma''}^\ell) \sqcup
                                        (\textit{Counter}_m^\ell \sqcap \bigsqcup_{\sigma'' \in\Gamma} (C_{\sigma''}^\ell \sqcup
                                        \bigsqcup_{p \in Q \setminus \{q'\}}C_{p,\sigma''}^\ell)).
\end{array}
\]
The position of the defect is passed along the $R$-successor as follows:
\[
\begin{array}{r@{~}l}
  \textit{Counter}_{+1}^\ell \sqcap \overline{B}_k \sqcap B_{k-1} \sqcap \cdots \sqcap B_0
  \sqsubseteq& \forall R. (\neg X_\ell \sqcup (B_k \sqcap \overline{B}_{k-1} \sqcap \cdots \sqcap \overline{B}_0)) \text{ for } n > k \geq 0 \\
  \textit{Counter}_{+1}^\ell \sqcap \overline{B}_j \sqcap \overline{B}_{k}
  \sqsubseteq& \forall R. (\neg X_\ell \sqcup \overline{B}_{j}) \text{ for } n > j > k\\
  \textit{Counter}_{+1}^\ell \sqcap B_j \sqcap \overline{B}_{k}
  \sqsubseteq& \forall R. (\neg X_\ell \sqcup B_{j}) \text{ for } n > j > k\\[2mm]

  \textit{Counter}_{-1}^\ell \sqcap B_k \sqcap \overline{B}_{k-1} \sqcap \cdots \sqcap \overline{B}_0
  \sqsubseteq& \forall R. (\neg X_\ell \sqcup (\overline{B}_k \sqcap B_{k-1} \sqcap \cdots \sqcap B_0)) \text{ for } n > k \geq 0 \\
  \textit{Counter}_{-1}^\ell \sqcap \overline{B}_j \sqcap B_{k}
  \sqsubseteq& \forall R. (\neg X_\ell \sqcup\overline{B}_j) \text{ for } n > j > k\\
  \textit{Counter}_{-1}^\ell \sqcap B_j \sqcap B_{k}
  \sqsubseteq& \forall R. (\neg X_\ell \sqcup B_j) \text{ for } n > j > k\\[2mm]

 \textit{Counter}_{0}^\ell \sqcap  B \sqsubseteq& \forall R. (\neg X_\ell \sqcup B) \text{ for }B \in \{B_i, \overline{B}_i \mid 0 \leq i \leq n-1\}
\end{array}
\]
The defect is copied via $R$ as follows:
\[
\begin{array}{r@{~}l}
 C_x^\ell \sqsubseteq& \forall R. (\neg X_\ell \sqcup B_x), \quad x \in \{ (q,\sigma), \sigma \mid q \in Q, \sigma \in \Gamma \}
\end{array}
\]
All symbols $B_x$ and $\overline{B}_x$ are propagated down the $S$-successors,
and at the concept $E$ they are copied into the symbols $A_x$ and $\overline{A}_x$:
\[
\begin{array}{r@{~}l}
  B_x \sqsubseteq \forall S. B_x, \ & \bar{B}_i \sqsubseteq \forall S. \bar{B}_i,\\
  E \sqcap B_x \sqsubseteq A_x, \ & x \in \{0 \dots n-1\} \cup \{ (q,\sigma), \sigma \mid q \in Q, \sigma \in \Gamma \}\\
  E \sqcap \overline{B}_i \sqsubseteq \overline{A}_i, \ & 0 \leq i \leq n-1.\\[2mm]
\end{array}
\]
A model of a transition defect can be depicted as follows, for $n=3$ and
$\delta_1(q_1,\sigma_1) = (q_2, \sigma_2, R)$:

\begin{center}
  \begin{tikzpicture}[xscale=1.4, yscale=0.8]

    \foreach \al/\x/\y/\conc/\wh in {%
      x0/0/1/D_{\textit{trans}}/above,%
      x1/0/0//below,%
      x2/0/-1//above,%
      x3/0/-2/{A_2,\overline{A}_1, A_0, A_{q_1,\sigma_1}}/left,%
      y0/1/1/X_1/above,%
      y1/1/0//above,%
      y2/1/-1//below,%
      y3/1/-2/{A_2,A_1, \overline{A}_0, A_{q_3,\sigma_3}}/right%
    }{ \node[point, label=\wh:{\scriptsize$\conc$}] (\al) at (\x,\y) {}; }

    \foreach \from/\to/\wh in {%
      x0/x1, x1/x2, x2/x3, y0/y1, y1/y2, y2/y3%
    }{ \draw[role] (\from) -- node[left] {\scriptsize $S$} (\to); }

    \draw[role] (x0) -- node[above] {\scriptsize $R$} (y0);

   \end{tikzpicture}
\end{center}

$D_{\textit{copy}}$ encodes defects in copying the symbols that are not under
the head. It guesses the symbol and its position, and also the position and the
state of the head. The latter is stored using $G$-symbols and needed to know
whether the state is existential or universal.
\[
\begin{array}{r@{~}l}
  D_{\textit{copy}} \sqsubseteq& \textsf{pos}^B \sqcap \textsf{symbol}^B \sqcap \exists S^n.E_B \sqcap \\
  & \textsf{pos}^G \sqcap
          \big((D_{\textit{copy}}^0 \sqcap
          D_{\textit{copy}}^1 \sqcap\textsf{state}_\exists^G) \sqcup
          ((D_{\textit{copy}}^0 \sqcup
          D_{\textit{copy}}^1)\sqcap\textsf{state}_\forall^G)\big)
          \sqcap \exists S^n.E_G \sqcap \\
  & (\textsf{val}_B \neq \textsf{val}_G)\\

  D_{\textit{copy}}^\ell \sqsubseteq& \exists R. (X_\ell \sqcap \exists S^n.E)\\
  B_{\sigma} \sqcap D_{\textit{copy}}^\ell \sqsubseteq &
                                                      \textit{Counter}_{0}^\ell
                                                      \sqcap \bigsqcup_{\sigma'\in\Gamma, \sigma'\neq\sigma} (C_{\sigma'}^\ell \sqcup
                                                      \bigsqcup_{q \in
                                                      Q}C_{q,\sigma'}^\ell)\\
\end{array}
\]
where $(\textsf{val}_B \neq \textsf{val}_G)$ stands for $(B_0 \sqcap \overline{G}_0)
\sqcup (G_0 \sqcap \overline{B}_0) \sqcup \cdots \sqcup (B_{n-1} \sqcap
\overline{G}_{n-1}) \sqcup (G_{n-1} \sqcap \overline{B}_{n-1})$.
Similarly to $B$-symbols, $G_x$ and $\overline{G}_x$ symbols are copied via the
$S$-successors. At $E_B$ we only copy $B$-symbols to $A$-symbols, while at
$E_G$ we only copy $G$-symbols to $A$-symbols.

\[
\begin{array}{r@{~}l}
  G_i \sqsubseteq \forall S. G_i, & \ \overline{G}_i \sqsubseteq \forall S. \overline{G}_i, \ G_{q,\sigma} \sqsubseteq \forall S. G_{q, \sigma},\\
  E_B \sqcap B_i \sqsubseteq A_i, & \ E_B \sqcap B_{q, \sigma} \sqsubseteq A_{q, \sigma},\\
  E_B \sqcap \overline{B}_i \sqsubseteq \overline{A}_i, &\\
  E_G \sqcap G_i \sqsubseteq A_i, & \ E_G \sqcap G_{q, \sigma} \sqsubseteq A_{q, \sigma},\\
  E_G \sqcap \overline{G}_i \sqsubseteq \overline{A}_i, & \text{ for } 0 \leq i \leq n-1,\ q \in Q,\ \sigma \in \Gamma.
\end{array}
\]
A model of a copying defect can be depicted as follows, for $n=3$ and
$q_1 \in Q_\exists$:

\begin{center}
  \begin{tikzpicture}[xscale=1.4, yscale=0.8]

    \foreach \al/\x/\y/\conc/\wh in {%
      x0/0/1/D_{\textit{copy}}/above,%
      x1/-0.5/0//below,%
      x2/-0.5/-1//above,%
      x3/-0.5/-2/{\overline{A}_2,A_1, A_0, A_{q_1,\sigma_1}}/left,%
      xx1/0/-0.5//below,%
      xx2/0/-1.5//above,%
      xx3/0/-2.5/{A_2,A_1, \overline{A}_0, A_{\sigma}}/left,%
      y0/1.5/0.5/X_1/above,%
      y1/1.5/-0.5//above,%
      y2/1.5/-1.5//below,%
      y3/1.5/-2.5/{A_2,A_1, \overline{A}_0, A_{q_2,\sigma}}/right,%
      yy0/2/1.5/X_0/above,%
      yy1/2/0//above,%
      yy2/2/-1//below,%
      yy3/2/-2/{A_2,A_1, \overline{A}_0, A_{\sigma_2}}/right%
    }{ \node[point, label=\wh:{\scriptsize$\conc$}] (\al) at (\x,\y) {}; }

    \foreach \from/\to/\wh in {%
      x0/x1, x1/x2, x2/x3, x0/xx1, xx1/xx2, xx2/xx3, %
      y0/y1, y1/y2, y2/y3, yy0/yy1, yy1/yy2, yy2/yy3%
    }{ \draw[role] (\from) -- node[left] {\scriptsize $S$} (\to); }

    \draw[role] (x0) -- node[above] {\scriptsize $R$} (y0);
    \draw[role] (x0) -- node[above] {\scriptsize $R$} (yy0);

   \end{tikzpicture}
\end{center}

$D_{\textit{conf}}$ is a ``local'' defect that encodes incorrect
configurations, that is, configurations with at least two heads on the tape.
\[
\begin{array}{r@{~}l}
  D_{\textit{conf}} \sqsubseteq& \textsf{pos}^B \sqcap
                                 (\textsf{state}_\exists^B\sqcup\textsf{state}_\forall^B)
                                 \sqcap \exists S^n.E_B \sqcap \\
          & \textsf{pos}^G \sqcap
          (\textsf{state}_\exists^G\sqcup\textsf{state}_\forall^G)
          \sqcap \exists S^n.E_G \sqcap (\textsf{val}_B \neq \textsf{val}_G)\\
\end{array}
\]

Finally, we use $D_\textit{rej}^0$, $D_\textit{rej}^1$ and $D_\textit{rej}$ to
detect the fact that $M$ rejects $w$. \nb{V: picture here?}

\[
\begin{array}{r@{~}l}
  D_{\textit{rej}}^0 \sqsubseteq &  
  \displaystyle\bigsqcap_{\ell\in\{0,1\}} \exists R. (X_\ell \sqcap (D_\textit{rej}^1 \sqcup D_{\textit{rej}})),\\
  D_{\textit{rej}}^1 \sqsubseteq &  \exists R.(D_\textit{rej}^0 \sqcup D_\textit{rej}),\\
  D_{\textit{rej}} \sqsubseteq & \bigsqcup_{\sigma\in\Gamma}\exists S^n.A_{q_r, \sigma}
\end{array}
\]
A model of a rejecting ``defect'' can be depicted as follows:

\begin{center}
  \begin{tikzpicture}[xscale=1.6, yscale=1.2]
    \node[point, constant, label=left:{\scriptsize$D_{\textit{rej}}^0$}] (a) at
    (0,0) {};

    \foreach \al/\x/\y/\conc/\wh in {%
      x0/1/0.5/{X_0,D_{\textit{rej}}^1}/above,%
      x1/1/-0.5/{X_1,D_{\textit{rej}}^1}/below,%
      x20/2/0.5/{D_{\textit{rej}}^0}/above,%
      x21/2/-0.5/{D_{\textit{rej}}}/right,%
      x30/4/0.9/{X_0,D_{\textit{rej}}}/right,%
      x31/3/0.3/{X_1,D_{\textit{rej}}}/right%
    }{ \node[point, label=\wh:{\scriptsize$\conc$}] (\al) at (\x,\y) {}; }

    \foreach \from/\to/\wh in {%
      a/x0/above, a/x1/below, x0/x20/above, x1/x21/below, x20/x30/above,
      x20/x31/below%
    }{ \draw[role] (\from) -- node[\wh] {\scriptsize $R$} (\to); }

    \foreach \at/\i in {%
      x21/2,x30/3,x31/1%
    }{ %
      \node[point,yshift=-0.5cm] (s1) at (\at) {}; 
      \node[point,yshift=-1cm] (s2) at (\at) {};
      \node[point,yshift=-1.5cm, label=right:{\scriptsize$A_{q_r,\sigma_\i}$}] (s3) at (\at) {};
      \draw[role] (\at) -- node[left] {\scriptsize $S$} (s1);
      \draw[role] (s1) -- node[left] {\scriptsize $S$} (s2);
      \draw[role] (s2) -- node[left] {\scriptsize $S$} (s3);
    }

  \end{tikzpicture}
\end{center}

Note that some models of $\K_2$ are infinite paths or trees that do not ``realise'' any
defect. Such models of $\K_2$ will not be $\Sigma$-homomorphically embeddable
into the models of $\K_1$ representing valid accepting runs. 

It follows from what was said above and from Theorem~\ref{rUCQ-full-hom} that
$M$ accepts $w$ iff $\K_1$ does not rUCQ-entail $\K_2$. Thus, we obtain the
lower bound in Theorem~\ref{thm:rucq}. \nb{V: OK?} Moreover, it can be readily verified by comparing (1) and (2) of Theorem~\ref{crit:KB}, that $\K_1$ rUCQ-entails $\K_2$ iff $\K_1$ UCQ-entails $\K_2$. Thus, we also obtain the lower bound in Theorem~\ref{thm:ucq}.

\section{Proof of Thereoms~\ref{thm:first} and \ref{thm:second}}

In this section we prove the semantic characterizations given in Theorem~\ref{thm:first} and
Theorem~\ref{thm:second}.
In what follows we assume that \hALC TBoxes are given in \emph{normal form} with concept inclusions
of the following form:
$$
A\sqsubseteq B, \quad A_{1} \sqcap A_{2} \sqsubseteq B, \quad \exists R.A \sqsubseteq B
$$
and
$$
A \sqsubseteq \bot, \quad \top \sqsubseteq B, \quad A \sqsubseteq \exists R.B, \quad A \sqsubseteq \forall R.B
$$
where $A,B$ range over concept names.
We define the canonical model $\I_{\T,\A}$ of a consistent \hALC KB $\K=(\T,\A)$ with $\T$ in normal form in the standard way using
a chase procedure. Consider the following rules that are applied to ABox $\Amc$:
\begin{enumerate}
\item if $A(a)\in \A$ and $A\sqsubseteq B\in\T$, then add $B(a)$ to $\A$;
\item if $A_{1}(a)\in A$ and $A_{2}(a)\in \A$ and $A_{1} \sqcap A_{2} \sqsubseteq B\in \T$, then add $B(a)$ to $\A$;
\item if $R(a,b)\in A$ and $A(b)\in \A$ and $\exists R.A \sqsubseteq B \in \T$, then add $B(a)$ to $\A$;
\item if $a\in {\sf ind}(\A)$ and $\top \sqsubseteq B\in \T$, then add $B(a)$ to $\A$;
\item if $A(a)\in \A$ and $A \sqsubseteq \exists R.B\in \T$ and there are no $R(a,b),B(b)\in \A$, then
add assertions $R(a,b),B(b)$ to $\A$ for a fresh $b$;
\item if $A(a)\in \A$ and $R(a,b)\in \A$ and $A \sqsubseteq \forall R.B\in \T$, then add $B(b)$ to $\A$.
\end{enumerate}
Denote by $\A^{c}$ the (possibly infinite) ABox resulting from $\A$ by applying these rules
exhaustively to $\A$. Then the canonical model $\I_{\T,\A}$ is the interpretation defined by $\A^c$.

\medskip
\noindent
We now come to the proof of Theorem~\ref{thm:first}.

\begin{theoremnum}{\ref{thm:first}}
Let $\T_{1}$ be an \ALC TBox, $\T_{2}$ a \hALC TBox, and
$\Theta=(\Sigma_{1},\Sigma_{2})$.  Then $\T_{1}$ $\Theta$-rCQ-entails $\T_{2}$
iff for all tree-shaped $\Sigma_{1}$-ABoxes $\A$ of outdegree bounded by
$|\T_{2}|$ and consistent with $\T_{1}$ and $\T_{2}$, $\I_{\T_{2},\A}$ is
con-$\Sigma_{2}$-homomorphically embeddable into any model $\I_{1}$ of
$(\T_{1},\A)$.
\end{theoremnum}
\begin{proof}
It is known that \hALC is unravelling tolerant \cite{lutz2012non}, that is, if $(\T,\A)\models C(a)$ for a \hALC TBox $\T$ and $\mathcal{EL}$-concept $C$, then
$(\T,\A')\models C(a)$ for a finite subABox $\A'$ of the tree-unravelling $\A^{u}$ of $\A$ at $a$. Thus, any witness ABox for non-entailment w.r.t.~$\mathcal{EL}$-instance queries
can be transformed into a tree-shaped witness ABox. By Theorem~\ref{rUCQ-full-hom} it is therefore sufficient to prove
that if $\T_{1}$ does not $\Theta$-rCQ-entail $\T_{2}$, then
this is witnessed by an $\mathcal{EL}$-instance query $C(a)$.

\medskip
\noindent
{\bf Claim}. If $\T_{1}$ does not $\Theta$-rCQ-entail $\T_{2}$, then there exists a $\Sigma_{1}$-ABox $\A$
and an $\mathcal{EL}$-concept $C$ over $\Sigma_{2}$ such that $\T_{2},\A \models C(a)$ and $\T_{1},\A\not\models C(a)$ for some $a\in {\sf ind}(\A)$.

\medskip
\noindent
Assume $\Amc$ is a $\Sigma_{1}$-ABox and $q(\vec{x})$ a $\Sigma_{2}$-rCQ such that
$\Tmc_{2},\Amc\models q(\vec{a})$ but $\Tmc_{1},\Amc\not\models q(\vec{a})$.

First we show that there exists a $\Sigma_{2}$-CQ $q'(\vec{z})$ such that
$\Tmc_{2},\Amc\models q'(\vec{b})$ but $\Tmc_{1},\Amc\not\models q'(\vec{b})$ for some $\vec{b}$ and, moreover,
there exists a match $\pi$ for $q'$ in $\Imc_{\Tmc_{2},\Amc}$ witnessing this such that no quantified
variable in $q'$ is mapped to ${\sf ind}(\Amc)$. Let $\pi$ be a match for $q(\vec{x})$
in $\Imc_{\Tmc_{2},\Amc}$. Assume $q(\vec{x}) = \exists \vec{y}\varphi(\vec{x},\vec{y})$.
Let $\vec{y}_{1}$ be the additional variables mapped by $\pi$ to elements of ${\sf ind}(\Amc)$
and let $q'(\vec{x},\vec{y_{1}})= \exists \vec{y}_{2}\varphi(\vec{x},\vec{y})$, where $\vec{y}_{2}$
are the remaining variables in $\vec{y}$ without $\vec{y}_{1}$. Then $\Tmc_{2},\Amc\models q'(\pi(\vec{x},\vec{y}_{1}))$
but $\Tmc_{1},\Amc\not\models q'(\pi(\vec{x},\vec{y}_{1}))$. Clearly $q'$ is as required.

We can decompose $q'$ into
\begin{itemize}
\item a quantifier-free core $q_{0}$ containing all $A(x)$ and $r(x,y)$ in $q'$ such that $x,y$ are answer variables;
\item queries $q_{1}(x_{1}),\ldots,q_{n}(x_{n})$ that each have exactly one answer variable.
\end{itemize}
We distinguish the following cases:
\begin{itemize}
\item $\Tmc_{1},\Amc\not\models q_{0}(\pi(\vec{x},\vec{y}_{1}))$: in this case we find a single concept
name $A\in \Sigma_{2}$ and $a\in {\sf ind}(\Amc)$ such that $\Tmc_{2},\Amc \models A(a)$ and $\Tmc_{1},\Amc \not\models A(a)$.
\item there exists $1\leq i \leq n$ such that $\Tmc_{1},\Amc \not\models q_{i}(\pi(x_{i}))$.
Let $C_{i}$ be the image of $q_{i}$ under $\pi$ ($\pi$ maps all variables from $q_{i}$ except $x_{i}$
to elements of $\Imc_{\Tmc_{2},\Amc}$ not in $\Amc$. Thus this image is tree-shaped and can be identified with
an $\mathcal{EL}$-concept). Then $\Tmc_{2},\Amc\models C_{i}(\pi(x_{i}))$ and $\Tmc_{1},\Amc \not\models C_{i}(\pi(x_{i}))$.
\end{itemize}
We have thus shown that there exists a $\Sigma_{1}$-ABox $\Amc$ and an $\mathcal{EL}$-concept $C$ and $a\in {\sf ind}(\Amc)$
such that $\Tmc_{2},\Amc \models C(a)$ and $\Tmc_{1},\Amc \not\models C(a)$.
\end{proof}

\begin{theoremnum}{\ref{thm:second}}
Let $\Tmc_{1}$ and $\Tmc_{2}$ be \hALC TBoxes and $\Sigma_{1},\Sigma_{2}$ be
signatures.  Then $\Tmc_{1}$ $(\Sigma_{1},\Sigma_{2})$-CQ (equivalently,
$(\Sigma_{1},\Sigma_{2})$-UCQ) entails $\Tmc_{2}$ iff for all tree-shaped
$\Sigma_{1}$-ABoxes $\Amc$ of outdegree bounded by $|\Tmc_{2}|$ that are
consistent with $\Tmc_{1}$ and $\Tmc_{2}$, $\Imc_{\Tmc_{2},\Amc}$ is
$\Sigma_{2}$-homomorphically embeddable into $\Imc_{\Tmc_{1},\Amc}$.
\end{theoremnum}
\begin{proof}
The proof is similar to the proof of Theorem~\ref{thm:first}. Denote by $\mathcal{EL}^{u}$ the extension of $\mathcal{EL}$
with the universal role $u$.
Unravelling tolerance of \hALC implies also that if $(\T,\A)\models C(a)$ for a \hALC TBox $\T$ and $\mathcal{EL}^{u}$-concept $C$, then
$(\T,\A')\models C(a)$ for a finite subABox $\A'$ of the tree-unravelling $\A^{u}$ of $\A$ at $a$. Thus, any witness ABox for non-entailment
w.r.t.~$\mathcal{EL}^{u}$-instance queries can be transformed into a tree-shaped witness ABox. By the homomorphism criterion for $\Sigma$-CQ
entailment between \hALC-KBs proved in \cite{BotoevaKRWZ14} it is therefore sufficient to prove that if $\T_{1}$ does not $\Theta$-CQ-entail $\T_{2}$, then
this is witnessed by an $\mathcal{EL}^{u}$-instance query $C(a)$. This proof is a straightforward extension of the proof of Theorem~\ref{thm:first}.
\end{proof}

The notion of con-$\Sigmaquery$-homomorphic embeddability is slightly
unwieldy to use in the subsequent definitions and constructions. We
therefore resort to simulations. 
%
The following lemma gives an analysis of
non-con-$\Sigmaquery$-homomorphic embeddability in terms of
simulations that is relevant for the subsequent constructions.
%
%

\begin{lemmanum}{\ref{lem:homtosim}}
  %
  %
  %
Let \Amc be a $\Sigmaabox$-ABox and $\Imc_1$ a model of $(\Tmc_1,\Amc)$. Then
$\Imc_{\Tmc_2,\Amc}$ is not con-$\Sigmaquery$-homomorphically embeddable into
$\Imc_1$ iff there is $a \in \mn{ind}(\Amc)$ such that one of the following
holds:
  \begin{enumerate}\itemsep 0cm

  \item[\rm (1)] there is a $\Sigmaquery$-concept name $A$ with
    $a \in A^{\Imc_{\Tmc_2,\Amc}} \setminus A^{\Imc_1}$;

  \item[\rm (2)] there is an $R$-successor $d$ of $a$ in
    $\Imc_{\Tmc_2,\Amc}$, for some $\Sigmaquery$-role name $R$, such
    that $d \notin \mn{ind}(\Amc)$ and, for all $R$-successors $e$
    of $a$ in $\Imc_1$, we have  $(\Imc_{\Tmc_2,\Amc},d)
    \not\leq_{\Sigmaquery}(\Imc_1,e)$.

  \end{enumerate}
\end{lemmanum}
\begin{proof}(sketch) The ``if'' direction is clear by definition of
  homomorphisms and because of the following: if there is a
  $\Sigmaquery$-homomorphism $h$ from the maximal $\Sigma$-connected
  subinterpretation of $\Imc_{\Tmc_2,\Amc}$ to $\Imc_1$, $a \in
  \mn{ind}(\Amc)$, and $d$ is an $R$-successor of $a$ in $\Imc_{\Tmc_2,\Amc}$ with $R
  \in \Sigmaquery$ and $d \notin \mn{ind}(\Amc)$, then $h(d)$ is an
  $R$-successor of $a$ in $\Imc_1$ and $h$ contains a
  $\Sigmaquery$-simulation from $(\Imc_{\Tmc_2,\Amc},d)$ to $(\Imc_1,h(d))$.

  For the ``only if'' direction, assume that both Point~1 and Point~2 are false
  for all $a \in \mn{ind}(\Amc)$. Then for every $a \in \mn{ind}(\Amc)$,
  $R$-successor $d$ of $a$ in $\Imc_{\Tmc_2,\Amc}$ with $R \in \Sigmaquery$ and
  $d \notin \mn{ind}(\Amc)$, there is an $R$-successor $d'$ of $a$ in $\Imc_1$
  and a simulation $\S_d$ from $\Imc_1$ to $\Imc_{\Tmc_2,\Amc}$ such that
  $(d,d') \in \S_d$. Because the subinterpretation of $\Imc_{\Tmc_2,\Amc}$ rooted at $d$ is
  tree-shaped, we can assume that $\S_d$ is a partial function. Now consider
  the function $h$ defined by setting $h(a)=a$ for all $a \in \mn{ind}(\Amc)$
  and then taking the union with all the simulations $\S_d$. It can be verified
  that $h$ is a $\Sigmaquery$-homomorphism from the maximal $\Sigma$-connected
  subinterpretation of $\Imc_{\Tmc_2,\Amc}$ to $\Imc_1$.
\end{proof}

\section{Proof of Theorem~\ref{thm:exptbox}}

We aim to prove Theorem~\ref{thm:exptbox}, i.e., that it is \ExpTime-complete
to decide whether an \ALC TBox $\Tmc_1$ $(\Sigma_1,\Sigma_2)$-\RCQ entails a
Horn-\ALC TBox $\Tmc_2$. We are going to use automata on finite trees.

\subsection{Tree Automata Preliminaries}

We introduce two-way alternating B\"uchi automata on finite trees
(\TWABAs).  A finite tree $T$ is \emph{$m$-ary} if for every $x \in T$,
the set $\{ i \mid x \cdot i \in T \}$ is of cardinality zero or
exactly $m$.  An \emph{infinite path} $P$ of $T$ is a prefix-closed
set $P \subseteq T$ such that for every $i \geq 0$, there is a unique
$x \in P$ with $|x|=i$.
\begin{definition}[\TWABA]
  A \emph{two-way alternating B\"uchi automaton (\TWABA) on finite
    $m$-ary trees} is a tuple $\Amf=(Q,\Gamma,\delta,q_0,R)$ where $Q$
  is a finite set of \emph{states}, $\Gamma$ is a finite alphabet,
  $\delta: Q \times \Gamma \rightarrow \Bmc^+(\mn{tran}(\Amf))$ is the
  \emph{transition function} with $\mn{tran}(\Amf) = ([m] \times Q)
  \cup \mn{leaf}$ the set of \emph{transitions} of \Amf, $q_0 \in Q$
  is the \emph{initial state}, and $R \subseteq Q$ is a set of
  \emph{recurring states}.
\end{definition}
Transitions have the same intuition as for \TWATAs on infinite trees.
The additional transition $\mn{leaf}$ verifies that the automaton is
currently at a leaf node.
\begin{definition}[Run, Acceptance]
  A \emph{run} of a \TWABA $\Amf = (Q,\Gamma,\delta,q_0,R)$ on a finite
  $\Gamma$-labeled tree $(T,L)$ is a $T \times Q$-labeled tree
  $(T_r,r)$ such that the following conditions are satisfied:
  \begin{enumerate}

  \item $r(\varepsilon) = ( \varepsilon, q_0)$

  \item if $y \in T_r$, $r(y)=(x,q)$, and $\delta(q,L(x))=\vp$, then
    there is a (possibly empty) set $Q = \{ (c_1,q_1),\dots,(c_n,q_n)
    \} \subseteq \mn{tran}(\Amf)$ such that $Q$ satisfies $\vp$ and
    for $1 \leq i \leq n$, $x \cdot c_i$ is defined and  a node
    in $T$, and there is a $y \cdot i \in T_r$ such that $r(y \cdot
    i)=(x \cdot c_i,q_i)$;

  \item if $r(y)=(x,\mn{leaf})$, then $x$ is a leaf in $T$.

  \end{enumerate}
  We say that $(T_r,r)$ is \emph{accepting} if in all infinite paths
  $\varepsilon = y_1 y_2 \cdots$ of $T_r$, the set $\{ i \geq 0 \mid
  r(y_i)=(x,q) \text{ for some } q \in R\}$ is infinite.\nb{E: is it really
    like this? shouldn't it be that only the states in $R$ are allowed to
    appear infinitely often?}
  A finite $\Gamma$-labeled tree $(T,L)$ is \emph{accepted} by \Amf if
  there is an accepting run of \Amf on $(T,L)$. We use $L(\Amf)$ to
  denote the set of all finite $\Gamma$-labeled tree accepted by \Amf.
\end{definition}
Apart from \TWABAs,
we will also use nondeterministic tree automata, introduced next.
\smallskip

A \emph{nondeterministic top-down tree automaton} (NTA) on finite
$m$-ary trees is a tuple $\Amf=(Q, \Gamma, Q^0, \delta, F)$ where $Q$
is a finite set of \emph{states}, $\Gamma$ is a finite alphabet, $Q^0
\subseteq Q$ is the set of \emph{initial states}, $F \subseteq Q$ is a
set of \emph{final states}, and $\delta:Q \times \Gamma
\rightarrow 2^{Q^m}$ is the \emph{transition function}.

Let $(T,L)$ be a $\Gamma$-labeled $m$-ary tree.  A \emph{run} of \Amf
on $(T,L)$ is a $Q$-labeled $m$-ary tree $(T,r)$ such that
$r(\varepsilon) \in Q^0$ and for each node $x \in T$, we have $\langle
r(x \cdot 1), \ldots ,r(x \cdot m)\rangle \in \delta( r(x),
L(x))$. The run is \emph{accepting} if for every leaf $x$ of $T$, we
have $r(x) \in F$. The set of trees accepted by $\mathcal A$ is
denoted by $L(\mathcal A)$.

We will use the following results from automata theory:
\begin{theorem}~\\[-4mm]
  \begin{enumerate}

  \item Every \TWABA $\Amf=(Q,\Gamma,\delta,q_0,R)$ can be converted
    into an equivalent NTA $\Amf'$ whose number of states is (single)
    exponential in $|Q|$; the conversion needs time polynomial in the
    size of $\Amf'$;

  \item Given a constant number of \TWABAs (resp.\ NTAs), we can
    construct in polytime a \TWABA (resp.\ an NTA) that accepts the
    intersection language;

  \item Emptiness of NTAs can be checked in polytime.

  \end{enumerate}
\end{theorem}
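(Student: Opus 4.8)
The plan is to treat the three claims separately, with essentially all of the work going into the first and the last two being routine. For Claim~1 I would convert the \TWABA to an NTA in two stages: first eliminate two-wayness, obtaining an equivalent \emph{one-way} alternating B\"uchi automaton on finite trees, then eliminate alternation, obtaining the NTA. For the two-way elimination I would use the standard summary-guessing technique (as in \cite{CaEO07}): at each node one guesses a relation recording, for every state in which a copy could enter the subtree rooted at that node from its parent, the set of states in which copies may later return to the parent. An upward transition $(-1,q)$ is then discharged not by physically moving up but by consulting the guessed summary at the current node and spawning the corresponding copies, while local consistency of the guesses is enforced by the (alternating) transition structure. Because alternation itself carries this bookkeeping, this stage costs only a polynomial increase in the number of states.

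For the second stage I would apply the tree analogue of the Miyano--Hayashi breakpoint construction. Its states are pairs $(S,O)$ with $S \subseteq Q$ the currently active states and $O \subseteq S$ the set of \emph{obligations} still owing a visit to a recurring state; the NTA accepts (through its leaf condition) only when all obligations along every branch are eventually discharged, which is what enforces the B\"uchi condition once two-wayness and the remaining $0$-loops have been resolved. This subset construction contributes the single exponential $2^{O(|Q|)}$, and composing it with the polynomial first stage yields the claimed single-exponential NTA, computable in time polynomial in its size. \textbf{The hard part} is the correctness of the first stage: one must prove that the two-way automaton accepts a tree iff there is a consistent family of guessed summaries inducing an accepting one-way run, and in particular that the B\"uchi condition is preserved along the up-and-down oscillating infinite paths that the summaries collapse. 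This equivalence, together with the faithful translation of ``visit $R$ infinitely often'' into the breakpoint/leaf conditions on finite trees, is the only genuinely delicate step.

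For Claim~2 the constructions are immediate. For NTAs I would take the product automaton, whose state set is the Cartesian product of the component state sets, whose initial and final sets are the corresponding products, and whose transition function synchronises the components coordinatewise; for a constant number of automata this has polynomial size. For the \TWABA case I would take the disjoint union of the state sets, add a fresh initial state $q_0$ with $\delta(q_0,\ell)=\bigwedge_j (0,q_0^j)$ (a position-preserving $0$-move into each component's old initial state $q_0^j$), and let the recurring set be the union of the component recurring sets; since after the initial split the runs of the components share no states, every infinite path lies entirely within one component, so the union B\"uchi condition faithfully captures the conjunction. Both constructions are of linear size and computable in polynomial time. Finally, for Claim~3 I would reduce emptiness to a monotone least-fixpoint computation: call a state $q$ \emph{productive} if it labels the root of some accepting run on a finite tree, and compute the productive states by the induction that every $q \in F$ is productive (it labels a single-node tree, which is a leaf) and that $q$ is productive whenever $\delta(q,\sigma)$ contains a tuple $\langle q_1,\dots,q_m\rangle$ of productive states for some $\sigma \in \Gamma$. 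The fixpoint stabilises in at most $|Q|$ rounds, each computable in polynomial time, and the language is non-empty iff some initial state is productive; hence emptiness is decidable in polynomial time.
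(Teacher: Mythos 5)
The paper itself gives no proof of this theorem: it is stated as a package of standard facts from automata theory (essentially Vardi's translation of two-way alternating automata into nondeterministic ones, product constructions, and the usual fixpoint emptiness test), so your proposal has to be measured against the literature rather than against an argument in the paper. Your treatments of Claims~2 and~3 are correct and are exactly the standard ones: for intersection of \TWABA{}s, the disjoint union plus a conjunctive $0$-move at a fresh initial state works because every infinite path of a run is eventually confined to a single component, so the union of the recurring sets faithfully encodes the conjunction of the B\"uchi conditions; and the productive-state least fixpoint for NTA emptiness is sound, complete, and polynomial.

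The gap is in the first stage of your argument for Claim~1. You claim that two-wayness can be eliminated with only a polynomial increase in the number of states by having the automaton ``guess a relation'' (a summary of detours through the parent) at each node, with consistency ``enforced by the alternating transition structure.'' But independent copies of an alternating automaton that arrive at the same node cannot be forced to agree on a guess: there is no synchronization between parallel copies, so one copy may discharge its $(-1,q)$ obligation against a summary $S$ while another copy at the very same node uses an incompatible summary $S'$, and nothing in the transition structure rules this out. If instead the summary is stored in the state, the state space is already exponential, since summaries range over subsets of $Q \times Q$ (with acceptance annotations). The standard repair is to place the summaries on the tree, i.e., to run a polynomial-size \emph{one-way} alternating automaton over an exponentially larger annotated alphabet, and only afterwards to remove alternation and existentially project the annotation away with a nondeterministic automaton; the single exponential arises from this combined alternation-removal-plus-projection step (together with your breakpoint construction for the B\"uchi condition), not from a clean polynomial one-way alternating automaton over the original alphabet $\Gamma$. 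Your final bound is the right one, but the two-stage decomposition as stated does not go through; you would need to make the annotated alphabet and the projection explicit, or simply cite the translation as a black box, as the paper does.
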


\subsection{$\Gamma$-labeled Trees}

For the proof of Theorem~\ref{thm:exptbox}, fix an \ALC TBox $\Tmc_1$ and a
Horn-\ALC TBox~$\Tmc_2$ and signatures $\Sigmaabox,\Sigmaquery$.  Set $m :=
|\Tmc_2|$.  Ultimately, we aim to construct an NTA \Amf such that a tree is
accepted by \Amf if and only if this tree encodes a $\Sigmaabox$-ABox \Amc of
outdegree at most $m$ that is consistent with both $\Tmc_1$ and $\Tmc_2$ and a
model $\Imc_1$ of $(\Tmc_1,\Amc)$ such that the canonical model
$\Imc_{\Tmc_2,\Amc}$ of $(\Tmc_2,\Amc)$ is not
con-$\Sigmaquery$-homomorphically embeddable into $\Imc_1$.  By
Theorem~\ref{thm:first}, this means that \Amf accepts the empty language if and
only if $\Tmc_2$ is $(\Sigma_1,\Sigma_2)$-rCQ entailed by $\Tmc_1$.  In this
section, we make precise which trees should be accepted by the NTA \Amf and in
the subsequent section, we construct \Amf.

As before, we assume that $\Tmc_1$ takes the form $\top \sqsubseteq C_{\T_1}$
with $C_{T_1}$ in NNF and use $\mathsf{cl}(C_{\T_1})$ to denote the set of
subconcepts of $C_{\T_1}$, closed under single negation.
We also assume that $\Tmc_2$ is in the Horn-\ALC normal form introduced
above. We use $\mn{CN}(\Tmc_2)$ to denote the
set of concept names in $\Tmc_2$ and $\mn{sub}(\Tmc_2)$ for the set of
subconcepts of (concepts in) $\Tmc_2$.

Let $\Gamma_0$ denote the set of all subsets of
$\Sigmaabox \cup \{ R^- \mid R \in \Sigmaabox \}$ that contain at most one
role.  Automata will run on $m$-ary $\Gamma$-labeled trees where
$$
\Gamma = \Gamma_0 \times 2^{\mn{cl}(\Tmc_1)} \times 2^{\mn{CN}(\Tmc_2)}
\times \{ 0,1 \} \times 2^{\mn{sub}(\Tmc_2)}.
$$
For easier reference, in a $\Gamma$-labeled tree $(T,L)$ and for a node $x$
from $T$, we write $L_i(x)$ to denote the $i+1$st component of $L(x)$, for each
$i \in \{0,\dots,4\}$. Intuitively, the projection of a $\Gamma$-labeled tree
to
\begin{itemize}

\item the $L_0$-components of its $\Gamma$-labels represents the
  $\Sigmaabox$-ABox \Amc that witnesses non-$\Sigmaquery$-query
  entailment of $\Tmc_2$ by $\Tmc_1$;

\item $L_1$-components (partially) represents a model $\Imc_1$ of
  $(\Tmc_1,\Amc)$;

\item $L_2$-components (partially) represents the
  canonical model $\Imc_{\Tmc_2,\Amc}$ of $(\Tmc_2,\Amc)$;

\item $L_3$-components marks the individual $a$ in \Amc such that
  $(\Imc_{\Tmc_2,\Amc},a)$ is not $\Sigmaquery$-simulated by
  $(\Imc_{\Tmc_1,\Amc},a)$;

\item $L_4$-components contains bookkeeping information that helps to
  ensure that the afore mentioned $\Sigmaquery$-simulation indeed fails.

\end{itemize}
We now make these intuitions more precise by defining certain
properness conditions for $\Gamma$-labeled trees, one for each
component in the labels.  A $\Gamma$-labeled $(T,L)$ tree is
\emph{0-proper} if
it satisfies the following conditions:
\begin{enumerate}

\item for the root $\varepsilon$ of $T$, $L_0(\varepsilon)$ contains no role;

\item every non-root node $x$ of $T$, $L_0(x)$ contains a role.

\end{enumerate}
Every 0-proper $\Gamma$-labeled tree $(T, L)$ represents the
tree-shaped $\Sigmaabox$-ABox
\begin{align*}
\Amc_{( T, L)} = \ & \{ A(x) \mid A \in L_0(x) \}\\[1mm] 
  & \cup \{ R(x,y) \mid R \in L_0(y), y \text{ is a child of } x \}\\  
  & \cup \{ R(y,x) \mid  R^{-} \in L_0(y), y \text{ is a child of } x \}. 
\end{align*}
A $\Gamma$-labeled tree $(T, L)$ is \emph{1-proper} if it
satisfies the following conditions for all $x_1,x_2 \in T$:
\begin{enumerate}

\item there is a model \Imc of $\Tmc_1$ and a $d \in \Delta^\Imc$ such
  that $d \in C^\Imc$ iff $C \in L_1(x_1)$ for all $C \in
  \mn{cl}(\Tmc_1)$;


\item $A \in L_0(x_1)$ implies $A \in L_1(x_1)$;

\item if $x_2$ is a child of $x_1$ and $R \in x_2$, then $\forall R . C
  \in L_1(x_1)$ implies $C \in L_1(x_2)$ for all $\forall R . C \in \mn{cl}(\Tmc_1)$;

\item if $x_2$ is a child of $x_1$ and $R^- \in x_2$, then $\forall R . C
  \in L_1(x_2)$ implies $C \in L_1(x_1)$ for all $\forall R . C \in \mn{cl}(\Tmc_1)$.

\end{enumerate}
A $\Gamma$-labeled tree $(T, L)$ is \emph{2-proper} if for every node
$x \in T$,
\begin{enumerate}

\item  $A \in L_2(x)$ iff $\Amc_{(T,L)},\Tmc_2 \models A(x)$, for all $A
  \in \mn{CN}(\Tmc_2)$;

\item if $A \in L_2(x)$, then $A \sqsubseteq \bot \notin \Tmc_2$.


\end{enumerate}
%
It is \emph{3-proper} if there is
exactly one node $x$ with $L_3(x)=1$.  For defining 4-properness,
we first give some preliminaries.

Let $t \subseteq \mn{CN}(\Tmc_2)$. Then $\mn{cl}_{\Tmc_2(S)} = \{ A \in
\mn{CN}(\Tmc_2) \mid \Tmc_2 \models \midsqcap S \sqsubseteq A \}$. Moreover, we
say that $S = \{ \exists R . A, \forall R . B_1, \dots, \forall R . B_n \}$ is
a \emph{$\Sigmaquery$-successor set for} $t$ if there is
a 
concept name $A' \in t$ such that $A' \sqsubseteq \exists R . A \in \Tmc_2$ and
$\forall R . B_1, \dots, \forall R . B_n$ is the set of all concepts of this
form such that, for some $B \in t$, we have $B \sqsubseteq \forall R . B_i \in
\Tmc_2$. In the following, it will sometimes be convenient to speak about the
canonical model $\Imc_{\Tmc_2,\Smc}$ of $\Tmc_2$ and a finite set of concepts
$C$ that occur on the right-hand side of a CI in $\Tmc_2$. What we mean with
$\Imc_{\Tmc_2,\Smc}$ is the interpretation obtained from the canonical model
for the TBox $\Tmc_i \cup \{ A_C \sqsubseteq C \mid C \in \Smc \}$ and the ABox
$ \{ A_C(a_\varepsilon) \mid C \in \Smc \}$ in which all fresh concept names
$A_C$ are removed.

A $\Gamma$-labeled tree is \emph{4-proper} if it satisfies the
following conditions for all nodes $x_1$, $x_2$:
\begin{enumerate}

\item if $L_3(x_1)=1$, then there is a $\Sigmaquery$-concept name in $L_2(x_1)
  \setminus L_1(x_1)$ or $L_4(x_1)$ is
a $\Sigmaquery$-successor set for~$L_2(x_1)$;

\item if $L_4(x_1) \neq \emptyset$, then there is a model \Imc of
  $\Tmc_1$ and a $d \in \Delta^\Imc$ such that $d \in C^\Imc$ iff $C
  \in L_1(x_1)$ for all $C \in \mn{cl}(\Tmc_1)$ and
  $(\Imc_{\Tmc_2,L_4(x_1)},a_\varepsilon) \not\leq_{\Sigmaquery} (\Imc,d)$;


\item if $x_2$ is a child of $x_1$, $L_0(x_2)$ contains the role name
  $R$, and $L_4(x_1)= \{ \exists R . A, \forall R . B_1, \dots,\forall
  R . B_n \}$, then
  %
there is a $\Sigmaquery$-concept name in
  $\mn{cl}_{\Tmc_2}(\{A,B_1,\dots,B_n\}) \setminus L_1(x_2)$ or
$L_4(x_2)$ is
  a $\Sigmaquery$-successor set for $\mn{cl}_{\Tmc_2}(\{A,B_1,\dots,B_n\})$;

\item if $x_2$ is a child of $x_1$, $L_0(x_2)$ contains the role
  $R^-$, and $L_4(x_2)= \{ \exists R . A, \forall R . B_1,
  \dots,\forall R . B_n \}$, then
  %
%
there is a $\Sigmaquery$-concept name in
  $\mn{cl}_{\Tmc_2}(\{A,B_1,\dots,B_n\}) \setminus L_1(x_1)$ or
%
$L_4(x_1)$ is
  a $\Sigmaquery$-successor set for $\mn{cl}_{\Tmc_2}(\{A,B_1,\dots,B_n\})$.
%


\end{enumerate}
Note how 4-properness addresses Condition~2 of
Lemma~\ref{lem:homtosim}. By that condition, there is a set of
simulations from certain pointed ``source'' interpretations to
certain pointed ``target'' interpretations that should be avoided.
In the $L_4$-component of $\Gamma$-labels, we store the source
interpretations, represented as sets of concepts. 4-properness
then ensures that there is no simulation to the relevant target
interpretations.
\begin{lemma}
\label{lem:gammattrees}
There is an $m$-ary $\Gamma$-labeled tree that is $i$-proper for all
$i \in \{0,\dots,4\}$ iff there is a tree-shaped $\Sigmaabox$-ABox
$\Amc$ of outdegree at most $m$ that is consistent w.r.t.\ $\Tmc_1$
and $\Tmc_2$ and a model $\Imc_1$ of
$(\Tmc_1,\Amc)$ such that the canonical model $\Imc_{\Tmc_2,\Amc}$ of
$(\Tmc_2,\Amc)$ is not con-$\Sigmaquery$-homomorphically embeddable
into $\Imc_1$.
\end{lemma}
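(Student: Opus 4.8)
The plan is to prove the two directions of the equivalence by mutually inverse constructions that translate between an all-proper $\Gamma$-labeled tree and a witnessing pair $(\Amc,\Imc_1)$, with the decisive bridge provided by Lemma~\ref{lem:homtosim}. Throughout, recall that $\Imc_{\Tmc_2,\Amc}$ is the Horn canonical model, so its anonymous part is tree-shaped and of bounded outdegree; together with the bounded outdegree of the $\Imc_1$ I will build, this makes every relevant $\Sigmaquery$-simulation image-finite, so that a failure of simulation is always witnessed by a well-founded ``attack'' tree of finite rank. This well-foundedness is what licenses the inductive matching of the $L_4$-bookkeeping against successors, and I will use it in both directions.

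For the ($\Leftarrow$) direction, I start from a tree-shaped $\Sigmaabox$-ABox $\Amc$ of outdegree at most $m$, consistent with $\Tmc_1$ and $\Tmc_2$, and a model $\Imc_1$ of $(\Tmc_1,\Amc)$ into which $\Imc_{\Tmc_2,\Amc}$ is not con-$\Sigmaquery$-homomorphically embeddable. I take $T$ to be the tree underlying $\Amc$, and set $L_0(x)$ to the ABox facts incident to $x$ (its concept assertions and the role linking it to its parent), $L_1(x)$ to the $\Tmc_1$-type of $x$ in $\Imc_1$, and $L_2(x)=\{A\in\mn{CN}(\Tmc_2)\mid \Amc,\Tmc_2\models A(x)\}$; $0$-, $1$- and $2$-properness are then immediate from $\Imc_1\models(\Tmc_1,\Amc)$ and the consistency assumptions. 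By Lemma~\ref{lem:homtosim} there is an individual $a$ for which Point~(1) or Point~(2) holds; I mark it by $L_3(a)=1$ to get $3$-properness. The heart of this direction is populating $L_4$: if Point~(1) holds I am done via the first disjunct of $4$-properness~(1); if Point~(2) holds, the failing $\Sigmaquery$-$R$-successor $d$ of $a$ in $\Imc_{\Tmc_2,\Amc}$ determines a $\Sigmaquery$-successor set which I store in $L_4(a)$, and I then propagate $L_4$ downward along exactly those $T$-edges whose role matches, following the well-founded attack tree witnessing non-simulation. Conditions $4$-(3)/(4) are satisfied at each step because the chosen successor set again exhibits either a concept-name mismatch or a further failing successor, and condition $4$-(2) holds because $\Imc_1$ itself realizes $L_1(x)$ and fails to simulate the corresponding source, so the required model may be taken to be $\Imc_1$.

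For the ($\Rightarrow$) direction, given an all-proper tree I set $\Amc:=\Amc_{(T,L)}$, which is tree-shaped of outdegree at most $m$. Consistency of $\Amc$ with $\Tmc_2$ follows from $2$-properness (as $L_2$ faithfully records the $\Tmc_2$-entailed concept names and avoids any $A$ with $A\sqsubseteq\bot\in\Tmc_2$), while consistency with $\Tmc_1$ follows once $\Imc_1$ is built. I construct $\Imc_1$ by taking the non-root nodes of $T$ with their concept extensions read off from $L_1$ and their roles from $\Amc$, and then completing the unwitnessed existentials in each $L_1(x)$ by grafting on anonymous tree models; such models exist by $1$-properness~(1), and at every node $x$ with $L_4(x)\neq\emptyset$ I graft specifically a model supplied by $4$-properness~(2), i.e.\ one realizing $L_1(x)$ that does not $\Sigmaquery$-simulate $(\Imc_{\Tmc_2,L_4(x)},a_\varepsilon)$. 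Conditions $1$-(2)--(4) guarantee that the ABox assertions and all universal restrictions are respected, so $\Imc_1\models(\Tmc_1,\Amc)$. To show non-embeddability I invoke Lemma~\ref{lem:homtosim} at the marked node $a$: either $4$-properness~(1) yields a $\Sigmaquery$-concept name in $L_2(a)\setminus L_1(a)$, giving Point~(1) directly, or $L_4(a)$ is a $\Sigmaquery$-successor set whose associated anonymous $R$-successor $d$ of $a$ in $\Imc_{\Tmc_2,\Amc}$ satisfies Point~(2); here I argue, by induction along the well-founded attack structure, that no $R$-successor $e$ of $a$ in $\Imc_1$ simulates $(\Imc_{\Tmc_2,\Amc},d)$ --- for $T$-node successors this uses $4$-properness~(3)/(4), and for the grafted anonymous successors it uses the deliberate choice made via $4$-properness~(2).

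I expect the main obstacle to be the two inductive simulation arguments that glue the local $L_4$-bookkeeping into a global failure of simulation. Concretely, in ($\Rightarrow$) I must rule out that some grafted anonymous successor accidentally simulates the source $d$, and in ($\Leftarrow$) I must show that a globally consistent assignment of successor sets to $L_4$ actually exists; both reduce to the same point, namely that the failure of a $\Sigmaquery$-simulation between the image-finite structures $\Imc_{\Tmc_2,\Amc}$ and $\Imc_1$ is witnessed at a finite ordinal rank, so that the recursion on successor sets terminates and the matching of anonymous successors against $\Imc_1$-successors is well-founded. Verifying that $\mn{cl}_{\Tmc_2}(\{A,B_1,\dots,B_n\})$ really describes the type of the relevant anonymous successor in the Horn canonical model, and that passing to $\Imc_{\Tmc_2,L_4(x)}$ rooted at $a_\varepsilon$ loses no $\Sigmaquery$-information, is the routine but technical glue that makes these inductions go through.
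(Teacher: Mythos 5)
Your proposal is correct and follows essentially the same route as the paper's proof: both directions use the same label assignments, the same appeal to Lemma~\ref{lem:homtosim} at the marked individual, the same grafting of witness models chosen via Condition~2 of 4-properness, and the same downward propagation of successor sets in $L_4$ along the failing simulation (the paper phrases your ``finite-rank attack tree'' induction as an induction on the co-depth of nodes in the finite tree-shaped ABox, which is the same well-founded argument).
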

\begin{proof}
  ``if''. Let $(T,L)$ be an $m$-ary $\Gamma$-labeled tree that is
  $i$-proper for all $i \in \{0,\dots,4\}$. Then $\Amc_{(T,L)}$ is a
  tree-shaped $\Sigmaabox$-ABox of outdegree at most $m$.  Moreover,
  $\Amc_{(T,L)}$ is consistent w.r.t.\ $\Tmc_2$: because of the second
  condition of 2-properness, the canonical model $\Imc_{\Tmc_2,\Amc}$
  of $(\Tmc_2,\Amc)$ is indeed a model of $\Tmc_2$ and \Amc.

  Since $(T,L)$ is $3$-proper, there is exactly one $x_0 \in T$ with
  $L_3(x_0)=1$. By construction, $x_0$ is also an individual name in
  $\Amc_{(T,L)}$.  To finish this direction of the proof, it suffices
  to construct a model $\Imc_1$ of $(\Tmc_1,\Amc_{(T,L)})$ such that
  $(\Imc_{\Tmc_2,\Amc},x_0) \not\leq_{\Sigmaquery} (\Imc_1,x_0)$.  In
  fact, $\Imc_1$ witnesses consistency of $\Amc_{(T,L)}$ with
  $\Tmc_1$. Moreover, by definition of simulations $\Imc_1$ must
  satisfy one of Points~1 and~2 of Lemma~\ref{lem:homtosim} with $a$
  replaced by $x_0$. Consequently, by that Lemma $\Imc_{\Tmc_2,\Amc}$
  is not con-$\Sigmaquery$-homomorphically embeddable into $\Imc_1$.

  Start with the interpretation $\Imc_0$ defined as follows:
  $$
  \begin{array}{r@{~}c@{~}l}
    \Delta^{\Imc_0} &=& T \\[1mm]
    A^{\Imc_0} &=& \{ x \in T \mid A \in L_1(x) \} \\[1mm]
    R^{\Imc_0} &=& \{ (x_1,x_2) \mid x_2 \text{ child of } x_1 \text{
                   and } R \in L_0(x_2) \} \, \cup{} \\[1mm]
    && \{ (x_2,x_1) \mid x_2 \text{ child of } x_1 \text{
                   and } R^- \in L_0(x_2) \}.
  \end{array}
  $$
  Then take, for each $x \in T$, a model $\Imc_x$ of \Tmc such that
  $x \in C^{\Imc_x}$ iff $C \in L_1(x)$ for all
  $C \in \mn{cl}(\Tmc_1)$.  Moreover, if $L_4(x) \neq \emptyset$,
  then choose $\Imc_x$ such that $(\Imc_{\Tmc_2,L_3(x)},a_\varepsilon)
  \not\leq_{\Sigmaquery} (\Imc_x,x)$. These choices are possible since $(T,L)$ is 1-proper
  and 4-proper. Further assume that $\Delta^{\Imc_0}$ and
  $\Delta^{\Imc_x}$ share only the element $x$. Then $\Imc_1$ is the
  union of $\Imc_0$ and all chosen interpretations $\Imc_x$.  It is
  not difficult to prove that $\Imc_1$ is indeed a model of
  $(\Tmc_1,\Amc_{(T,L)})$.

  We show that $(\Imc_{\Tmc_2,\Amc_{(T,L)}},x_0)
  \not\leq_{\Sigmaquery} (\Imc_1,x_0)$.
  By Point~1 of 4-properness, there is a $\Sigmaquery$-concept name in
  $L_2(x_0) \setminus L_1(x_0)$ or $L_4(x_0)$ is a
  $\Sigmaquery$-successor set for $L_2(x)$. In the former case, we are
  done. In the latter case, it suffices to show the following.
  \\[2mm]
  {\bf Claim.} For all $x \in T$: if $L_4(x) \neq \emptyset$, then
    $(\Imc_{\Tmc_2,L_4(x)},a_\varepsilon) \not\leq_{\Sigmaquery}(\Imc_1,x)$.
  \\[2mm]
  The proof of the claim is by induction on the co-depth of $x$ in
  $\Amc_{(T,L)}$, which is the length $n$ of the longest sequence of
  role assertions $R_1(x,x_1),\dots,R_n(x_{n-1},x_n)$ in
  $\Amc_{(T,L)}$. It uses Conditions~2 to~4 of 4-properness.

  \medskip

  ``only if''. Let \Amc be a $\Sigmaabox$-ABox of outdegree at most $m$ that is
  consistent w.r.t.\ $\Tmc_1$ and $\Tmc_2$, and $\Imc_1$ a model of
  $(\Tmc_1,\Amc)$ such that $\Imc_{\Tmc_2,\Amc}$ is not
  con-$\Sigmaquery$-homomorphically embeddable into $\Imc_1$. By duplicating
  successors, we can make sure that every non-leaf in \Amc has exactly $m$
  successors. We can further assume w.l.o.g.\ that $\mn{ind}(\Amc)$ is a
  prefix-closed subset of $\mathbbm{N}^*$ that reflects the tree-shape of \Amc,
  that is, $R(a,b) \in \Amc$ implies $b=a\cdot c$ or $a = b \cdot c$ for some
  $c \in \mathbbm{N}$.  By Lemma~\ref{lem:homtosim}, there is an $a_0 \in
  \mn{ind}(\Amc)$ such that one of the following holds:
  \begin{enumerate}

  \item there is a $\Sigmaquery$-concept name $A$ with
    $a_0 \in A^{\Imc_{\Tmc_2,\Amc}} \setminus A^{\Imc_1}$;

  \item there is an $R_0$-successor $d_0$ of $a_0$ in
    $\Imc_{\Tmc_2,\Amc}$, for some $\Sigmaquery$-role name $R_0$, such
    that $d_0 \notin \mn{ind}(\Amc)$ and for all $R_0$-successors $d$
    of $a_0$ in $\Imc_1$, we have that $(\Imc_{\Tmc_2,\Amc},d_0)
    \not\leq_{\Sigmaquery}(\Imc_1,d)$.

  \end{enumerate}
  We now show how to construct from \Amc a $\Gamma$-labeled tree
  $(T,L)$ that is $i$-proper for all $i \in \{0,\dots,4 \}$.  For each
  $a \in \mn{ind}(\Amc)$, let $R(a)$ be undefined if $a=\varepsilon$
  and otherwise let $R(a)$ be the unique role $R$ (i.e., role name or
  inverse role) such that $R(b,a) \in \Amc$ and $a= b \cdot c$ for
  some $c \in \mathbbm{N}$.  Now set
  $$
  \begin{array}{rcl}
    T &=& \mn{ind}(\Amc) \\[1mm]
    L_1(x) &=& \{ C \in \mn{cl}(\Tmc_1) \mid x \in C^{\Imc_1} \}
    \\[1mm]
    L_2(x) &=& \{ C \in \mn{CN}(\Tmc_2) \mid \Amc,\Tmc_2 \models C(x)
               \} \\[1mm]
    L_3(x) &=& \left \{
               \begin{array}{rl}
                 1 & \text{ if } x=a_0 \\
                 0 & \text{ otherwise}
               \end{array}
               \right .
  \end{array}
  $$
  It remains to define $L_4$.  Start with setting $L_4(x)=\emptyset$ for
  all~$x$. If Point~1 above is true, we are done. If Point~2 is true, then
  there is a $\Sigmaquery$-successor set $S=\{ \exists R_0 . A, \forall R_0
  . B_1,\dots,\forall R_0 . B_n\}$ for $L_2(a_0)$ such that the restriction of
  $\Imc_{\Tmc_2,\Amc}$ to the subtree-interpretation rooted at $d_0$ is the
  canonical model $\Imc_{\Tmc,\{A,B_1,\dots,B_n\}}$. Set $L_4(a_0)=S$. We
  continue to modify $L_4$, proceeding in rounds. To keep track of the
  modifications that we have already done, we use a set
  $$\Gamma \subseteq
  \mn{ind}(\Amc) \times (\NR \cap \Sigmaquery) \times
  \Delta^{\Imc_{\Tmc_2,\Amc}}
   $$
   such that the following conditions are satisfied:
   \begin{itemize}

   \item[(i)] if $(a,R,d) \in \Gamma$, then $L_4(a)$ has the form $\{ \exists R
     . A, \forall R . B_1,\dots,\forall R . B_n\}$ and the restriction of
     $\Imc_{\Tmc_2,\Amc}$ to the subtree-interpretation rooted at $d$ is the
     canonical model $\Imc_{\Tmc,\{A,B_1,\dots,B_n\}}$;

   \item[(ii)] if $(a,R,d) \in \Gamma$ and $d'$ is an $R$-successor of
     $a$ in $\Imc_1$, then $(\Imc_{\Tmc_2,\Amc},d)
     \not\leq_{\Sigmaquery} (\Imc_1,d')$.


   \end{itemize}
   Initially, set $\Gamma = \{ (a_0,R_0,d_0) \}$.  In each round of
   the modification of $L_4$, iterate over all elements $(a,R,d) \in
   \Gamma$ that have not been processed in previous rounds. Let
   $L_4(a)= \{ \exists R . A, \forall R . B_1,\dots,\forall R . B_n\}$
   and iterate over all $R$-successors $b$ of $a$ in \Amc. By (ii),
   $(\Imc_{\Tmc_2,\Amc},d) \not\leq_{\Sigmaquery} (\Imc_1,b)$. By (i),
   there is thus a top-level $\Sigmaquery$-concept name $A'$ in $A
   \sqcap B_1 \sqcap \cdots \sqcap B_n$ such that $b \notin
   A^{\Imc_1}$ or there is an $R'$-successor $d'$ of $d$ in
   $\Imc_{\Tmc_2,\Amc}$, $R'$ a $\Sigmaquery$-role name, such that for
   all $R'$-successors $d''$ of $b$ in $\Imc_1$,
   $(\Imc_{\Tmc_2,\Amc},d') \not\leq_{\Sigmaquery} (\Imc_1,d'')$. In the
   former case, do nothing. In the latter case, there is a
   $\Sigmaquery$-successor set $S'=\{ \exists R' . A', \forall R'
   . B'_1,\dots,\forall R' . B'_{n'}\}$ for $\{A,B_1,\dots,B_n\}$ such
   that the restriction of $\Imc_{\Tmc_2,\Amc}$ to the
     subtree-interpretation rooted at $d'$ is the canonical model
     $\Imc_{\Tmc,\{A',B'_1,\dots,B'_{n'}\}}$. Set $L_4(b)=S'$ and add $(b,R',d')$ to
   $\Gamma$.

   Since we are following only role names (but not inverse roles)
   during the modification of $L_4$ and since \Amc is tree-shaped, we
   will never process tuples $(a_1,R_1,d_1), (a_2,R_2,d_2)$ from
   $\Gamma$ such that $a_1=a_2$. For any $x$, we might thus only
   redefine $L_4(x)$ from the empty set to a non-empty set, but never
   from one non-empty set to another. For the same reason, the
   definition of $L_4$ finishes after finitely many rounds.

   It can be verified that the $\Gamma$-labeled tree $(T,L)$ just
   constructed is $i$-proper for all $i \in \{0,\dots,4\}$. The
   most interesting point is 4-properness, which consists of four
   conditions. Condition~1 is satisfied by construction of $L_4$.
   Condition~2 is satisfied by ($*$) and Conditions~3 and~4 again
   by construction of $L_4$.
\end{proof}

\subsection{Upper Bound in Theorem~\ref{thm:exptbox}}
\label{subsect:autofull}

By Theorem~\ref{thm:first} and Lemma~\ref{lem:gammattrees}, we can
decide whether $\Tmc_1$ does $(\Sigma_1,\Sigma_2)$-rCQ entail $\Tmc_2$
by checking that there is no $\Gamma$-labeled tree that is $i$-proper
for each $i \in \{0,\dots,4\}$. We do this by constructing automata
$\Amc_0,\dots,\Amc_4$ such that each $\Amc_i$ accepts exactly the
$\Gamma$-labeled trees that are $i$-proper, then intersecting the
automata and finally testing for emptiness. Some of the
constructed automata are \TWABAs while others are NTAs. Before intersecting,
all \TWABAs are converted into equivalent NTAs (which involves an
exponential blowup). Emptiness of NTAs can be decided in time
polynomial in the number of states. To achieve \ExpTime overall
complexity, the constructed \TWABAs should thus have at most
polynomially many states while the NTAs can have at most (single)
exponentially many states.
It is straightforward to construct
\begin{enumerate}

\item an NTA $\Amf_0$ that checks 0-properness and has constantly many
  states;

\item a \TWABA $\Amf_1$ that checks 1-properness and whose number of
  states is polynomial in $|\Tmc_1|$ (note that Conditions~1 and~2 of
  1-properness are in a sense trivial as they could also be guaranteed
  by removing undesired symbols from the alphabet $\Gamma$);

\item an NTA $\Amf_3$ that checks 3-properness and has constantly many
  states;

\item an NTA $\Amf_4$ that checks 4-properness and whose number of
states is (single) exponential in $|\Tmc_2|$ (note that Conditions~1
and~2 of 4-properness could again be ensured by refining $\Gamma$).

\end{enumerate}
Details are omitted. It thus remains to construct an automaton
$\Amf_2$ that checks 2-properness.  For this purpose, it is more
convenient to use a \TWABA than an NTA. In fact, the reason for mixing
\TWABAs and NTAs is that while $\Amf_2$ is more easy to be constructed
as a \TWABA, there is no obvious way to construct $\Amf_4$ as a \TWABA
with only polynomially many states: it seems one needs that one state
is needed for every possible value of the $L_4$-components in
$\Gamma$-labels.

The \TWABA $\Amf_2$ is actually the intersection of two \TWABAs
$\Amf_{2,1}$ and $\Amf_{2,2}$. The \TWABA $\Amf_{2,1}$ ensures one
direction of Condition~1 of 2-properness as well as Condition~2,
that is:
\begin{itemize}

\item[(i)] if $\Amc_{(T,L)}$ is consistent w.r.t.\ $\Tmc_2$, then
  $\Amc_{(T,L)},\Tmc_2 \models A(x)$ implies $A \in L_2(x)$ for all $x
  \in T$ and $A \in \mn{CN}(\Tmc_2)$;

\item[(ii)] if $A \in L_2(x)$, then $A \sqsubseteq \bot \notin \Tmc_2$.

\end{itemize}
It is simple for a \TWABA to verify (ii), alternatively one could
refine $\Gamma$.  To achieve~(i), it suffices to guarantee the
following conditions for all $x_1,x_2 \in T$, which are essentially
just the rules of a chase required for Horn-\ALC TBoxes in normal
form:
\begin{enumerate}

\item $A \in L_0(x_1)$ implies $A \in L_2(x_1)$;

\item if $A_1,\dots,A_n \in L_2(x_1)$ and $\Tmc_2 \models A_1
  \sqcap \cdots \sqcap A_n \sqsubseteq A$, then $A \in L_2(x_1)$;

\item if $A \in L_2(x_1)$, $x_2$ is a successor of $x_1$,
  $R \in L_0(x_2)$, and $A \sqsubseteq \forall R . B \in \Tmc_2$,
  then $B \in L_2(x_2)$;

\item if $A \in L_2(x_2)$, $x_2$ is a successor of $x_1$,
  $R^- \in L_0(x_2)$, and $A \sqsubseteq \forall R . B \in \Tmc_2$,
  then $B \in L_2(x_1)$;

\item if $A \in L_2(x_2)$, $x_2$ is a successor of $x_1$,
  $R \in L_0(x_2)$, and $\exists R . A \sqsubseteq B \in \Tmc_2$,
  then $B \in L_2(x_1)$;

\item if $A \in L_2(x_1)$, $x_2$ is a successor of $x_1$,
  $R^- \in L_0(x_2)$, and $\exists R . A \sqsubseteq B \in \Tmc_2$,
  then $B \in L_2(x_2)$

\end{enumerate}
All of this is easily verified with a \TWABA, details are again
ommitted. Note that Conditions~1 and~2 can again be ensured by
refining $\Gamma$.

The purpose of $\Amf_{2,2}$ is to ensure the converse of (i). Before
constructing it, it is convenient to first characterize the entailment
of concept names at ABox individuals in terms of derivation trees.
A \emph{$\Tmc_2$-derivation tree} for an assertion $A_0(a_0)$ in \Amc
with $A_0 \in \mn{CN}(\Tmc_2)$ is a finite $\mn{ind}(\Amc) \times
\mn{CN}(\Tmc_2)$-labeled tree $(T,V)$ that satisfies
the following conditions:
\begin{itemize}

\item $V(\varepsilon)=(a_0,A_0)$;




\item if $V(x)=(a,A)$ and neither $A(a) \in \Amc$ nor \mbox{$\top
  \sqsubseteq A \in \Tmc_2$}, then one of the following holds:
  \begin{itemize}

  \item $x$ has successors $y_1,\dots,y_n$ with
    $V(y_i)=(a,A_i)$ for $1 \leq i \leq n$ and $\Tmc_2 \models A_1
    \sqcap \cdots \sqcap A_n \sqsubseteq A$;

  \item $x$ has a single successor $y$ with $V(y)=(b,B)$ and there is
    an $\exists R . B \sqsubseteq A \in \Tmc_2$ such that
    $R(a,b) \in \Amc$;

  \item $x$ has a single successor $y$ with $V(y)=(b,B)$ and
    there is a $B \sqsubseteq \forall R . A \in \Tmc_2$ such
    that $R(b,a) \in \Amc$.

  \end{itemize}

\end{itemize}
%
%
\begin{lemma}
\label{lem:derivationtrees}
  %
  %
  %
  %
If $\Amc,\Tmc_2 \models A(a)$ and \Amc is consistent w.r.t.\
    $\Tmc_2$, then there is a derivation tree for $A(a)$ in \Amc, for
    all assertions $A(a)$ with $A \in \mn{CN}(\Tmc_2)$ and $a \in \mn{ind}(\Amc)$.
%
  %
  %
\end{lemma}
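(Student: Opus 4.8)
The plan is to reduce the statement to a purely syntactic completeness property of the chase and then argue by induction on the stage at which an assertion is produced. Since $\Amc$ is consistent with $\Tmc_2$, the completed ABox $\Amc^{c}$ induces a genuine model $\Imc_{\Tmc_2,\Amc}$ of $(\Tmc_2,\Amc)$, and by the usual universal-model property of Horn-$\mathcal{ALC}$ canonical models we have $\Amc,\Tmc_2\models A(a)$ iff $a\in A^{\Imc_{\Tmc_2,\Amc}}$ iff $A(a)\in\Amc^{c}$. Thus it suffices to show that $A(a)\in\Amc^{c}$ with $a\in\mn{ind}(\Amc)$ implies the existence of a $\Tmc_2$-derivation tree for $A(a)$. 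I would fix a fair run of the chase producing $\Amc=\Amc_0\subseteq\Amc_1\subseteq\cdots$ with $\Amc^{c}=\bigcup_k\Amc_k$, and for $\alpha\in\Amc^{c}$ write $s(\alpha)$ for the least $k$ with $\alpha\in\Amc_k$, then prove the implication by induction on $s(A(a))$, distinguishing cases according to the rule that first derives $A(a)$.

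The routine cases all shrink the measure $s$ and map directly onto the three inner cases of a derivation tree. If $A(a)\in\Amc$ or $\top\sqsubseteq A\in\Tmc_2$, take a leaf. If $A(a)$ is added by rule~1 or~2 from premises at $a$, use the conjunction case, whose side condition $\Tmc_2\models A_1\sqcap\cdots\sqcap A_n\sqsubseteq A$ is immediate from the triggering CI; the premises have smaller stage. If $A(a)$ comes from rule~3 along an \emph{ABox} edge $R(a,b)\in\Amc$ (so $b\in\mn{ind}(\Amc)$) with $\exists R.C\sqsubseteq A\in\Tmc_2$ and $C(b)\in\Amc^{c}$, use the existential case with child $(b,C)$. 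For rule~6 note that every role edge pointing \emph{into} an ABox individual already belongs to $\Amc$, because rule~5 only ever creates edges to fresh successors; hence such a derivation uses $R(b,a)\in\Amc$ with $b\in\mn{ind}(\Amc)$ and $C\sqsubseteq\forall R.A\in\Tmc_2$, and I use the universal case with child $(b,C)$. Rule~4 is the leaf case and rule~5 adds no concept assertion to an existing element, so these cover every way of deriving $A(a)$ at an ABox individual except one.

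The genuinely delicate case is when $A(a)$ is produced by rule~3 through an \emph{anonymous} successor $d$ of $a$, since then the existential case is unavailable: it requires $R(a,d)\in\Amc$, whereas $d\notin\mn{ind}(\Amc)$. I expect this to be the main obstacle, as it is the only point where the interaction between the named and the anonymous part of the canonical model must be controlled. The way out is a \emph{locality} property of the Horn-$\mathcal{ALC}$ chase: the anonymous subtree below $a$ is generated solely from the type of $a$ (seeds are fixed by the triggering $\exists$-CI in rule~5, and the only concepts ever pushed into an anonymous child come from its parent via $\forall$-CIs in rule~6), so any concept flowing back to $a$ along an anonymous edge is a $\Tmc_2$-consequence of the concepts already present at $a$. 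Concretely, I would prove the sub-claim that if $A(a)$ is derived at stage $k$ by rule~3 through an anonymous successor, then $\Tmc_2\models\midsqcap M\sqsubseteq A$, where $M=\{B\mid B(a)\in\Amc_{k-1}\}$. Granting it, I finish with the conjunction case, taking as children $(a,B)$ for $B$ in some finite $N\subseteq M$ with $\Tmc_2\models\midsqcap N\sqsubseteq A$; each such $B(a)$ has stage at most $k-1$, so the induction hypothesis applies. Since every case strictly decreases $s$, the recursion terminates and the constructed tree is finite, as the definition demands.

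It remains to prove the sub-claim, which I would do by comparing $\Imc_{\Tmc_2,\Amc}$ with the canonical model $\Imc_{\Tmc_2,M}$ of $\Tmc_2$ and the concept set $M$ (in the sense introduced above), for which $a_\varepsilon\in A^{\Imc_{\Tmc_2,M}}$ iff $\Tmc_2\models\midsqcap M\sqsubseteq A$. The anonymous $d$ was created from some $C_0(a)\in\Amc_{k-1}$ with $C_0\in M$ and $C_0\sqsubseteq\exists R.C_{\mathrm{seed}}\in\Tmc_2$, so $a_\varepsilon$ has an $R$-successor carrying seed $C_{\mathrm{seed}}$ in $\Imc_{\Tmc_2,M}$. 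Since the type of $a_\varepsilon$ contains $M$, which in turn contains the stage-$(k{-}1)$ type of $a$, monotonicity of the chase together with a nested induction over the anonymous subtree (using that the $\forall$-pushed concepts reaching an anonymous child are determined by its parent's type) show that this successor acquires at least the stage-$(k{-}1)$ type of $d$, in particular $C$. Hence $a_\varepsilon\in A^{\Imc_{\Tmc_2,M}}$, establishing the sub-claim and completing the argument.
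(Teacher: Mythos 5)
Your proof is correct and follows exactly the chase-based route that the paper itself takes (the paper only gestures at it, deferring the details to the cited reference for $\mathcal{ELI}_\bot$). In particular, you correctly isolate the one non-routine case---an existential CI $\exists R.C\sqsubseteq A$ fired through an \emph{anonymous} successor, where the existential case of a derivation tree is unavailable because it demands an ABox edge---and fold it back into the semantic conjunction case via the locality of the Horn-$\mathcal{ALC}$ chase, which is precisely the content the citation is meant to supply.
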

A proof of Lemma~\ref{lem:derivationtrees} is based on the chase
procedure, details can be found in \cite{BienvenuLW-ijcai-13}
for the extension $\mathcal{ELI}_\bot$ of Horn-\ALC.

We are now ready to construct the remaining \TWABA~$\Amf_{2,2}$.  By
Lemma~\ref{lem:derivationtrees} and since $\Amf_{2,1}$ ensures
that $\Amc_{(T,L)}$ is consistent w.r.t.\ $\Tmc_2$, it is enough for
$\Amf_{2,2}$ to verify that, for each node $x \in T$ and each concept
name $A \in L_2(x)$, there is a $\Tmc_2$-derivation tree for $A(x)$ in
$\Amc_{(T,L)}$.

For readability, we use $\Gamma^- := \Gamma_0 \times \mn{CN}(\Tmc_2)$ as
the alphabet instead of $\Gamma$ since transitions of $\Amf_{2,2}$
only depend on the $L_0$- and $L_2$-components of $\Gamma$-labels. Let
$\mn{rol}(\Tmc_2)$ be the set of all roles $R,R^-$ such that the role
name $R$ occurs in $\Tmc_2$.  Set $\Amf_2=(Q,\Gamma^-,\delta,q_0,R)$
with
  $$
  \begin{array}{rcl}
  Q &=& \{
  q_0 \} \uplus \{ q_A \mid A \in \mn{CN}(\Tmc_2) \} 
   \uplus \{ q_{A,R}, q_R \mid A \in \mn{CN}(\Tmc_2), 
     R \in \mn{rol}(\Tmc_2) \}
  \end{array}
  $$
  and $R = \emptyset$ (i.e., exactly the finite runs are
  accepting).
  %
  For all $(\sigma_0,\sigma_2) \in \Gamma^-$, set
  \begin{itemize}

  \item  $\delta(q_0,(\sigma_0,\sigma_2)) = \displaystyle \bigwedge_{A
      \in \sigma_2} (0,q_A) \wedge (\mn{leaf} \vee \bigwedge_{i \in 1..m} (i,q_0))$;

\item $\delta(q_A,(\sigma_0,\sigma_2))=\mn{true}$ whenever $A \in \sigma_1$ or $\top
  \sqsubseteq A \in \Tmc_2$;

\item $\delta(q_A,(\sigma_0,\sigma_2))=
  \begin{array}[t]{@{}l}
    \bigvee_{\Tmc_2 \models A_1 \sqcap \cdots \sqcap A_n \sqsubseteq A}
    ((0,q_{A_1}) \wedge \cdots \wedge (0,q_{A_n})) \vee{}\\
    \bigvee_{\exists R . B \sqsubseteq A \in \Tmc, \ R \in \Sigmaabox}
    (((0,q_{R^-}) \wedge (-1,q_{B})) \vee
    \bigvee_{i \in 1..m} (i,q_{B,R})) \vee{}\\
    \bigvee_{B \sqsubseteq \forall R . A \in \Tmc, \ R \in \Sigmaabox}
    ((0,q_{R}) \wedge (-1,q_{B})) \vee \bigvee_{i \in 1..m} (i,q_{B,{R^-}}))
  \end{array}$ \\[2mm]
  whenever $A \notin \sigma_0$ and $\top \sqsubseteq A \notin \Tmc_2$;

\item $\delta(q_{A,R},(\sigma_0,\sigma_2))= (0,q_A)$ whenever $R \in \sigma_0$;

\item $\delta(q_{A,R},(\sigma_0,\sigma_2))= \mn{false}$ whenever $R \notin \sigma_0$;

\item $\delta(q_{R},(\sigma_0,\sigma_2))= \mn{true}$ whenever $R \in \sigma_0$;

\item $\delta(q_{R},(\sigma_0,\sigma_2))= \mn{false}$ whenever $R \notin \sigma_0$.

  \end{itemize}
  Note that the finiteness of runs ensures that $\Tmc_2$-derivation
  trees are also finite, as required.
%

\subsection{Upper Bound in Theorem~\ref{thm:2exptbox}}

The 2\ExpTime upper bound stated in Theorem~\ref{thm:2exptbox} can be
obtained by a modification of the construction given in
Section~\ref{subsect:autofull}. We now have to build in the
characterization given in Theorem~\ref{thm:first} instead of the one
from Theorem~\ref{thm:second}. There are two differences: first, the
theorem refers to the canonical model $\Imc_{\Tmc_1,\Amc}$ instead of
quantifying over all models \Imc of $(\Tmc_1,\Amc)$; and second, we
need to consider $\Sigma_2$-homomorphic embeddability instead of
con-$\Sigma_2$-homomorphic embeddability. The former difference can be
ignored. In fact, Theorem~\ref{thm:second} remains true if we quantify
over all models \Imc of $(\Tmc_1,\Amc)$, as in
Theorem~\ref{thm:first}, because $\Imc_{\Tmc_1,\Amc}$ is
$\Sigma_2$-homomorphically embeddable into any model of $\Tmc_1$ and
\Amc. The second difference, however, does make a difference. To
understand it more properly, we first give the following adaptation of
Lemma~\ref{lem:homtosim}.
\begin{lemma}
\label{lem:homtosim2}
Let \Amc be a $\Sigmaabox$-ABox and $\Imc_1$ a model of
$(\Tmc_1,\Amc)$. Then $\Imc_{\Tmc_2,\Amc}$ is not $\Sigmaquery$-homomorphically
embeddable into $\Imc_1$ iff there is an $a \in \mn{ind}(\Amc)$ such
that one of the following is true:
  \begin{enumerate}

  \item there is a $\Sigmaquery$-concept name $A$ with
    $a \in A^{\Imc_{\Tmc_2,\Amc}} \setminus A^{\Imc_1}$;

  \item there is an $R_0$-successor $d_0$ of $a$ in
    $\Imc_{\Tmc_2,\Amc}$, for some $\Sigmaquery$-role name $R_0$, such
    that $d_0 \notin \mn{ind}(\Amc)$ and for all $R_0$-successors $d$
    of $a$ in $\Imc_1$, we have that $(\Imc_{\Tmc_2,\Amc},d_0)
    \not\leq_{\Sigmaquery}(\Imc_1,d)$.

  \item there is an element $d$ in the subtree of
      $\Imc_{\Tmc_2,\Amc}$ rooted at $a$ (with possibly $d=a$) and $d$ has an
    $R_0$-successor $d_0$, for some role name $R_0 \notin
    \Sigmaquery$, such that for all elements $e$ of $\Imc_1$, we have
    $(\Imc_{\Tmc_2,\Amc},d_0) \not\leq_{\Sigmaquery}(\Imc_1,e)$.

  \end{enumerate}
\end{lemma}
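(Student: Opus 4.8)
The plan is to prove Lemma~\ref{lem:homtosim2} by reducing it to Lemma~\ref{lem:homtosim} while isolating the one genuinely new phenomenon, namely the $\Sigmaquery$-disconnected anonymous subtrees. The key structural observation is that $\Imc_{\Tmc_2,\Amc}$ consists of the ABox individuals together with tree-shaped anonymous parts, so if we delete every role edge whose role name is not in $\Sigmaquery$, the interpretation falls apart into pairwise element-disjoint $\Sigmaquery$-connected components, between which there are no $\Sigmaquery$-edges. A $\Sigmaquery$-homomorphism $h$ never needs to respect a non-$\Sigmaquery$ edge, so it can be assembled component by component, the only coupling being that any component meeting $\mn{ind}(\Amc)$ must map its individuals to themselves. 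Hence $\Imc_{\Tmc_2,\Amc}$ is $\Sigmaquery$-homomorphically embeddable into $\Imc_1$ iff (a) its maximal $\Sigmaquery$-connected-to-the-ABox part is con-$\Sigmaquery$-homomorphically embeddable and (b) every $\Sigmaquery$-disconnected anonymous subtree is, on its own, $\Sigmaquery$-homomorphically embeddable somewhere in $\Imc_1$. By Lemma~\ref{lem:homtosim}, failure of (a) is exactly the disjunction of Conditions~(1) and~(2), so the work reduces to matching failure of (b) with Condition~(3).

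For the ``if'' direction I would show that each condition blocks any $\Sigmaquery$-homomorphism $h$. Conditions~(1) and~(2) are inherited verbatim from Lemma~\ref{lem:homtosim}: since $h$ fixes ABox individuals, a violated $\Sigmaquery$-concept name at $a$ is immediately fatal, and the restriction of $h$ to the anonymous subtree below a $\Sigmaquery$-successor $d$ of $a$ is a $\Sigmaquery$-simulation contradicting Condition~(2). For the new Condition~(3), the point is that although the edge $(d,d_0)$ carries a non-$\Sigmaquery$ role and is therefore ignored by $h$, the element $d_0$ must still be mapped to \emph{some} $e \in \Delta^{\Imc_1}$; as the component of $d_0$ is tree-shaped, the restriction of $h$ to it is a $\Sigmaquery$-simulation witnessing $(\Imc_{\Tmc_2,\Amc},d_0) \leq_{\Sigmaquery} (\Imc_1,e)$, contradicting Condition~(3). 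This ``$d_0$ still has to go somewhere'' step is precisely why Condition~(3) is required for full rather than con-embeddability.

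For the ``only if'' direction I would prove the contrapositive: assuming none of Conditions~(1)--(3) holds, I build a $\Sigmaquery$-homomorphism. Since (1) and~(2) fail, Lemma~\ref{lem:homtosim} supplies a con-$\Sigmaquery$-homomorphism $h_0$ on the maximal $\Sigmaquery$-connected-to-the-ABox subinterpretation. For every anonymous $d_0$ that is the $\Sigmaquery$-root of a $\Sigmaquery$-disconnected component---reached from its parent $d$ by a non-$\Sigmaquery$ role, with $d$ in the subtree of some $a \in \mn{ind}(\Amc)$---the failure of Condition~(3) yields some $e$ with $(\Imc_{\Tmc_2,\Amc},d_0) \leq_{\Sigmaquery} (\Imc_1,e)$; because that component is tree-shaped, the simulation can be thinned to a partial function, i.e.\ a $\Sigmaquery$-homomorphism of the component into $\Imc_1$, exactly as in the proof of Lemma~\ref{lem:homtosim}. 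Taking the union of $h_0$ with all these componentwise homomorphisms yields a map defined on all of $\Delta^{\Imc_{\Tmc_2,\Amc}}$, and since distinct $\Sigmaquery$-components are element-disjoint and share no $\Sigmaquery$-edge, the union is a well-defined $\Sigmaquery$-homomorphism.

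The main obstacle I anticipate is not a single hard step but the bookkeeping of coverage and recursion. A $\Sigmaquery$-simulation out of $d_0$ constrains only the $\Sigmaquery$-connected component of $d_0$, so the deeper non-$\Sigmaquery$ successors \emph{inside} that subtree are handled not by the same instance but by further instances of Condition~(3); I must therefore check that, because Condition~(3) ranges over \emph{all} elements $d$ at every depth that own a non-$\Sigmaquery$ successor, every nested $\Sigmaquery$-disconnected component is covered and the componentwise union in the ``only if'' direction really exhausts $\Delta^{\Imc_{\Tmc_2,\Amc}}$. The remaining boundary cases are routine: $\Sigmaquery$-edges among ABox individuals are preserved automatically since $h$ fixes $\mn{ind}(\Amc)$ and both $\Imc_{\Tmc_2,\Amc}$ and $\Imc_1$ model $(\cdot,\Amc)$, and the subcase of Condition~(3) with $d_0 \in \mn{ind}(\Amc)$ can be folded into the con-part analysis, so one may assume $d_0 \notin \mn{ind}(\Amc)$ without loss.
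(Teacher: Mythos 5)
Your proof is correct and follows exactly the route the paper intends: the paper itself only remarks that the proof of Lemma~\ref{lem:homtosim2} is ``very similar to that of Lemma~\ref{lem:homtosim}, details are omitted,'' and your argument is precisely that proof extended to full embeddability by decomposing $\Imc_{\Tmc_2,\Amc}$ into $\Sigmaquery$-connected components and matching the $\Sigmaquery$-disconnected anonymous subtrees against Condition~(3). Your explicit treatment of coverage of nested disconnected components and of the boundary cases fills in the details the paper omits, so nothing further is needed.
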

The proof of Lemma~\ref{lem:homtosim2} is very similar to that of
Lemma~\ref{lem:homtosim}, details are omitted.  Note that the
difference between Lemma~\ref{lem:homtosim} and
Lemma~\ref{lem:homtosim2} is the additional Condition~3 in the
latter. This condition needs to be reflected in the definition of
proper $\Gamma$-labeled trees which, in turn, requires a
modification of the alphabet~$\Gamma$.

An important reason for the construction in
Section~\ref{subsect:autofull} to yield an \ExpTime upper bound is
that in the $L_4$-component of $\Gamma$-labels, we only need to store
a single successor set instead of a set of such sets. This is not the
case in the new construction (which only yields 2\ExpTime upper bound)
where we let the $L_4$-component of $\Gamma$-labels range over
$2^{2^{\mn{sub}(\Tmc_2)}}$ instead of over $2^{\mn{sub}(\Tmc_2)}$. We
also add an $L_5$-component to $\Gamma$-labels, which also ranges over
$2^{2^{\mn{sub}(\Tmc_2)}}$. The notion of $i$-properness remains the
same for $i \in \{0,1,2,3\}$. We adapt the notion of 4-properness and
add a notion of 5-properness.

As a preliminary, we need to define the notion of a descendant set.
Let $t \subseteq \mn{CN}(\Tmc_2)$ and define $\Gamma$ to be the
smallest set such that
\begin{itemize}

\item $t \in \Gamma$;

\item if $t' \in \Gamma$, $A \in t'$, and $A' \sqsubseteq \exists R . A
  \in \Tmc_2$, then $\{ A,B_1,\dots,B_n\} \in \Gamma$ where
  $B_1,\dots,B_n$ is the set of all concept names such that, for some
  $B \in t'$, we have $B \sqsubseteq \forall R . B_i \in \Tmc_2$.

\end{itemize}
Note that, in the above definition, $R$ need not be from $\Sigmaquery$
(nor from its complement). A subset $s$ of $\mn{CN}(\Tmc_2)$ is a
\emph{descendant set} for $t$ if there is a $t' \in \Gamma$, an $A \in
t'$, and an $A' \sqsubseteq \exists R . A \in \Tmc_2$ with $R \notin
\Sigmaquery$ such that $s$ consists of $A$ and of all concept names
$B$ such that $B' \sqsubseteq \forall R . B \in \Tmc_2$ for some $B'
\in t'$.

A $\Gamma$-labeled tree $(T,L)$ is \emph{4-proper} if it satisfies
the following conditions for all $x_1,x_2 \in T$:
\begin{enumerate}

\item if $L_3(x_1)=1$, then one of the following is true:
  \begin{itemize}
  \item
  there is a $\Sigmaquery$-concept name in $L_2(x_1)
  \setminus L_1(x_1)$;

  \item or $L_4(x_1)$ contains
  a $\Sigma_2$-successor set for~$L_2(x_1)$;

\item $L_5(y)$ contains a $\Sigma_2$-descendant set for~$L_2(x_1)$;

  \end{itemize}

\item there is a model \Imc of $\Tmc_1$ and a $d \in \Delta^\Imc$ such
  that all of the following are true:
  \begin{itemize}

  \item $d \in C^\Imc$ iff $C \in L_1(x_1)$ for all $C \in
    \mn{cl}(\Tmc_1)$;

  \item if $S \in L_4(x_1)$, then
    $(\Imc_{\Tmc_2,S},a_\varepsilon) \not\leq_{\Sigmaquery} (\Imc,d)$;

  \item if $S \in L_5(x_1)$ and $e \in \Delta^\Imc$, then
    $(\Imc_{\Tmc_2,S},a_\varepsilon)
    \not\leq_{\Sigmaquery} (\Imc,e)$;

  \end{itemize}


\item if $x_2$ is a child of $x_1$, $L_0(x_2)$ contains the role name
  $R$, and $L_4(x_1) \ni \{ \exists R . A, \forall R . B_1, \dots,\forall
  R . B_n \}$, then
  %
there is a $\Sigmaquery$-concept name in
  $\mn{cl}_{\Tmc_2}(\{A,B_1,\dots,B_n\}) \setminus L_1(x_2)$ or
$L_4(x_2)$ contains
  a $\Sigmaquery$-successor set for $\mn{cl}_{\Tmc_2}(\{A,B_1,\dots,B_n\})$;

\item if $x_2$ is a child of $x_1$, $L_0(x_2)$ contains the role
  $R^-$, and $L_4(x_2) \ni \{ \exists R . A, \forall R . B_1,
  \dots,\forall R . B_n \}$, then
  %
%
there is a $\Sigmaquery$-concept name in
  $\mn{cl}_{\Tmc_2}(\{A,B_1,\dots,B_n\}) \setminus L_1(x_1)$ or
%
$L_4(x_1)$ contains
  a $\Sigmaquery$-successor set for $\mn{cl}_{\Tmc_2}(\{A,B_1,\dots,B_n\})$.
%


\end{enumerate}
A $\Gamma$-labeled tree $(T,L)$ is \emph{5-proper} if all $x \in T$
agree regarding their $L_5$-label.

Note how the adapted notion of 4-properness and the $L_5$-component of
$\Gamma$-labels implements the additional third condition of
Lemma~\ref{lem:homtosim2}. That condition gives rise to an additional
set of simulations that have to be avoided. The (pointed)
interpretations on the ``source side'' of these simulations are
described using sets of concepts in $L_5$. In the pointed
interpretations $(\Imc_1,e)$ on the ``target side'', we now have to
consider all possible points $e$. For this reason, 5-properness
distributes elements of $L_5$-labels to everywhere else. The
simulations are then avoided via the additional third item in the
second condition of 4-properness. The proof details of the following
lemma are omitted.
\begin{lemma}
\label{lem:gammattrees}
There is an $m$-ary $\Gamma$-labeled tree that is $i$-proper for all
$i \in \{0,\dots,5\}$ iff there is a tree-shaped $\Sigmaabox$-ABox
$\Amc$ of outdegree at most $m$ that is consistent w.r.t.\ $\Tmc_1$
and $\Tmc_2$ and a model $\Imc_1$ of
$(\Tmc_1,\Amc)$ such that the canonical model $\Imc_{\Tmc_2,\Amc}$ of
$(\Tmc_2,\Amc)$ is not $\Sigmaquery$-homomorphically embeddable
into $\Imc_1$.
\end{lemma}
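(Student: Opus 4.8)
The plan is to transcribe the proof of the con-$\Sigmaquery$ analogue in Section~\ref{subsect:autofull}, replacing every appeal to Lemma~\ref{lem:homtosim} by one to Lemma~\ref{lem:homtosim2} and accounting for its additional Condition~3 through the new $L_5$-component together with the adapted notions of 4- and 5-properness.

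For the ``only if'' direction, I would build $(T,L)$ exactly as before: put $T=\mn{ind}(\Amc)$, read $L_1$ off the given model $\Imc_1$, set $L_2(x)=\{A\in\mn{CN}(\Tmc_2)\mid \Amc,\Tmc_2\models A(x)\}$, and let $L_3$ mark the individual $a_0$ supplied by Lemma~\ref{lem:homtosim2}. If $a_0$ witnesses Condition~1 or~2 of that lemma, the round-based procedure for $L_4$ applies verbatim and establishes Conditions~1 (second bullet), 3 and~4 of 4-properness, with $L_5(x)=\emptyset$ everywhere. If $a_0$ witnesses the new Condition~3, then there is a $\Sigmaquery$-disconnected successor $d_0$ whose subtree in $\Imc_{\Tmc_2,\Amc}$ is $\Imc_{\Tmc_2,S}$ for a descendant set $S$ of $L_2(a_0)$, with $(\Imc_{\Tmc_2,S},a_\varepsilon)\not\leq_{\Sigmaquery}(\Imc_1,e)$ for \emph{every} $e$. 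I would record this $S$ in $L_5$ and, invoking 5-properness, copy it to every node; Conditions~1 (third bullet) and~2 (third item) of 4-properness then hold with the single model $\Imc=\Imc_1$ and $d=x$ at each node, since the very same $\Imc_1$ furnishes the ``for all $e$'' failure. Consistency and the remaining conditions are checked as in Section~\ref{subsect:autofull}.

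For the ``if'' direction I would reuse the assembly of $\Imc_1$ as the union of the skeleton $\Imc_0$ (domain $T$, concepts from $L_1$, roles from $L_0$) with one local model $\Imc_x$ per node, now chosen by the adapted Condition~2 so that $\Imc_x$ realizes the type $L_1(x)$ at $x$, refutes every $S\in L_4(x)$ at $x$, and refutes every $S\in L_5$ at every point of $\Imc_x$. Two supporting facts are needed: (i) the original $L_4$-claim, that $(\Imc_{\Tmc_2,S},a_\varepsilon)\not\leq_{\Sigmaquery}(\Imc_1,x)$ for all $x$ and $S\in L_4(x)$, proved by induction on the co-depth of $x$ in $\Amc_{(T,L)}$ using Conditions~2--4; and (ii) a new $L_5$-claim, that $(\Imc_{\Tmc_2,S},a_\varepsilon)\not\leq_{\Sigmaquery}(\Imc_1,e)$ for every $S\in L_5$ and every element $e$ of $\Imc_1$. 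Given these, Condition~1 of 4-properness at the $L_3$-marked node $x_0$ yields one of the three alternatives of Lemma~\ref{lem:homtosim2}, so $\Imc_{\Tmc_2,\Amc}$ is not $\Sigmaquery$-homomorphically embeddable into $\Imc_1$.

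I expect the $L_5$-claim to be the main obstacle. Condition~2 only guarantees that each \emph{local} model $\Imc_x$ refutes the $L_5$-sources at its own points, whereas the glued $\Imc_1$ carries extra $\Sigmaquery$-edges — the $\Sigmaquery$-labelled ABox assertions of $\Imc_0$ — along which a hypothetical $\Sigmaquery$-simulation of a source could leave a single $\Imc_x$. The saving observation is that the gluing identifies $\Imc_0$ with the $\Imc_x$ only at the roots $x\in T$, so any escape must pass through an ABox individual and follow a $\Sigmaquery$-labelled ABox edge to a neighbouring root, while a simulation whose image avoids $T$ altogether is confined to one $\Imc_x$ and is refuted outright by Condition~2. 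I would therefore close the $L_5$-claim by an induction that follows these root-to-root escapes through the finite, tree-shaped $\Amc$, using the globality of $L_5$ granted by 5-properness to reinstate the refutation obligation at each root encountered and the $L_4$-claim to handle the $\Sigmaquery$-connected continuations. Making this interaction precise, and verifying that it does not depend circularly on the $L_4$-induction, is where the real work lies; everything else is a routine adaptation of Section~\ref{subsect:autofull}.
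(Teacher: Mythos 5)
Your high-level plan coincides with what the paper intends: the paper omits the proof of this lemma entirely, but its surrounding discussion prescribes exactly the division of labour you adopt ($L_4$ for Conditions~1 and~2 of Lemma~\ref{lem:homtosim2}, the globally distributed $L_5$ together with the third item of Condition~2 of 4-properness for the new Condition~3). Your ``only if'' direction is fine. The difficulty is that you have not proved the one genuinely new step, the $L_5$-claim, and the route you sketch for it does not close. A first, repairable point: for the decomposition into ``root-to-root escapes'' to work you must choose each local model $\Imc_x$ tree-shaped, by unravelling the model supplied by Condition~2 at $d$ (this preserves the $L_1$-type and all non-simulation statements, since simulations compose with the unravelling homomorphism). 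Only then is a simulation that starts at an anonymous element of $\Imc_x$ genuinely confined to $\Imc_x$; otherwise it can return to $x$ along a backward edge of $\Imc_x$ and leave, and your case split ``image avoids $T$ / image meets $T$'' does not do the work you assign to it.

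The more serious problem is the inductive step itself. When a $\Sigmaquery$-simulation of $(\Imc_{\Tmc_2,S},a_\varepsilon)$ with $S\in L_5$ reaches an ABox individual $y$ and follows a $\Sigmaquery$-labelled ABox edge $R(y,y')$, the residual obligation at $y'$ is a simulation of a \emph{proper subtree} of $\Imc_{\Tmc_2,S}$, i.e.\ of $\Imc_{\Tmc_2,S'}$ for a successor set $S'$ of an internal type of the $L_5$-source. Nothing in the properness conditions records $S'$: 5-properness only redistributes $S$ itself, Condition~2 only refutes the \emph{root} of the source at points of the local models, and Conditions~3 and~4 of 4-properness propagate only the sets actually stored in $L_4$, which were populated from a different failing simulation and need not contain $S'$. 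So your appeal to ``the $L_4$-claim to handle the $\Sigmaquery$-connected continuations'' has nothing to attach to, and the induction you describe cannot be set up from the stated conditions alone. (One can check the gap is not vacuous: with tree-shaped local models, the part of the source mapped to $\mn{ind}(\Amc)$ has bounded depth because directed role-assertion paths in the tree-shaped ABox are simple, so some deepest such element hands its entire residual subtree to a single local model, and that subtree's type is exactly the uncontrolled $S'$.) Closing the argument requires either an extra propagation condition for $L_5$-induced successor obligations along ABox edges, in the style of Conditions~3 and~4, or a proof that the ``for all $e\in\Delta^{\Imc}$'' quantification in Condition~2 already subsumes these residual obligations; your proposal contains neither, and you flag this yourself as ``where the real work lies''. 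As it stands the ``if'' direction is not established.
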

It is now straightforward to adapt the automaton construction to the
new version of 4-properness, and to add an automaton for 5-properness.
The NTA for 4-properness will now have double exponentially many
states because $L_4$- and $L_5$-components are sets of sets of
concepts instead of sets of concepts. In fact, we could dispense NTAs
altogether and use an \TWABA that has exponentially many
states. Overall, we obtain a 2\ExpTime upper bound.

\subsection{2\ExpTime Lower Bound}

We reduce the word problem of
exponentially space bounded alternating Turing machines (ATMs), see
\cite{ChandraKS81}.  An \emph{Alternating Turing Machine (ATM)}
is of the form $M = (Q,\Sigma,\Gamma,q_0,\Delta)$. The set
of \emph{states} $Q = Q_\exists \uplus Q_\forall \uplus \{q_a\} \uplus
\{q_r\}$ consists of \emph{existential states} in $Q_\exists$,
\emph{universal states} in $Q_\forall$, an \emph{accepting state}
$q_a$, and a \emph{rejecting state} $q_r$; $\Sigma$ is the \emph{input
  alphabet} and $\Gamma$ the \emph{work alphabet} containing a
\emph{blank symbol} $\square$ and satisfying $\Sigma \subseteq
\Gamma$; $q_0 \in Q_\exists \cup Q_\forall$ is the \emph{starting}
state; and the \emph{transition relation} $\Delta$ is of the form
$$
  \Delta \; \subseteq \; Q \times \Gamma \times Q \times \Gamma
  \times \{ L,R \}.
$$
We write $\Delta(q,\sigma)$ to denote $\{ (q',\sigma',M) \mid
(q,\sigma,q',\sigma',M) \in \Delta \}$ and assume w.l.o.g.\ that every
set $\Delta(q,\sigma)$ contains exactly two elements when $q$ is
universal, and that the state $q_0$ is existential and cannot be
reached by any transition.

A \emph{configuration} of an ATM is a word $wqw'$ with $w,w' \in
\Gamma^*$ and $q \in Q$. The intended meaning is that the one-side
infinite tape contains the word $ww'$ with only blanks behind it, the
machine is in state $q$, and the head is on the symbol just after
$w$. The \emph{successor configurations} of a configuration $wqw'$ are
defined in the usual way in terms of the transition
relation~$\Delta$. A \emph{halting configuration} (resp.\
\emph{accepting configuration}) is of the form $wqw'$ with $q \in \{
q_a, q_r \}$ (resp.\ $q=q_a$).

A \emph{computation tree} of an ATM $M$ on input $w$ is a tree whose
nodes are labeled with configurations of $M$ on $w$, such that the
descendants of any non-leaf labeled by a universal (resp.\
existential) configuration include all (resp. one) of the successors
of that configuration. A computation tree is \emph{accepting} if the
root is labeled with the \emph{initial configuration} $q_0w$ for $w$
and all leaves with accepting configurations. An ATM $M$ accepts
input $w$ if there is a computation tree of $M$ on $w$.


\smallskip

There is an exponentially space bounded ATM $M$ whose word
problem is {\sc 2Exp\-Time}-hard and we may assume that the length of
every computation path of $M$ on $w \in \Sigma^n$ is bounded
by $2^{2^{n}}$, and all the configurations $wqw'$ in such computation
paths satisfy $|ww'| \leq 2^{n}$, see \cite{ChandraKS81}. 
We may also assume w.l.o.g.\ that $M$ makes at least one step on every
input, and that it never reaches the last tape cell (which is both not
essential for the reduction, but simplifies it).

\medskip Let $w$ be an input to $M$.  We aim to construct Horn-\ALC
TBoxes $\Tmc_1$ and $\Tmc_2$ and a signature $\Sigma$ such that
the following are equivalent:
\begin{enumerate}

\item there is a tree-shaped $\Sigma$-ABox \Amc\xspace
such that
  \begin{enumerate}
  \item \Amc is consistent w.r.t.\ $\Tmc_1$ and $\Tmc_2$ and

  \item $\Imc_{\Tmc_2,\Amc}$ is
    not $\Sigma$-homomorphically embeddable into
    $\Imc_{\Tmc_1,\Amc}$;

  \end{enumerate}

\item $M$ accepts $w$.

\end{enumerate}
Note that we dropped the outdegree condition from
Theorem~\ref{thm:second}. In fact, it is easy to go through the
proofs of that theorem and verify that this condition is not needed;
we have included it because it makes the upper bounds easier.

When dealing with an input $w$ of length $n$, we represent
configurations of $M$ by a sequence of $2^n$ elements linked by the
role name $R$, from now on called \emph{configuration
  sequences}. These sequences are then interconnected to form a
representation of the computation tree of $M$ on $w$. This is
illustrated in Figure~\ref{fig:red3}, which shows three configuration
sequences, enclosed by dashed boxes. The topmost configuration is
universal, and it has two successor configurations. All solid arrows
denote $R$-edges. We will explain later why successor configurations
are separated by two consecutive edges instead of by a single one.
\begin{figure}[tb]
  \centering
    \input{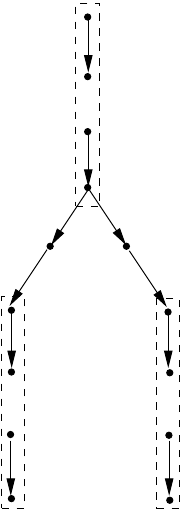_t}
    \caption{Configuration tree (partial)}
    \label{fig:red3}
\end{figure}

The above description is actually an oversimplification. In fact,
every configuration sequence stores two configurations instead of only
one: the current configuration and the previous configuration in the
computation. We will later use the homomorphism condition~(a) above to
ensure that
\begin{itemize}

\item[($*$)] the previous configuration stored in a configuration
sequence is identical to the current configuration stored in its
predecessor configuration sequence.

\end{itemize}
The actual transitions of $M$ are
then represented locally inside configuration sequences.

We next show how to use the TBox $\Tmc_2$ to verify the existence of
a computation tree of $M$ on input $w$ in the ABox,
assuming ($*$).  The signature $\Sigma$ consists of the following
symbols:
\begin{enumerate}

\item concept names $A_0,\dots,A_{n-1}$ and
  $\overline{A}_0,\dots,\overline{A}_{n-1}$ that serve as bits in the
  binary representation of a number between 0 and $2^{n}-1$,
  identifying the position of tape cells inside configuration
  sequences ($A_0$, $\overline{A}_0$ represent the lowest bit);

\item the concept names $A_\sigma$, $A'_\sigma$, $\overline{A}_\sigma$
  for each $\sigma \in \Gamma$;

\item the concept names $A_{q,\sigma}$, $A'_{q,\sigma}$, $\overline{A}_{q,\sigma}$ for
  each $\sigma \in \Gamma$ and $q \in Q$;





\item concept names $X_L,X_R$ that mark left and right successor
  configurations;

\item the role name $R$.

\end{enumerate}
From the above list, concept names $A_\sigma$ and $A_{q,\sigma}$ are
used to represent the current configuration and $A'_\sigma$ and
$A'_{q,\sigma}$ for the previous configuration. The role of the
concept names $\overline{A}_\sigma$ and $\overline{A}_{q,\sigma}$ will
be explained later.

We start with verifying accepting configurations, in a bottom-up manner:
$$
\begin{array}{@{}r@{\,}c@{\,}l}
  A_0 \sqcap \cdots \sqcap A_{n-1} \sqcap A_\sigma \sqcap A'_{\sigma}
  &\sqsubseteq& V \\[1mm]
  A_i \sqcap \exists R . A_i \sqcap \midsqcup_{j < i} \exists R
  . A_j &\sqsubseteq& \mn{ok}_i  \\[1mm]
  \overline{A}_i \sqcap \exists R . \overline{A}_i \sqcap \midsqcup_{j < i} \exists R
  . A_j &\sqsubseteq& \mn{ok}_i  \\[1mm]
  A_i \sqcap \exists R . \overline{A}_i \sqcap \midsqcap_{j < i} \exists R
  . \overline{A}_j &\sqsubseteq& \mn{ok}_i  \\[1mm]
  \overline{A}_i \sqcap \exists R . A_i \sqcap \midsqcap_{j < i} \exists R
  . \overline{A}_j &\sqsubseteq& \mn{ok}_i  \\[1mm]
  \mn{ok}_0 \sqcap  \cdots \sqcap \mn{ok}_{n-1} \sqcap \overline{A}_i \sqcap \exists R . V
  \sqcap A_\sigma \sqcap A'_{\sigma} &\sqsubseteq& V \\[1mm]
  \mn{ok}_0 \sqcap  \cdots \sqcap \mn{ok}_{n-1} \sqcap \overline{A}_i \sqcap \exists R . V
  \sqcap A_\sigma \sqcap A'_{q,\sigma'} &\sqsubseteq& V_{L,\sigma} \\[1mm]
  \mn{ok}_0 \sqcap  \cdots \sqcap \mn{ok}_{n-1} \sqcap \overline{A}_i \sqcap \exists R . V
  \sqcap A_{q_{\mn{acc}},\sigma} \sqcap A'_{\sigma} &\sqsubseteq& V_{R,q_\mn{acc}} \\[1mm]
  \mn{ok}_0 \sqcap  \cdots \sqcap \mn{ok}_{n-1} \sqcap \overline{A}_i \sqcap \exists R . V_{L,\sigma}
  \sqcap A_{q_{\mn{acc}},\sigma'} \sqcap A'_{\sigma'} &\sqsubseteq& V_{L,q_\mn{acc},\sigma} \\[1mm]
  \mn{ok}_0 \sqcap  \cdots \sqcap \mn{ok}_{n-1} \sqcap \overline{A}_i \sqcap \exists R . V_{R,q_{\mn{acc}}}
  \sqcap A_{\sigma} \sqcap A'_{q,\sigma'} &\sqsubseteq& V_{R,q_\mn{acc},\sigma} \\[1mm]
  \mn{ok}_0 \sqcap  \cdots \sqcap \mn{ok}_{n-1} \sqcap \overline{A}_i \sqcap \exists R . V_{M,q_{\mn{acc}},\sigma}
  \sqcap A_{\sigma'} \sqcap A'_{\sigma'} &\sqsubseteq& V_{M,q_\mn{acc},\sigma} \\[1mm]
  \exists R . A_i \sqcap \exists R . \overline{A}_i &\sqsubseteq& \bot
\end{array}
$$
where $\sigma,\sigma'$ range over $\Gamma$, $q$ over $Q$, and
$i$ over $0..n-1$.  The first line starts the verification at the last
tape cell, ensuring that at least one concept name $A_\alpha$ and one
concept name $A'_\beta$ is true. The following lines implement the
verification of the remaining tape cells of the configuration. Lines
two to five implement decrementation of a binary counter and the
conjunct $\overline{A}_i$ in Lines six to eleven prevents the counter
from wrapping around once it has reached zero. We use several kinds
of verification markers:
\begin{itemize}

\item with $V$, we indicate that we have not yet seen the head of the
  TM;

\item $V_{L,\sigma}$ indicates that the TM made a step to the left to
  reach the current configuration, writing $\sigma$;

\item $V_{R,q}$ indicates that the TM made a step to the right to
  reach the current configuration, switching to state $q$;

\item $V_{M,q,\sigma}$ indicates that the TM moved in direction $M$
to reach the current configuration, switching to state $q$ and writing
$\sigma$.

\end{itemize}
In the remaining reduction, we expect that a marker of the form
$V_{M,q,\sigma}$ has been derived at the first cell of the
configuration.  This makes sure that there is exactly one head in the
current and in the previous configuration, and that the head moved
exactly one step between the previous and the current position. Also
note that the above CIs make sure that the tape content does not
change for cells that were not under the head in the previous
configuration. We exploit that $M$ never moves its head to the
right-most tape cell, simply ignoring this case in the CIs above. Note
that it is not immediately clear that lines two to eleven work as
intended since they can speak about different $R$-successors for
different bits. The last line fixes this problem.

We also ensure that relevant concept names are mutually exclusive:
$$
\begin{array}{rcll}
  A_i \sqcap \overline{A}_i &\sqsubseteq& \bot \\[1mm]
  A_{\sigma_1} \sqcap A_{\sigma_2} &\sqsubseteq& \bot & \text{ if }
  \sigma_1 \neq \sigma_2 \\[1mm]
  A_{\sigma_1} \sqcap A_{q_2,\sigma_2} &\sqsubseteq& \bot \\[1mm]
  A_{q_1,\sigma_1} \sqcap A_{q_2,\sigma_2} &\sqsubseteq& \bot & \text{ if }
  (q_1,\sigma_1) \neq (q_2,\sigma_2)
\end{array}
$$
where the $i$ ranges over $0..n-1$, $\sigma_1,\sigma_2$ over $\Gamma$,
and $q_1,q_2$ over~$Q$. We also add the same concept inclusions for
the primed versions of these concept names.  The next step is to
verify non-halting configurations:
$$
\begin{array}{@{}r@{\,}c@{\,}l}
  \exists R. \exists R . (X_L \sqcap \overline{A}_0 \sqcap \cdots \sqcap
  \overline{A}_{n-1} \sqcap (V_{M,q,\sigma} \sqcup V'_{M,q,\sigma})) &\sqsubseteq& L\mn{ok} \\[1mm]
  \exists R . \exists R . (X_R  \sqcap \overline{A}_0 \sqcap \cdots \sqcap
  \overline{A}_{n-1}\sqcap (V_{M,q,\sigma} \sqcup V'_{M,q,\sigma})) &\sqsubseteq& R\mn{ok} \\[1mm]
  A_0 \sqcap \cdots \sqcap A_{n-1} \sqcap A_\sigma \sqcap A'_{\sigma}
  \sqcap L\mn{ok} \sqcap R\mn{ok}   &\sqsubseteq& V' \\[1mm]
  \mn{ok}_0 \sqcap  \cdots \sqcap \mn{ok}_{n-1} \sqcap \overline{A}_i \sqcap \exists R . V'
  \sqcap A_\sigma \sqcap A'_{\sigma} &\sqsubseteq& V' \\[1mm]
  \mn{ok}_0 \sqcap  \cdots \sqcap \mn{ok}_{n-1} \sqcap \overline{A}_i \sqcap \exists R . V'
  \sqcap A_\sigma \sqcap A'_{q,\sigma'} &\sqsubseteq& V'_{L,\sigma} \\[1mm]
  \mn{ok}_0 \sqcap  \cdots \sqcap \mn{ok}_{n-1} \sqcap \overline{A}_i \sqcap \exists R . V_{R,q}
  \sqcap A_{\sigma} \sqcap A'_{q',\sigma'} &\sqsubseteq& V'_{R,q,\sigma} \\[1mm]
  \mn{ok}_0 \sqcap  \cdots \sqcap \mn{ok}_{n-1} \sqcap \overline{A}_i \sqcap \exists R . V'_{M,q,\sigma}
  \sqcap A_{\sigma'} \sqcap A'_{\sigma'} &\sqsubseteq& V'_{M,q,\sigma}
\end{array}
$$
where $\sigma,\sigma',\sigma''$ range over $\Gamma$, $q$ and $q'$ over
$Q$, and $i$ over $0..n-1$. We switch to different verification
markers $V'$, $V'_{L,\sigma}$, $V'_{R,q}$, $V'_{M,q,\sigma}$ to
distinguish halting from non-halting configurations. Note that the
first verification step is different for the latter: we expect to see
one successor marked $X_L$ and one marked $X_R$, both the first cell of an
already verified (halting or non-halting) configuration. For easier
construction, we require two successors also for existential
configurations; they can simply be identical.  The above inclusions do
not yet deal with cells where the head is currently located. We need
some prerequisites because when verifying these cells, we want to
(locally) verify the transition relation. For this purpose, we carry
the transitions implemented locally at a configuration up to its
predecessor configuration:
$$
\begin{array}{r@{~}c@{~}l}
  \exists R . \exists R . (X_M \sqcap \overline{A}_0 \sqcap \cdots \sqcap
  \overline{A}_{n-1} \sqcap  V_{q,\sigma,M'}) & \sqsubseteq&
  S^M_{q,\sigma,M'} \\[1mm]
  \exists R . \exists R . (X_M \sqcap \overline{A}_0 \sqcap \cdots \sqcap
  \overline{A}_{n-1} \sqcap  V'_{q,\sigma,M'}) & \sqsubseteq&
  S^M_{q,\sigma,M'} \\[1mm]
  \exists R . (A_{\sigma} \sqcap S^M_{q,\sigma',M}) &\sqsubseteq&
  S^M_{q,\sigma',M} \\[1mm]
\end{array}
$$
where $q$ ranges over $Q$, $\sigma$ and $\sigma'$ over $\Gamma$, $M$
over $\{L,R\}$, and $i$ over $0..n-1$.  Note that markers are
propagated up exactly to the head position. One issue with the above
is that additional $S_{q\sigma,M}$-markers could be propagated up not
from the successors that we have verified, but from surplus
(unverified) successors. To prevent such undesired markers, we put
$$
S^M_{q_1,\sigma_1,M_1} \sqcap S^M_{q_2,\sigma_2,M_2} \sqsubseteq \bot
$$
for all $M \in \{L,R\}$ and all distinct
$(q_1,\sigma_1,M_1),(q_2,\sigma_2,M_2) \in Q \times \Gamma \times
\{L,R\}$.  We can now implement the verification of cells under the
head in non-halting configurations. Put
$$
\begin{array}{rcl}
  \mn{ok}_0 \sqcap  \cdots \sqcap \mn{ok}_{n-1} \sqcap \overline{A}_i \sqcap \exists R . V'
  \sqcap
  A_{q_1,\sigma_1} \sqcap A'_{\sigma_1} \sqcap S^L_{q_2,\sigma_2,M_2}
  \sqcap S^R_{q_3,\sigma_3,M_3}&\sqsubseteq& V'_{R,q_1} \\[2mm]
  \mn{ok}_0 \sqcap  \cdots \sqcap \mn{ok}_{n-1} \sqcap \overline{A}_i \sqcap \exists R . V'_{L,\sigma}
  \sqcap
  A_{q_1,\sigma_1} \sqcap A'_{\sigma_1} \sqcap S^L_{q_2,\sigma_2,M_2}
  \sqcap S^R_{q_3,\sigma_3,M_3}&\sqsubseteq& V'_{L,q_1,\sigma}
\end{array}
$$
for all $(q_1,\sigma_1) \in Q \times \Gamma$ with $q_1$ a universal
state and
$\Delta(q_1,\sigma_1)=\{(q_2,\sigma_2,M_2),(q_3,\sigma_3,M_3)\}$, $i$
from $0..n-1$, and $\sigma$ from $\Gamma$; moreover, put
$$
\begin{array}{rcl}
   \mn{ok}_0 \sqcap  \cdots \sqcap \mn{ok}_{n-1} \sqcap \overline{A}_i \sqcap \exists R . V'
   \sqcap
   A_{q_1,\sigma_1} \sqcap A'_{\sigma_1} \sqcap S^L_{q_2,\sigma_2,M_2}
   \sqcap S^R_{q_2,\sigma_2,M_2}&\sqsubseteq& V'_{R,q_1} \\[2mm]
   \mn{ok}_0 \sqcap  \cdots \sqcap \mn{ok}_{n-1} \sqcap \overline{A}_i \sqcap \exists R . V'_{L,\sigma}
   \sqcap
   A_{q_1,\sigma_1} \sqcap A'_{\sigma_1} \sqcap S^L_{q_2,\sigma_2,M_2}
   \sqcap S^R_{q_2,\sigma_2,M_2}&\sqsubseteq& V'_{L,q,\sigma}
\end{array}
$$
for all $(q_1,\sigma_1) \in Q \times \Gamma$ with $q_1$ an existential
state, for all $(q_2,\sigma_2,M_2) \in \Delta(q_1,\sigma_1)$, all $i$
from $0..n-1$, and all $\sigma$ from~$\Gamma$.  It remains to verify
the initial configuration. Let $w=\sigma_0 \cdots \sigma_{n-1}$, let
$(C=i)$ be the conjunction over the concept names $A_i$,
$\overline{A}_i$ that expresses $i$ in binary for $0 \leq i < n$, and
let $(C \geq n)$ be the Boolean concept over the concept names $A_i$,
$\overline{A}_i$ which expresses that the counter value is at least
$n$. Then put
$$
\begin{array}{r@{\,}c@{\,}l}
  A_0 \sqcap \cdots \sqcap A_{n-1} \sqcap A_\Box \sqcap A'_\sigma
  \sqcap L\mn{ok} \sqcap R\mn{ok}   &\sqsubseteq& V^I \\[1mm]
\mn{ok}_0 \sqcap  \cdots \sqcap \mn{ok}_{n-1}
\sqcap (C \geq n) \sqcap \exists R . V^I
  \sqcap A_\Box \sqcap A'_\sigma &\sqsubseteq& V^I \\[1mm]
 \mn{ok}_0 \sqcap  \cdots \sqcap \mn{ok}_{n-1} \sqcap (C=i) \sqcap \exists R . V^I
  \sqcap A_{\sigma_i}\sqcap A'_\sigma &\sqsubseteq& V^I \\[1mm]
 \mn{ok}_0 \sqcap  \cdots \sqcap \mn{ok}_{n-1} \sqcap (C=1) \sqcap \exists R . V^I
  \sqcap A_{\sigma_1}\sqcap A'_{q,\sigma'} &\sqsubseteq& V^I_{R,q}
\end{array}
$$
where $i$ ranges over $2..n-1$ and $\sigma,\sigma'$ over $\Gamma$.  This
verifies the initial conditions except for the left-most cell, where the head
must be located (in initial state $q_0$) and where we must verify the
transition, as in all other configurations.  Recall that we assume
$q_0$ to be an existential state. We can thus add
$$
\begin{array}{rcl}
   \mn{ok}_0 \sqcap  \cdots \sqcap \mn{ok}_{n-1} \sqcap (C=0) \sqcap \exists R . V^I_{R,q}
   \sqcap
   A_{q_0,\sigma_0} \sqcap A'_\sigma \sqcap S^L_{q,\sigma,M}
   \sqcap S^R_{q,\sigma,M}&\sqsubseteq& I
\end{array}
$$
for all $(q,\sigma,M) \in \Delta(q_0,\sigma_0)$ and $\sigma \in \Gamma$.

At this point, we have finished the verification of the computation
tree, except that we have assumed but not yet established ($*$). To
achieve ($*$), we use both $\Tmc_1$ and $\Tmc_2$. Let
$\alpha_0,\dots,\alpha_{k-1}$ be the elements of $\Gamma \cup (Q
\times \Gamma)$. We use concept names $A^\ell_i$,
$\overline{A}^\ell_i$, $\ell \in \{0,\dots,k-1\}$, to implement $k$
additional counters. This time, we have to count up to $2^n+1$
(because successor configuration sequences are separated by two
edges), so $i$ ranges from $0$ to $m := \lceil \mn{log}(2^n+1)
\rceil$.  We first add to $\Tmc_2$:
$$
\begin{array}{rcl}
  \exists R . I & \sqsubseteq & \exists S .  \midsqcap_{\ell <k}
  \exists R . (A_{\alpha_\ell} \sqcap (C^\ell = 0)) \\[1mm]
  \overline{A}^\ell_i &\sqsubseteq& \exists R . \top \\[1mm]
  A^\ell_i \sqcap \midsqcap_{j < i} A^\ell_j & \sqsubseteq& \forall R
  . \overline{A}^\ell_i \\[1mm]
  \overline{A}^\ell_i \sqcap \midsqcap_{j < i} A^\ell_j & \sqsubseteq& \forall R
  . A^\ell_i \\[1mm]
  A^\ell_i \sqcap \midsqcup_{j < i} \overline{A}^\ell_j & \sqsubseteq& \forall R
  . A^\ell_i \\[1mm]
  \overline{A}^\ell_i \sqcap \midsqcup_{j < i} \overline{A}^\ell_j & \sqsubseteq& \forall R
  . \overline{A}^\ell_i \\[1mm]
  (C^\ell = 2^n) & \sqsubseteq & \overline{A}_{\alpha_\ell}
\end{array}
$$
where $\ell$ ranges over $0..k-1$, $i$ over $0..m-1$, and $(C^\ell = j)$
denotes the conjunction over $A^\ell_i$, $\overline{A}^\ell_i$ which
expresses that the value of the $\ell$-th counter is $j$.  We will
explain shortly why we need to travel one more $R$-step (in the first
line) after seeing~$I$.

The above inclusions generate, after the verification of the
computation tree has ended successfully, a tree in the canonical model
of the input ABox and of $\Tmc_2$ as shown in Figure~\ref{fig:mod}.
\begin{figure}[t!]
  \begin{center}
    \input{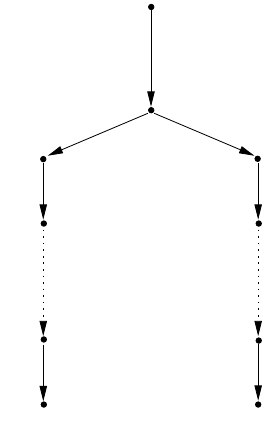_t}
    \caption{Tree gadget.}
    \label{fig:mod}
  \end{center}
\end{figure}
Note that the topmost edge is labeled with the role name $S$, which is
\emph{not} in~$\Sigma$. By Condition~(b) above and since, up to now,
we have always only used non$-\Sigma$-symbols on the right-hand side
of concept inclusions, we must not (homomorphically) find the subtree
rooted at the node with the incoming $S$-edge \emph{anywhere} in the
canonical model of the ABox and $\Tmc_1$. We use this effect which we
to ensure that ($*$) is satisfied \emph{everywhere}. Note that, the
paths in Figure~\ref{fig:mod} have length $2^n+1$ and that we do not
display the labeling with the concept names $A^\ell_i$,
$\overline{A}^\ell_i$.  These concept names are not in $\Sigma$ anyway
and only serve the purpose of achieving the intended path
length. Intuitively, every path in the tree represents one possible
\emph{copying defect}. The concept names of the form
$\overline{A}_\alpha$ need not occur in the input ABox and
stand for the disjunction over all $\overline{A}_\beta$ with $\beta
\neq \alpha$. They need to be in $\Sigma$, though, because we want
them to be taken into account in $\Sigma$-homomorphisms.

\smallskip

We
next extend $\Tmc_1$ as follows:
$$
\begin{array}{rcl}
  A_{\alpha} &\sqsubseteq& \overline{A}_\beta \\[1mm]
  \exists R . A_{\alpha_i} & \sqsubseteq& \midsqcap_{\ell \in \{0,\dots,k-1\}
    \setminus \{i\}} \exists R . (A_{\alpha_\ell} \sqcap
(C^\ell=0)) \\[1mm]
  \overline{A}^\ell_i &\sqsubseteq& \exists R . \top \\[1mm]
  A^\ell_i \sqcap \midsqcap_{j < i} A^\ell_j & \sqsubseteq& \forall R
  . \overline{A}^\ell_i \\[1mm]
  \overline{A}^\ell_i \sqcap \midsqcap_{j < i} A^\ell_j & \sqsubseteq& \forall R
  . A^\ell_i \\[1mm]
  A^\ell_i \sqcap \midsqcup_{j < i} \overline{A}^\ell_j & \sqsubseteq& \forall R
  . A^\ell_i \\[1mm]
  \overline{A}^\ell_i \sqcap \midsqcup_{j < i} \overline{A}^\ell_j & \sqsubseteq& \forall R
  . \overline{A}^\ell_i \\[1mm]
  (C^\ell = 2^n) & \sqsubseteq & \overline{A}_{\alpha_\ell}
\end{array}
$$
where $\ell$ ranges over $0..k-1$, $i$ over $0..m-1$, and
$\alpha,\beta$ over distinct elements of $\Gamma \cup (Q\times
\Gamma)$. Note that it is not important to use the same counter
concepts $A^\ell_i$, $\overline{A}^\ell_i$ in $\Tmc_1$ and $\Tmc_2$:
since they are not in $\Sigma$, one could as well use different ones.
Also note that the intended behaviour of the concept names
$\overline{A}_\alpha$ is implemented in the first line.

\smallskip

The idea for achieving ($*$) is as follows: the tree shown in
Figure~\ref{fig:mod} contains all possible copying defects, that is,
all paths of length $2^n+1$ such that $A_{\alpha_i}$ is true at the
beginning, but some $A_{\alpha_j}$ with $j \neq i$ is true at the end.
At each point of the computation tree where some $A_{\alpha_i}$ is
true at an $R$-predecessor, the above inclusions in $\Tmc_1$ generate
a tree in the canonical model of the ABox and of $\Tmc_1$ which is
similar to that in Figure~\ref{fig:mod}, except that the initial
$S$-edge and the path representing an $A_{\alpha_i}$-defect are
missing. Consequently, if $A_{\alpha_i}$ is not properly copied to
$A'_{\alpha_i}$ at all nodes that are $2^n+1$ $R$-steps away, then we
homomorphically find the tree from Figure~\ref{fig:mod} in the
canonical model of the ABox and of $\Tmc_1$. Consequently, not finding
the tree anywhere in that model means that all copying is done
correctly.

We need to avoid that the inclusions in $\Tmc_1$ enable a homomorphism
from the tree in Figure~\ref{fig:mod} due to an ABox where some node
has two $R$-successors labeled with different concepts $A_\alpha$,
$A_\beta$:
$$
  \exists R. A_\alpha \sqcap \exists R . A_\beta \sqsubseteq \bot.
$$
This explains why we need to separate successor configurations by two
$R$-steps. In fact, the mid point needs not make true any of the
concept names $A_\alpha$ and thus we are not forced to violate the
above constraint when branching at the end of configuration
sequences. Also note that copying the content of the first cell of the
initial configuration requires traveling one more $R$-step after
seeing $I$, as implemented above.
\begin{lemma}
\label{lem:2expcorr}
  The following conditions are equivalent:
\begin{enumerate}

\item there is a tree-shaped $\Sigma$-ABox \Amc\xspace such that
  \begin{enumerate}
  \item \Amc is consistent w.r.t.\ $\Tmc_1$ and $\Tmc_2$ and

  \item $\Imc_{\Tmc_2,\Amc}$ is not
    $\Sigma$-homomorphically embeddable into
    $\Imc_{\Tmc_1,\Amc}$;

  \end{enumerate}

\item $M$ accepts $w$.

\end{enumerate}
\end{lemma}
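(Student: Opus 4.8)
The plan is to establish the two directions separately, in both cases resting on the structural observation that the \emph{only} non-$\Sigma$ symbol that occurs on the right-hand side of an existential inclusion capable of creating fresh $\Sigma$-labelled structure is the role $S$ in $\exists R . I \sqsubseteq \exists S . (\cdots)$. Consequently, the tree gadget of Figure~\ref{fig:mod} hanging below the $S$-edge is the unique possible obstruction to $\Sigma$-homomorphic embeddability, and everything else in $\Imc_{\Tmc_2,\Amc}$ can always be mapped into $\Imc_{\Tmc_1,\Amc}$ by the identity on $\mn{ind}(\Amc)$: on the ABox part $\Tmc_2$ derives only non-$\Sigma$ markers, while $\Tmc_1$ additionally forces $\overline{A}_\beta$ through $A_\alpha \sqsubseteq \overline{A}_\beta$, so the $\Sigma$-labels of ABox individuals in $\Imc_{\Tmc_2,\Amc}$ are among those in $\Imc_{\Tmc_1,\Amc}$.

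\textbf{From $(2)$ to $(1)$.} Given an accepting computation tree of $M$ on $w$ (finite, since path lengths are bounded by $2^{2^n}$), I would build \Amc by laying out one configuration sequence of $2^n$ $R$-linked individuals per node of the tree, carrying the position counter $A_i,\overline{A}_i$, the current symbols $A_\sigma,A_{q,\sigma}$ and the previous symbols $A'_\sigma,A'_{q,\sigma}$, marking the two successor sequences by $X_L,X_R$ and separating them by the two prescribed $R$-edges, and copying symbols so that ($*$) holds \emph{exactly}. Consistency with $\Tmc_2$ holds because the mutual-exclusion inclusions are respected and no counter wraps, so $\Imc_{\Tmc_2,\Amc}$ has no $\bot$-clash; consistency with $\Tmc_1$ is witnessed by closing \Amc under the few generating inclusions of $\Tmc_1$. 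Since the encoded run is accepting and ($*$) holds, the bottom-up marker propagation derives $V,V',V_{M,q,\sigma},\dots$ up to the marker $I$ at the initial individual, so $\exists R . I$ fires and $\Imc_{\Tmc_2,\Amc}$ contains the full gadget of Figure~\ref{fig:mod}. It then remains to show that this gadget is not $\Sigma$-homomorphically embeddable into $\Imc_{\Tmc_1,\Amc}$. As $S \notin \Sigma$, a hypothetical embedding could place the gadget root at any element $e$ and would have to match all $k$ defect branches to $R$-paths from $e$; the only elements offering $k-1$ of the required counter-paths are the predecessors at which the $\Tmc_1$-inclusion $\exists R . A_{\alpha_i} \sqsubseteq \midsqcap_{\ell \neq i} \exists R . (A_{\alpha_\ell} \sqcap (C^\ell=0))$ has fired, which deliberately omit the branch realising a \emph{correct} copy of $\alpha_i$; because ($*$) makes that copy correct everywhere, the remaining branch cannot be matched along the genuine computation, and the embedding fails. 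Spurious placements are ruled out by the clashes $\exists R . A_i \sqcap \exists R . \overline{A}_i \sqsubseteq \bot$ and $\exists R . A_\alpha \sqcap \exists R . A_\beta \sqsubseteq \bot$ together with the two-edge separation of configurations.

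\textbf{From $(1)$ to $(2)$.} Conversely, assume such an \Amc exists. By Lemma~\ref{lem:homtosim2} the failure of embeddability is witnessed at some $a \in \mn{ind}(\Amc)$, and by the structural observation above the witness must stem from the gadget; hence $I$ must be derived at the root, for otherwise the gadget is absent and the identity on $\mn{ind}(\Amc)$ would be a $\Sigma$-homomorphism into $\Imc_{\Tmc_1,\Amc}$, contradicting non-embeddability. Soundness of the verification inclusions then guarantees that a derivation of $I$ reads off a computation tree of $M$ on $w$ all of whose leaves are accepting, while running the gadget argument in reverse shows that non-embeddability forces every copying step to be correct, i.e.\ ($*$) holds throughout \Amc. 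Combining the two, \Amc encodes a genuine accepting computation tree, so $M$ accepts $w$.

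\textbf{Main obstacle.} The delicate part is the gadget argument in both directions: one must check that a \emph{single} copying defect anywhere in \Amc already admits an embedding of the \emph{entire} gadget (so that non-embeddability is genuinely equivalent to global correctness of copying), and, conversely, that correct copying admits \emph{no} embedding, not even one exploiting surplus successors, shared midpoints, or matches that cross configuration boundaries. The calibrated path length $2^n+1$, the separation of successor configurations by two $R$-edges, the bookkeeping counters $C^\ell$, and the $\bot$-inclusions are exactly what make this equivalence tight, and verifying that they jointly exclude all unintended matches is where the bulk of the work lies.
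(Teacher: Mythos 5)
Your proposal is correct and follows essentially the same route as the paper's own (sketched) proof: both directions hinge on the twin observations that the gadget hanging below the $S$-edge is the only possible obstruction to $\Sigma$-homomorphic embeddability (so, for condition (b) to hold, $I$ must be derived, which forces the ABox to encode a verified accepting computation tree), and that non-embeddability of that gadget is equivalent to global correctness of copying, i.e., to ($*$). The only construction detail you elide in the direction from (2) to (1) is that the root of the ABox must be given one additional incoming $R$-edge so that $\exists R . I$ can actually fire and generate the gadget.
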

\begin{proof}(sketch) For the direction ``$2 \Rightarrow 1$'', assume
  that $M$ accepts $w$. An accepting computation tree of $M$ on $w$
  can be represented as a $\Sigma$-ABox as detailed above alongside
  the construction of the TBoxes $\Tmc_2$ and $\Tmc_1$. The
  representation only uses the role name $R$ and the concept names of
  the form $A_i$, $\overline{A}_i$, $A_\sigma$, $A_{q,\sigma}$,
  $A'_\sigma$, $A'_{q,\sigma}$, $X_L$, and $X_R$, but not the concept
  names of the form $\overline{A}_\sigma$ and
  $\overline{A}_{q,\sigma}$. As explained above, we need to duplicate
  the successor configurations of existential configurations to ensure
  that there is binary branching after each configuration. Also, we
  need to add one additional incoming $R$-edge to the root of the tree
  as explained above. The resulting ABox \Amc is consistent w.r.t.\
  $\Tmc_1$ and $\Tmc_2$. Moreover, since there are no copying defects,
  there is no homomorphism from $\Imc_{\Tmc_2,\Amc}$ to
  $\Imc_{\Tmc_1,\Amc}$.

  \smallskip

  For the direction ``$1 \Rightarrow 2$'', assume that there is a
  tree-shaped $\Sigma$-ABox \Amc that satisfies Conditions~(a)
  and~(b).  Because of Condition~(b), $I$ must be true somewhere in
  $\Imc_{\Tmc_2,\Amc}$: otherwise, $\Imc_{\Tmc_2,\Amc}$ does not
  contain anonymous elements and the identity is a homomorphism from
  $\Imc_{\Tmc_2,\Amc}$ to $\Imc_{\Tmc_1,\Amc}$, contradicting
  (b). Since $I$ is true somewhere in $\Imc_{\Tmc_2,\Amc}$ and by
  construction of $\Tmc_2$, the ABox must contain the representation
  of a computation tree of $M$ on $w$, except satisfaction of ($*$).
  For the same reason, $\Imc_{\Tmc_2,\Amc}$ must contain a tree as
  shown in Figure~\ref{fig:mod}.  As has already been argued during
  the construction of $\Tmc_2$ and $\Tmc_1$, however, condition ($*$)
  follows from the existence of such a tree in $\Imc_{\Tmc_2,\Amc}$
  together with (b).
\end{proof}
We remark that the above reduction also yields 2\ExpTime hardness for
CQ entailment in $\mathcal{ELI}$. In fact, universal restrictions on
the right-hand sides of concept inclusions can easily be simulated
using universal roles and disjunctions on the left-hand sides can be
removed with only a polynomial blowup (since there are always only two
disjuncts). It thus remains to eliminate $\bot$, which only occurs
non-nested on the right-hand side of concept inclusions. With the
exception of the inclusions
$$
S^M_{q_1,\sigma_1,M_1} \sqcap S^M_{q_2,\sigma_2,M_2} \sqsubseteq \bot,
$$
this can be done as follows: include all concept inclusions with
$\bot$ on the right-hand side in $\Tmc_1$ instead of in $\Tmc_2$; then
replace $\bot$ with $D$ and add the following concept inclusions to
$\Tmc_1$:
$$
\begin{array}{rcl}
  D& \sqsubseteq & \exists S .  \midsqcap_{\ell <k}
  \exists R . (A_{\alpha_\ell} \sqcap (C^\ell = 0)) \\[1mm]
  \overline{A}^\ell_i &\sqsubseteq& \exists R . \top \\[1mm]
  A^\ell_i \sqcap \midsqcap_{j < i} A^\ell_j & \sqsubseteq& \forall R
  . \overline{A}^\ell_i \\[1mm]
  \overline{A}^\ell_i \sqcap \midsqcap_{j < i} A^\ell_j & \sqsubseteq& \forall R
  . A^\ell_i \\[1mm]
  A^\ell_i \sqcap \midsqcup_{j < i} \overline{A}^\ell_j & \sqsubseteq& \forall R
  . A^\ell_i \\[1mm]
  \overline{A}^\ell_i \sqcap \midsqcup_{j < i} \overline{A}^\ell_j & \sqsubseteq& \forall R
  . \overline{A}^\ell_i \\[1mm]
  (C^\ell = 2^n) & \sqsubseteq & \overline{A}_{\alpha_\ell}
\end{array}
$$
where $\ell$ ranges over $0..k-1$ and $i$ over $0..m$.  The effect of
these additions is that any ABox which satisfies the left-hand side of
a $\bot$-concept inclusion in the original $\Tmc_2$ cannot satisfy
Condition~(b) from Lemma~\ref{lem:2expcorr} and thus needs not be
considered.

For the inclusions excluded above, a different approach needs to be
taken. Instead of introducing the concept names
$S^M_{q_1,\sigma_1,M_1}$, one would propagate transitions inside the
$V'$-markers. Thus, $S^{L}_{q_1,\sigma_1,M_1}$,
$S^R_{q_2,\sigma_2,M_2}$, and $V'$ would be integrated into a single
marker $V'_{q_1,\sigma_1,M_1,q_2,\sigma_2,M_2}$, and likewise for
$V_{R,q}$. The concept inclusion excluded above can then simply be
dropped.

\end{document}
